\newcommand{\ud}{\mathrm{d}}
\newcommand{\ii}{\mathrm{i}}
\theoremstyle{plain}
\newtheorem{theorem}{Theorem}[section]
\newtheorem{lemma}[theorem]{Lemma}
\newtheorem{corollary}[theorem]{Corollary}
\newtheorem{proposition}[theorem]{Proposition}
\theoremstyle{definition}
\newtheorem{remark}[theorem]{Remark}
\newtheorem*{remark*}{Remark}
\numberwithin{equation}{section}
\begin{document}

\title[Hydrogenoid spectra with central perturbations]
{Hydrogenoid spectra with central perturbations}
\author[M.~Gallone]{Matteo Gallone}
\address[M.~Gallone]{International School for Advanced Studies -- SISSA \\ via Bonomea 265 \\ 34136 Trieste (Italy).}
\email{mgallone@sissa.it}
\author[A.~Michelangeli]{Alessandro Michelangeli}
\address[A.~Michelangeli]{International School for Advanced Studies -- SISSA \\ via Bonomea 265 \\ 34136 Trieste (Italy).}
\email{alemiche@sissa.it}


\begin{abstract}
Through the Kre\u{\i}n-Vi\v{s}ik-Birman extension scheme, unlike the classical analysis based on von Neumann's theory, we reproduce the construction and classification of all self-adjoint realisations of three-dimensional hydrogenoid-like Hamiltonians with singular perturbation supported at the Coulomb centre (the nucleus), as well as of Schr\"{o}dinger operators with Coulomb potentials on the half-line. These two problems are technically equivalent, albeit sometimes treated by their own in the the literature.
Based on such scheme, we then recover the formula to determine the eigenvalues of each self-adjoint extension, as corrections of the non-relativistic hydrogenoid energy levels. We discuss in which respect the Kre\u{\i}n-Vi\v{s}ik-Birman scheme is somewhat more natural in yielding the typical boundary condition of self-adjointness at the centre of the perturbation.
\end{abstract}

\date{\today}

\subjclass[2000]{}
\keywords{Quantum hydrogenoid Hamiltonians. Schr\"odinger-Coulomb on half-line. Self-adjoint extensions. Kre\u{\i}n-Vi\v{s}ik-Birman theory. Whittaker functions. Point interactions}

%

\maketitle

\tableofcontents

\section{Hydrogenoid Hamiltonians with point-like perturbation at the centre: outlook and main results}
\label{sec:intro}

We are concerned in this work, mainly from the perspective of the novelty of the approach to a problem otherwise classical in the literature, with certain realistic types of perturbations of the familiar quantum Hamiltonian for the valence electron of hydrogenoid atoms, namely the operator
\begin{equation}\label{eq:Hhydr}
H_{\mathrm{Hydr}} \; = \; -\frac{\hbar^2}{2m} \Delta - \frac{Z e^2}{|x|}
\end{equation} 
on $L^2(\mathbb{R}^3)$ with domain of self-adjointness $H^2(\mathbb{R}^3)$, where $m$ and $-e$ are, respectively, the electron's mass and charge ($e>0$), $Z$ is the atomic number of the nucleus, $\hbar$ is Planck's constant and $\Delta$ is the three-dimensional Laplacian.

In particular, we are concerned with the deviations from the celebrated spectrum of the hydrogen atom:
\begin{equation}\label{eq:spectrum_hydrogeonid}
 \begin{split}
 \sigma_{\mathrm{ess}}(H_{\mathrm{Hydr}})\;&=\;\sigma_{\mathrm{ac}}(H_{\mathrm{Hydr}})\;=\;[0,+\infty)\,,\quad\sigma_{\mathrm{sc}}(H_{\mathrm{Hydr}})\;=\;\varnothing \\
 \sigma_{\mathrm{point}}(H_{\mathrm{Hydr}})\;&=\;\Big\{-mc^2\frac{\,(Z\alpha_{\mathrm{f}})^2}{2 n^2}\,\Big|\,n\in\mathbb{N} \Big\}
 \end{split}
\end{equation}
where $\alpha_{\mathrm{f}}=\frac{\:e^2}{\hbar c}\approx\frac{1}{137}$ is the fine structure constant and $c$ is the speed of light.

Intimately related to this problem, we are concerned with the problem of the self-adjoint realisations of the `radial' differential operator
\begin{equation}
 -\frac{\ud^2}{\ud r^2}+\frac{\nu}{r}\,,\qquad \nu\in\mathbb{R}
\end{equation}
on the Hilbert space of the half-line, $L^2(\mathbb{R}^+,\ud r)$, and on the classification of all such realisations and the characterisation of their spectra.

In order to explain the scope of our study, approach, and results, let us discuss the following preliminaries.

\subsection{Fine structure and Darwin correction}~

As well known \cite[\S 34]{LandauLifshitz-4}, standard calculations within first-order perturbation theory, made first by Sommerfeld even before the complete definition of quantum mechanics, show that the correction $\delta E_n^{(\mathrm{H})}$ to the $n$-th eigenvalue $E_n^{(\mathrm{H})}:=-\frac{\,(Z\alpha_{\mathrm{f}})^2}{2 n^2}$ of \eqref{eq:spectrum_hydrogeonid} is given by
\begin{equation}\label{eq:1storderPert}
 \frac{\,\delta E_n^{(\mathrm{H})}}{E_n^{(\mathrm{H})}}\;=\;-\frac{\,(Z\alpha_{\mathrm{f}})^2}{n}\Big(\frac{1}{j+\frac{1}{2}}-\frac{3}{4n}\Big)\,,
\end{equation}
where $j$ is the quantum number of the total angular momentum, thus $j=\frac{1}{2}$ if $\ell=0$ and $j=\ell\pm\frac{1}{2}$ otherwise, in the standard notation that we shall remind in a moment. (The net effect is therefore a partial removal of the degeneracy of $E_n^{(\mathrm{H})}$ in the spin of the electron and in the angular number $\ell$, a double degeneracy remaining for levels with the same $n$ and $\ell=j\pm\frac{1}{2}$, apart from the maximum possible value $j_{\mathrm{max}}=n-\frac{1}{2}$.)

Let us recall (see, e.g., \cite[Chapter 6]{Thaller-Dirac-1992}) that the first-order perturbative scheme yielding \eqref{eq:1storderPert} corresponds to adding to $H_{\mathrm{Hydr}}$ corrections that arise in the non-relativistic limit from the Dirac operator for the considered atom: $H_{\mathrm{Hydr}}$ is indeed formally recovered as one of the two identical copies of the spinor Hamiltonian obtained from the Dirac operator as $c\to +\infty$, and the eigenvalues of the latter, once the rest energy $mc^2$ is removed, converge to those of $H_{\mathrm{Hydr}}$, with three types of subleading corrections, to the first order in $1/c^{2}$:
\begin{itemize}
 \item the \emph{kinetic energy correction}, interpreted in terms of the replacement of the relativistic with the non-relativistic energy, that classically amounts to the contribution
 \[
  \Big( \sqrt{c^2p^2-m^2c^4}-mc^2\Big)-\frac{\,p^2}{2m}\;=\;-\frac{1}{8m^2c^2}\,p^4+O(c^{-4})\,;
 \]
 \item the \emph{spin-orbit correction}, interpreted in terms of the interaction of the magnetic moment of the electron with the magnetic field generated by the nucleus in the reference frame of the former, including also the effect of the Thomas precession;
 \item the \emph{Darwin term correction}, interpreted as an effective smearing out of the electrostatic interaction between the electron and nucleus due to the Zitterbewegung, the rapid quantum oscillations of the electron.
\end{itemize}
In fact, each such modified eigenvalue $E_n^{(\mathrm{H})}+\delta E_n^{(\mathrm{H})}$ is the first-order term of the expansion in powers of $1/c^2$ of $E_{n,j}-mc^2$, where $E_{n,j}$ is the Dirac operator's eigenvalue given by Sommerfeld's celebrated \emph{fine structure formula}
\begin{equation}
 E_{n,j}\;=\;mc^2\Big(1+\frac{(Z\alpha_{\mathrm{f}})^2}{\:(n-j-\frac{1}{2}+\sqrt{\kappa^2-(Z\alpha_{\mathrm{f}})^2})^2}\Big)^{\!-\frac{1}{2}}\,.
\end{equation}

Let us recall, in particular, the nature of the Darwin correction, which is induced by the interaction between the magnetic moment of the moving electron and the electric field $\mathbf{E}=\frac{1}{e}\nabla V$, where $V$ is the potential energy due to the charge distribution that generates $\mathbf{E}$. 
This effect, to the first order in perturbation theory, produces an additive term to the non-relativistic Hamiltonian, which formally reads \cite[\S 33]{LandauLifshitz-4} 
\begin{equation}\label{eq:HDarw1}
 H_{\mathrm{Darwin}}\;=\;-\frac{\hbar^2}{8 m^2 c^2}\,e\,\mathrm{div}\mathbf{E}\;=\;-\frac{\hbar^2}{8 m^2 c^2}\,\Delta V\,.
\end{equation}
For a hydrogenoid atom $V(x)=-Ze^2/|x|$, whence $\Delta V=-4\pi Z e^2\delta^{(3)}(x)$: the term \eqref{eq:HDarw1} is therefore to be regarded as a \emph{point-like perturbation} `supported' at the centre of the atom, whose nuclear charge creates the field $\mathbf{E}$. 
In this case one gives meaning to \eqref{eq:HDarw1} in the sense of the expectation 
\begin{equation}\label{HDarwexp}
 \langle\psi,H_{\mathrm{Darwin}}\psi\rangle\;=\;\frac{4\pi Z e^2\hbar^2}{8 m^2 c^2}\,|\psi(0)|^2\;=\;E_n^{(\mathrm{H})}\,\frac{(Z\alpha_{\mathrm{f}})^2}{n}\cdot\pi\Big(\frac{n a_0}{Z}\Big)^{\!3}\,|\psi(0)|^2\,,
\end{equation}
where $a_0=\frac{\:\hbar^2}{me^2}$ is the Bohr radius.

Unlike the semi-relativistic kinetic energy and spin-orbit corrections, the Darwin correction only affects the $s$ orbitals ($\ell=0$, $j=\frac{1}{2}$), the wave functions of higher orbitals vanishing at $x=0$. Since the $s$-wave normalised eigenfunction $\psi_n^{(\mathrm{H})}$ corresponding to $E_n^{(\mathrm{H})}$ satisfies $|\psi_n^{(\mathrm{H})}\!(0)|^2=\frac{1}{\pi}(\frac{Z}{n a_0})^3$, \eqref{HDarwexp} implies 
\begin{equation}\label{eq:1storderPert-Darwin}
 \Big(\frac{\,\delta E_n^{(\mathrm{H})}}{E_n^{(\mathrm{H})}}\Big)_{\mathrm{Darwin}}\;=\;\frac{\,(Z\alpha_{\mathrm{f}})^2}{n}\qquad\qquad (\ell=0)\,.
\end{equation}

\subsection{Point-like perturbations supported at the interaction centre}~

The above classical considerations are one of the typical motivations for the rigorous study of a `simplified fine structure', low-energy correction of the ideal (non-relativistic) hydrogenoid Hamiltonian \eqref{eq:Hhydr} that consists of a Darwin-like perturbation only. In particular, one considers an additional interaction that is only present in the $s$-wave sector.

This amounts to constructing self-adjoint Hamiltonians with Coulomb plus point interaction centred at the origin, and it requires to go beyond the formal perturbative arguments that yielded the spectral correction \eqref{eq:1storderPert-Darwin}.

One natural approach, exploited first in the early 1980's works by Zorbas \cite{Zorbas-1980}, by   Albeverio, Gesztesy, H{\o}egh-Krohn, and Streit \cite{AGHKS-1983_Coul_plus_delta}, and by Bulla and Gesztesy \cite{Bulla-Gesztesy-1985}, is to regard such Hamiltonians as self-adjoint extensions of the densely defined, symmetric, semi-bounded from below operator
\begin{equation}\label{eq:Hring1}
  \mathring{H}_{\mathrm{Hydr}}\;=\;\Big(-\frac{\:\hbar^2}{2m}\Delta-\frac{Z e^2}{|x|}\Big)\Big|_{C^\infty_0(\mathbb{R}^3\setminus \{0\})}\,.
\end{equation}

For clarity of presentation we shall set $\nu:=-Ze^2$, in fact allowing $\nu$ to be positive or negative real, and we shall work in units $2m=\hbar=e=1$. We shall then write $H^{(\nu)}$ and $\mathring{H}^{(\nu)}$ for the operator $-\Delta+\frac{\nu}{|x|}$ defined, respectively, on the domain of self-adjointness $H^2(\mathbb{R}^3)$ or on the restriction domain $C^\infty_0(\mathbb{R}^3\setminus \{0\})$.

As was found in \cite{Zorbas-1980,AGHKS-1983_Coul_plus_delta,Bulla-Gesztesy-1985}, the self-adjoint extensions of $\mathring{H}^{(\nu)}$ on $L^2(\mathbb{R}^3)$ at fixed $\nu$ form a one-parameter family $\{H_\alpha^{(\nu)}|\alpha\in(-\infty,+\infty]\}$ of rank-one perturbations, in the resolvent sense, of the Hamiltonian $H^{(\nu)}$. We state this famous result in Theorem \ref{thm:3Dclass} below.

In fact, in this work among other findings we shall \emph{re-obtain} such a result through an alternative path.
Indeed, the above-mentioned works \cite{Zorbas-1980,AGHKS-1983_Coul_plus_delta,Bulla-Gesztesy-1985} the standard self-adjoint extension theory a la von Neumann \cite[Chapt.~8]{Weidmann-LinearOperatosHilbert} was applied. We intend to exploit here an alternative construction and classification based on the Kre{\u\i}n-Vi\v{s}ik-Birman extension scheme \cite{GMO-KVB2017}, owing to certain features of the latter theory that are somewhat more informative and cleaner, in the sense that we are going to specify in due time.

We also recall that the integral kernel of $(H^{(\nu)}-k^2\mathbbm{1})^{-1}$ is explicitly known \cite{Hostler-1967}:
\begin{equation}
 \begin{split}
  (H^{(\nu)}-k^2\mathbbm{1})^{-1}(x,y)\;=&\,\;\Gamma({\textstyle1+\frac{\ii\nu}{2k}})\,\frac{\,\mathscr{A}_{\nu,k}(x,y)}{4\pi|x-y|}\,,\qquad x,y\in\mathbb{R}^3\,,\;\; x\neq y \\
  \mathscr{A}_{\nu,k}(x,y)\;:=&\;\Big(\frac{\ud}{\ud \xi}-\frac{\ud}{\ud \eta}\Big)\,\mathscr{M}_{-\frac{\ii\nu}{2k},\frac{1}{2}}(\xi)\,\mathscr{W}_{-\frac{\ii\nu}{2k},\frac{1}{2}}(\eta)\bigg|_{\substack{\xi=-\ii k z_- \\ \eta=-\ii k z_+}} \\
  z_\pm\;:=\;|x|+|y|&\pm|x-y|\,,\qquad k^2\in\rho(H_\alpha^{(\nu)})\,,\quad \mathfrak{Im}k>0\,,
 \end{split}
\end{equation}
 where $\mathscr{M}_{a,b}$ and $\mathscr{W}_{a,b}$ are the Whittaker functions \cite[Chapt.~13]{Abramowitz-Stegun-1964}.

%
%

\subsection{Angular decomposition}\label{sec:angular_decomp}~

Let us exploit as customary the rotational symmetry of $H^{(\nu)}$ and $\mathring{H}^{(\nu)}$ by passing to polar coordinates $x\equiv(r,\Omega)\in\mathbb{R}^+\!\times\mathbb{S}^2$, $r:=|x|$, for $x\in\mathbb{R}^3$. This induces the standard isomorphism
\begin{equation}\label{eq:ang_decomp}
\begin{split}
 L^2(\mathbb{R}^3,\ud x)\;&\cong\;
 U^{-1}L^2(\mathbb{R}^+,\ud r)\otimes L^2(\mathbb{S}^2,\ud\Omega) \\
 &\cong\;\bigoplus_{\ell=0}^\infty \Big( U^{-1}L^2(\mathbb{R}^+,\ud r)\otimes\mathrm{span}\{Y_{\ell}^{-\ell},\dots,Y_{\ell}^\ell\}\Big)
\end{split}
\end{equation}
where $U:L^2(\mathbb{R}^+,r^2\ud r)\to L^2(\mathbb{R}^+,\ud r)$ is the unitary $(Uf)(r)=rf(r)$, and the $Y^m_\ell$'s are the spherical harmonics on $\mathbb{S}^2$, i.e., the common eigenfunctions of $\boldsymbol{L}^2$ and $\boldsymbol{L}_3$ of eigenvalue $\ell(\ell+1)$ and $m$ respectively,  $\boldsymbol{L}=x \times (- \ii \nabla)$ being the angular momentum operator.

Standard arguments show that $\mathring{H}^{(\nu)}$ (and analogously $H^{(\nu)}$) is reduced by the decomposition \eqref{eq:ang_decomp} as
\begin{equation}\label{eq:ang_decomp_operator}
 \mathring{H}^{(\nu)}\;\cong\;\bigoplus_{\ell=0}^\infty \Big( U^{-1} h^{(\nu)}_\ell U\otimes\mathbbm{1} \Big)
\end{equation}
where each $h^{(\nu)}_\ell$ is the operator on $L^2(\mathbb{R}^+,\ud r)$ defined by
\begin{equation}\label{eq:AngularOperator}
h_{\ell}^{(\nu)} \; := \; -\frac{\ud^2}{\ud r^2} +\frac{\,\ell(\ell+1)}{r^2}+\frac{\nu}{r} \,,\qquad \mathcal{D}\big(h_{\ell}^{(\nu)}\big) \; :=\; C^\infty_0(\mathbb{R}^+)\,.
\end{equation}

\subsection{The radial problem}~

Owing to \eqref{eq:ang_decomp}-\eqref{eq:ang_decomp_operator}, the question of the self-adjoint extensions of $\mathring{H}^{(\nu)}$ on $L^2(\mathbb{R}^3,\ud x)$ is the same as the question of the self-adjoint extensions of each $h_{\ell}^{(\nu)}$ on $L^2(\mathbb{R}^+)$.

Based on the classical analysis of Weyl (see, e.g., \cite[Theorem 15.10(iii)]{schmu_unbdd_sa}), all the block operators $h_{\ell}^{(\nu)}$ with $\ell\in\mathbb{N}$ are essentially self-adjoint, as they are both in the limit point case at infinity \cite[Prop.~15.11]{schmu_unbdd_sa} and in the limit point case at zero \cite[Prop.~15.12(i)]{schmu_unbdd_sa}.

One could also add (but we shall retrieve this conclusion along a different path) that  $h_{0}^{(\nu)}$ is still in the limit point case at infinity, yet limit circle at zero \cite[Prop.~15.12(ii)]{schmu_unbdd_sa}, thus, admitting a one-parameter family of self-adjoint extensions \cite[Theorem 15.10(ii)]{schmu_unbdd_sa}.

The question of the self-adjoint realisations of $\mathring{H}^{(\nu)}$ is then boiled down to the self-adjointness problem for $h_{0}^{(\nu)}$ on $L^2(\mathbb{R}^+,\ud r)$.

This too is a problem studied since long, that we want to re-consider from an alternative, instructive perspective.

The first analysis in fact dates back to Rellich \cite{Rellich-1944} (even though self-adjointness was not the driving notion back then) and is based on Green's function methods to show that $-\frac{\ud^2}{\ud r^2} +\frac{\nu}{r}+\ii\mathbbm{1}$ is inverted by a bounded operator on Hilbert space when the appropriate boundary condition at the origin is selected. Some four decades later Bulla and Gesztesy \cite{Bulla-Gesztesy-1985} (a concise summary of which may be found in \cite[Appendix D]{albeverio-solvable}) produced a `modern' classification based on the special version of von Neumann's extension theory for second order differential operators \cite[Chapt.~8]{Weidmann-LinearOperatosHilbert}, in which the extension parameter that labels each self-adjoint realisation governs a boundary condition at zero analogous to \eqref{eq:bc}. (We already mentioned that the work \cite{Bulla-Gesztesy-1985} came a few years after Zorbas \cite{Zorbas-1980} and Albeverio, Gesztesy, H\o{}egh-Krohn, and Streit \cite{AGHKS-1983_Coul_plus_delta} had classified the self-adjoint realisations of the three-dimensional problem directly, i.e., without explicitly working out the reduction discusses in Sec.~\ref{sec:angular_decomp}.) More recently Gesztesy and Zinchenko \cite{Gesztesy-Zinchenko-2006} extended the scope of \cite{Bulla-Gesztesy-1985} to more singular potentials than $r^{-1}$.

The novelty of the present analysis, as we shall see, besides the explicit qualification of the closure and of the Friedrichs extension of $h_{0}^{(\nu)}$, is the relatively straightforward application of the alternative extension scheme of Kre{\u\i}n, Vi\v{s}ik, and Birman.


\subsection{Main results}~

Let us finally come to our main results. On the one hand, as mentioned already, we reproduce classical facts (namely Theorem \ref{thm:1Dclass} for the radial problem and Theorem \ref{thm:3Dclass} for the singularly-perturbed hydrogenoid Hamiltonians) through the alternative extension scheme of Kre{\u\i}n, Vi\v{s}ik, and Birman. On the other hand, we qualify previously studied objects in an explicit, new form, specifically the Friedrichs realisation of the radial operator (Theorem \ref{thm:1Fried}) and our final formula for the central perturbation of the hydrogenoid spectra (Theorem \ref{thm:EV_corrections}).

Clearly, whereas the derivatives in \eqref{eq:Hring1} and \eqref{eq:AngularOperator} are \emph{classical}, the following formulas contain \emph{weak} derivatives.

%

As a first step, we identify the closure and the Friedrichs realisation of the radial problem.

\begin{theorem}[Closure and Friedrichs extension of $h^{(\nu)}_0$]\label{thm:1Fried}~

\noindent The operator $h_0^{(\nu)}$ is semi-bounded from below with deficiency index one. 
\begin{itemize}
\item[(i)] One has
\begin{equation}
\begin{split}
 \mathcal{D}(\overline{h_0^{(\nu)}})\;&=\;H^2_0(\mathbb{R}^+)\;=\;\overline{C^\infty_0(\mathbb{R}^+)}^{\Vert \, \Vert_{H^2}} \\
 \overline{h_0^{(\nu)}}f\; &= \; - f''+\frac{\nu}{r} f \, .
\end{split}
\end{equation}
\end{itemize}
The Friedrichs extension $h_{0,F}^{(\nu)}$ of $h_0^{(\nu)}$ has 
\begin{itemize}
\item[(ii)] operator domain and action given by
 \begin{equation}\label{eq:DomainHF}
\begin{split}
\mathcal{D}(h_{0,F}^{(\nu)})\;&=\;H^2(\mathbb{R}^+) \cap H^1_0(\mathbb{R}^+)\;=\;\{f \in H^2(\mathbb{R}^+)\,|\, \lim_{r \downarrow 0} f(r) = 0\} 
\\
h_{0,F}^{(\nu)} f \; &= \; - f''+\frac{\nu}{r} f \, ;
\end{split}
\end{equation}
\item[(iii)] quadratic form given by
\begin{equation}\label{eq:DomainHFform}
\begin{split}
\mathcal{D}[h_{0,F}] & = H^1_0(\mathbb{R}^+) \\
h_{0,F}^{(\nu)}[f,h] \; &=\; \int_0^{+\infty} \Big(\overline{f'(r)} h'(r)+ \nu \frac{\overline{f(r)} h(r)}{r} \Big) \, \ud r \, ;
\end{split}
\end{equation}
\item[(iv)] resolvent with integral kernel
\begin{equation}\label{eq:DomainInverse}
\begin{split}
 \Big( h^{(\nu)}_{0,F}&+\frac{\nu^2}{4 \kappa^2} \Big)^{-1}(r,\rho)\;= \\
 &=\;-\frac{\kappa \Gamma(1-\kappa)}{\nu} \begin{cases}
\mathscr{W}_{\kappa,\frac{1}{2}}( -\textstyle{\frac{\nu}{\kappa}} r) \mathscr{M}_{\kappa,\frac{1}{2}}(-\textstyle{\frac{\nu}{\kappa}} \rho) \qquad \text{if }0 < \rho < r\\
\mathscr{M}_{\kappa,\frac{1}{2}}(-\textstyle{\frac{\nu}{\kappa}} r)\mathscr{W}_{\kappa,\frac{1}{2}}( -\textstyle{\frac{\nu}{\kappa}} \rho) \qquad \text{if } 0 < r < \rho\,,
\end{cases}
\end{split}
\end{equation}
where $\kappa \in (-\infty,0)\cup(0,1)$, $\mathrm{sign} \, \kappa = - \mathrm{sign} \, \nu$, and where $\mathscr{W}_{a,b}(r)$ and $\mathscr{M}_{a,b}(r)$ are the Whittaker functions.
\end{itemize}
\end{theorem}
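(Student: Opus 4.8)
The plan is to treat the four assertions in sequence, using the one-dimensional Hardy inequality $\int_0^{+\infty} r^{-2}|f|^2\,\ud r\le 4\int_0^{+\infty}|f'|^2\,\ud r$ on $C^\infty_0(\mathbb{R}^+)$ as the recurring tool that tames the Coulomb singularity at the origin. First I would establish semi-boundedness and the deficiency index. For $f\in C^\infty_0(\mathbb{R}^+)$ one has $\langle f,h_0^{(\nu)}f\rangle=\|f'\|^2+\nu\int_0^{+\infty} r^{-1}|f|^2\,\ud r$, and by Cauchy--Schwarz together with Hardy, $\int_0^{+\infty} r^{-1}|f|^2\,\ud r\le 2\|f'\|\,\|f\|\le \varepsilon\|f'\|^2+\varepsilon^{-1}\|f\|^2$; choosing $\varepsilon<1/|\nu|$ gives $\langle f,h_0^{(\nu)}f\rangle\ge -\varepsilon^{-1}|\nu|\,\|f\|^2$, so $h_0^{(\nu)}$ is semi-bounded. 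For the deficiency index I would count the globally $L^2$ solutions of $-u''+\tfrac{\nu}{r}u=zu$ for $z$ below the bottom of the spectrum: near infinity the two solutions behave like $e^{\pm\sqrt{-z}\,r}$, so only one lies in $L^2$ (limit point), whereas near the origin the indicial exponents are $0$ and $1$, both giving locally square-integrable behaviour (limit circle). Hence exactly one global $L^2$ solution, i.e.\ deficiency index one, recovering along the ODE route the limit-point/limit-circle dichotomy quoted in the text.

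For the closure (i), the idea is that $\tfrac{\nu}{r}$ is an infinitesimally small perturbation of $-\tfrac{\ud^2}{\ud r^2}$: Hardy yields $\|\tfrac{\nu}{r}f\|\le 2|\nu|\,\|f'\|$, and the interpolation $\|f'\|^2=-\langle f,f''\rangle\le \tfrac{\delta}{2}\|f''\|^2+\tfrac{1}{2\delta}\|f\|^2$ turns this into $\|\tfrac{\nu}{r}f\|\le \delta'\|f''\|+C_{\delta'}\|f\|$ with $\delta'$ as small as desired. Consequently the graph norm of $h_0^{(\nu)}$ is equivalent to that of $-\tfrac{\ud^2}{\ud r^2}$, namely the $H^2$-norm, on $C^\infty_0(\mathbb{R}^+)$; the two operators therefore share the same closure domain, which is by definition $H^2_0(\mathbb{R}^+)$, and the action extends by continuity to $-f''+\tfrac{\nu}{r}f$ with weak derivatives.

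For the Friedrichs extension (ii)--(iii) I would first identify the form. The closed form is the closure of $q(f)=\|f'\|^2+\nu\int_0^{+\infty} r^{-1}|f|^2\,\ud r$ in the form norm $q(f)+M\|f\|^2$; by the same Hardy/Cauchy--Schwarz estimate this norm is equivalent, for $M$ large, to the $H^1$-norm, so the form domain is $\overline{C^\infty_0(\mathbb{R}^+)}^{\,H^1}=H^1_0(\mathbb{R}^+)$, giving (iii). For the operator domain I would invoke the Friedrichs/Kre\u{\i}n--Vi\v{s}ik--Birman characterisation $\mathcal{D}(h_{0,F}^{(\nu)})=H^1_0(\mathbb{R}^+)\cap\mathcal{D}\big((h_0^{(\nu)})^*\big)$, where $(h_0^{(\nu)})^*$ is the maximal operator $f\mapsto -f''+\tfrac{\nu}{r}f$ on $\{f\in L^2:\,-f''+\tfrac{\nu}{r}f\in L^2\}$. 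The bootstrap is clean: if $f\in H^1_0$ then $f(0)=0$ and Hardy gives $\tfrac{\nu}{r}f\in L^2$, whence $f''=\tfrac{\nu}{r}f-(h_0^{(\nu)})^*f\in L^2$ and $f\in H^2$; conversely any $f\in H^2$ with $f(0)=0$ lies in both factors. This yields $\mathcal{D}(h_{0,F}^{(\nu)})=H^2(\mathbb{R}^+)\cap H^1_0(\mathbb{R}^+)$, and the Sobolev embedding $H^2\hookrightarrow C^1$ identifies $H^1_0$ with the vanishing condition $\lim_{r\downarrow 0}f(r)=0$.

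Finally, for the resolvent kernel (iv) --- which I expect to be the main obstacle, since it is where all the signs and Gamma-factors must be pinned down --- the plan is a Sturm--Liouville Green's function computation. Setting $z=-\tfrac{\nu^2}{4\kappa^2}$ and substituting $x=-\tfrac{\nu}{\kappa}r$ brings $-u''+\tfrac{\nu}{r}u=zu$ into the Whittaker equation with parameters $(\kappa,\tfrac12)$, the constraint $\mathrm{sign}\,\kappa=-\mathrm{sign}\,\nu$ ensuring $x>0$. The regular solution at the origin is $\mathscr{M}_{\kappa,\frac12}(-\tfrac{\nu}{\kappa}r)\sim -\tfrac{\nu}{\kappa}r$, which satisfies the Friedrichs condition $u(0)=0$ while $u'(0)\ne 0$, consistently distinguishing it from $H^2_0$, and the unique $L^2$ solution at infinity is the decaying $\mathscr{W}_{\kappa,\frac12}(-\tfrac{\nu}{\kappa}r)$. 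Assembling $G(r,\rho)=-W^{-1}u_0(r_<)u_\infty(r_>)$ from the jump relation and inserting the Whittaker Wronskian $\mathscr{M}_{\kappa,\frac12}\mathscr{W}_{\kappa,\frac12}'-\mathscr{M}_{\kappa,\frac12}'\mathscr{W}_{\kappa,\frac12}=-1/\Gamma(1-\kappa)$, rescaled by the Jacobian $-\tfrac{\nu}{\kappa}$ of the substitution, produces exactly the prefactor $-\tfrac{\kappa\Gamma(1-\kappa)}{\nu}$ of \eqref{eq:DomainInverse}. The last verification is that this integral operator maps $L^2$ into $\mathcal{D}(h_{0,F}^{(\nu)})$ and inverts $h_{0,F}^{(\nu)}+\tfrac{\nu^2}{4\kappa^2}$, which follows from the jump relation and the boundary behaviour just recorded.
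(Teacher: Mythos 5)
Your proposal is correct, and it reaches all four statements by a genuinely different route than the paper. The paper is Green's-function-first: it constructs the integral operator $R_G$ with kernel \eqref{eq:Green}, proves it is bounded and self-adjoint (Lemma~\ref{lem:RGbddsa}), shows it inverts \emph{some} self-adjoint extension (Lemma~\ref{eq:RGinvertsExtS}), identifies that extension with $S_F$ through the form-domain criterion (Proposition~\ref{eq:RGisSFinv}, via Lemmas~\ref{lem:DomSFrid} and \ref{lem:ranRGinDrinv}), and only then extracts $\mathcal{D}(\overline{S})=H^2_0(\mathbb{R}^+)$ and $\mathcal{D}(S_F)=H^2(\mathbb{R}^+)\cap H^1_0(\mathbb{R}^+)$ from the Kre\u{\i}n-Vi\v{s}ik-Birman decomposition $\mathcal{D}(S^*)=\mathcal{D}(\overline{S})\dotplus\mathrm{span}\{\Psi_\kappa\}\dotplus\mathrm{span}\{\Phi_\kappa\}$ combined with boundary asymptotics, those of $\mathcal{D}(\overline{S})$ being imported from the literature (Lemma~\ref{lem:Derez}) and those of $\Psi_\kappa=R_G\Phi_\kappa$ computed in Lemma~\ref{lem:RGPhi_atzero}. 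You instead settle (i)--(iii) by purely operator- and form-theoretic means: the closure via a Kato--Rellich graph-norm equivalence with the free Laplacian (Hardy plus the interpolation $\|f'\|^2\leqslant\|f\|\,\|f''\|$), and the Friedrichs operator domain via the Freudenthal identity $\mathcal{D}(S_F)=\mathcal{D}[S_F]\cap\mathcal{D}(S^*)$ followed by a Hardy bootstrap; the ODE enters only in part (iv), where the Green's function is verified against a domain you already know. Your route is more elementary and self-contained for (i)--(iii) -- Hardy's inequality does all the work, no asymptotic description of $\mathcal{D}(\overline{S})$ needs to be cited, and no mapping properties of $R_G$ are needed before the domains are identified -- and your direct Sturm--Liouville construction handles the whole stated range $\kappa\in(-\infty,0)\cup(0,1)$ at once, whereas the paper works with $0<|\kappa|<\frac{1}{2}$ and invokes analytic continuation. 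What the paper's heavier path buys is the toolkit used afterwards: $\Phi_\kappa$, $\Psi_\kappa$, their $r\downarrow 0$ expansions, and $R_G=S_F^{-1}$ are exactly the raw material of the classification in Theorems~\ref{thm:1Dclass} and \ref{thm:3Dclass}, so that investment is amortized, while with your shortcut it would still have to be made later. One detail you should write out in part (iv): since you already know that $h^{(\nu)}_{0,F}+\frac{\nu^2}{4\kappa^2}$ has a bounded, everywhere-defined inverse (its bottom is strictly positive), it suffices to check that $R_G g\in H^2(\mathbb{R}^+)\cap H^1_0(\mathbb{R}^+)$ and $\big(h^{(\nu)}_{0,F}+\frac{\nu^2}{4\kappa^2}\big)R_G g=g$ for $g$ in a dense set such as $C^\infty_0(\mathbb{R}^+)$; for arbitrary $g\in L^2(\mathbb{R}^+)$ the mere well-definedness of the integral requires kernel bounds of the type the paper establishes in Lemma~\ref{lem:RGbddsa}. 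Your sign bookkeeping is consistent: with $W(u_0,u_\infty)=-\lambda/\Gamma(1-\kappa)$ and $\lambda=-\nu/\kappa$, your $-W^{-1}$ is precisely the prefactor $-\kappa\Gamma(1-\kappa)/\nu$ of \eqref{eq:DomainInverse}, in agreement with \eqref{eq:value_of_W}.
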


Next, using the Friedrichs extension as a \emph{reference extension} for the Kre{\u\i}n-Vi\v{s}ik-Birman scheme, we classify all other self-adjoint realisations of the radial problem. The result is classical in the literature \cite{Rellich-1944,Bulla-Gesztesy-1985}, but we find the present derivation more straightforward and natural, especially in yielding the typical boundary condition at the origin that qualify each extension. 

\begin{theorem}[Self-adjoint realisations of $h^{(\nu)}_0$]~
\label{thm:1Dclass}

\begin{itemize}
\item[(i)] The self-adjoint extensions of $h_0^{(\nu)}$ form the family $(h_{0,\alpha}^{(\nu)})_{\alpha \in \mathbb{R} \cup \{\infty\}}$, where $\alpha=\infty$ labels the Friedrichs extension, and 
\begin{equation}\label{eq:dsa}
 \begin{split}
  \mathcal{D}(h^{(\nu)}_{0,\alpha})\;&=\;\left\{
  g\in L^2(\mathbb{R}^+)\left|
  \begin{array}{c}
   -g''+\textstyle{\frac{\nu}{r}g}\in L^2(\mathbb{R}^+) \\
   \textrm{and }\;g_1\;=\;4\pi\alpha\, g_0
  \end{array}\!
  \right.\right\} \\
  h^{(\nu)}_{0,\alpha}\,g\;&=\;-g''+\frac{\nu}{r}\,g\,,
 \end{split}
\end{equation}
$g_0$ and $g_1$ being the existing limits
\begin{equation}\label{eq:g0g1limits-statements}
 \begin{split}
  g_0\;&:=\;\lim_{r\downarrow 0}g(r) \\
  g_1\;&:=\;\lim_{r\downarrow 0}r^{-1}\big(g(r)-g_0(1+\nu r\ln r)\big)\,.
 \end{split}
\end{equation}
\item[(ii)] For given $\kappa \in (-\infty,0)\cup(0,1)$, $\mathrm{sign}\, \kappa = - \mathrm{sign}\, \nu$, one has
\begin{equation}\label{eq:1Dresolvent}
\Big(h^{(\nu)}_{0,\alpha})+ \frac{\nu^2}{4 \kappa^2} \Big)^{-1} \; = \; \Big( h^{(\nu)}_{0,\infty}+\frac{\nu^2}{4 \kappa^2} \Big)^{-1}+ \frac{\Gamma(1-\kappa)^2}{4 \pi} \frac{1}{\alpha-\mathfrak{F}_{\nu,\kappa}} | \Phi_\kappa \rangle \langle \Phi_\kappa |\,,
\end{equation}
where $\Phi_\kappa(r)\;:=\;\mathscr{W}_{\kappa,\frac{1}{2}}(-\frac{\nu}{\kappa} r)$ and 
\begin{equation}
\label{eq:Fnuk}
  \mathfrak{F}_{\nu,\kappa}\;:=\;\frac{\nu}{4\pi}\big(\psi(1-\kappa)+\ln(-{\textstyle\frac{\nu}{\kappa}})+(2\gamma-1)+{\textstyle\frac{1}{2 \kappa}}\big)\,.
\end{equation}
\end{itemize}
\end{theorem}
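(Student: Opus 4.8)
The plan is to apply the abstract Kre\u{\i}n--Vi\v{s}ik--Birman classification of \cite{GMO-KVB2017}, taking the Friedrichs realisation $h_{0,\infty}^{(\nu)} = h_{0,F}^{(\nu)}$ of Theorem \ref{thm:1Fried} as the reference extension and working at the spectral point $\lambda := -\frac{\nu^2}{4\kappa^2}$. First I would observe that for $\kappa \in (-\infty,0)\cup(0,1)$ with $\mathrm{sign}\,\kappa = -\mathrm{sign}\,\nu$ one has $\lambda < \min\sigma(h_{0,F}^{(\nu)})$, so that $h_0^{(\nu)} - \lambda$ is strictly positive and the reference resolvent $(h_{0,F}^{(\nu)} - \lambda)^{-1}$ is the bounded operator with kernel \eqref{eq:DomainInverse}. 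I would then identify the deficiency space $\ker\big((h_0^{(\nu)})^* - \lambda\big)$ as the set of $L^2(\mathbb{R}^+)$ solutions of $-u'' + \frac{\nu}{r}u = \lambda u$: the two independent solutions being $\mathscr{M}_{\kappa,\frac12}(-\frac{\nu}{\kappa}r)$ and $\mathscr{W}_{\kappa,\frac12}(-\frac{\nu}{\kappa}r)$, of which only the latter is square-integrable at infinity (the former growing exponentially), this space is one-dimensional and spanned exactly by $\Phi_\kappa$. This both re-confirms the deficiency index one of Theorem \ref{thm:1Fried} and fixes the single deficiency vector entering all subsequent formulas.

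Next I would invoke the rank-one instance of the KVB correspondence. Since $\dim\ker\big((h_0^{(\nu)})^* - \lambda\big) = 1$, the admissible self-adjoint operators on the deficiency space reduce to multiplication by a real scalar $t$, together with the degenerate choice returning the reference extension; this produces at once the family $(h_{0,\alpha}^{(\nu)})_{\alpha\in\mathbb{R}\cup\{\infty\}}$ with $\alpha = \infty$ the Friedrichs case. The same abstract theorem supplies the Kre\u{\i}n resolvent identity of part (ii): the resolvent of $h_{0,\alpha}^{(\nu)}$ at $\lambda$ equals that of $h_{0,F}^{(\nu)}$ plus a rank-one term along $|\Phi_\kappa\rangle\langle\Phi_\kappa|$ whose scalar prefactor is $(\alpha - \mathfrak{F}_{\nu,\kappa})^{-1}$ up to normalisation. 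To pin down the normalisation $\frac{\Gamma(1-\kappa)^2}{4\pi}$ and the explicit $\mathfrak{F}_{\nu,\kappa}$ of \eqref{eq:Fnuk}, I would feed the explicit Friedrichs kernel \eqref{eq:DomainInverse} and the explicit $\Phi_\kappa$ into the KVB formula, reducing the matter to the small-$r$ expansion of $(h_{0,F}^{(\nu)} - \lambda)^{-1}\Phi_\kappa$; the digamma value $\psi(1-\kappa)$, the term $\ln(-\frac{\nu}{\kappa})$, and the constant $(2\gamma-1)+\frac{1}{2\kappa}$ all emerge from the known logarithmic small-argument behaviour of $\mathscr{W}_{\kappa,\frac12}$.

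For part (i) I would translate the abstract description of $\mathcal{D}(h_{0,\alpha}^{(\nu)})$ into the concrete boundary condition $g_1 = 4\pi\alpha\,g_0$. Here I would use that every $g \in \mathcal{D}\big((h_0^{(\nu)})^*\big)$ decomposes as $g = f + (h_{0,F}^{(\nu)} - \lambda)^{-1}(tv) + v$ with $f \in \mathcal{D}(\overline{h_0^{(\nu)}}) = H_0^2(\mathbb{R}^+)$ and $v \in \mathrm{span}\{\Phi_\kappa\}$, and analyse the near-origin asymptotics of the three summands. Functions in $H^2_0(\mathbb{R}^+)$ vanish together with their first derivative at the origin and hence contribute nothing to either boundary functional; the deficiency part $v$ contributes a nonzero $g_0$ and a definite $g_1$ through the $1 + \nu r\ln r$ leading behaviour of $\Phi_\kappa$, which is precisely the profile subtracted off in the definition \eqref{eq:g0g1limits-statements} so that $g_1$ stays finite; and the regular part $(h_{0,F}^{(\nu)} - \lambda)^{-1}(tv)$, lying in $\mathcal{D}(h_{0,F}^{(\nu)}) = H^2(\mathbb{R}^+)\cap H_0^1(\mathbb{R}^+)$, contributes zero $g_0$ but a nonzero $g_1$ proportional to $t$. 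Collecting the three contributions yields a linear relation of the form $g_1 = (\text{const}\cdot t + \text{const})\,g_0$, and matching its coefficient to $4\pi\alpha$ both establishes the boundary condition and records the correspondence between the intrinsic parameter $\alpha$ and the KVB scalar $t$, consistently with the value $\mathfrak{F}_{\nu,\kappa}$ found in part (ii).

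The main obstacle I anticipate is the bookkeeping of the Whittaker asymptotics near $r = 0$. The index $b = \frac12$ falls in the logarithmic case of $\mathscr{W}_{\kappa,b}$, so its small-argument expansion mixes a regular power series with a $z\ln z$ term; extracting from it the two boundary functionals $g_0$ and $g_1$ of $\Phi_\kappa$, verifying that the $\nu r\ln r$ subtraction in \eqref{eq:g0g1limits-statements} exactly removes the singular part, and carrying all multiplicative constants (including the $\frac{1}{4\pi}$ and $\Gamma(1-\kappa)^2$ factors) through to the precise shape of $\mathfrak{F}_{\nu,\kappa}$ is where the genuine effort lies. Everything else is a direct specialisation of the abstract scheme, but the two parts of the theorem are coupled through exactly these asymptotic constants, so the same expansion must be performed carefully once and then reused.
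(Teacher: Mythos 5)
Your proposal is correct and follows essentially the same route as the paper: the Kre\u{\i}n--Vi\v{s}ik--Birman scheme with the Friedrichs extension as reference at the shifted point $-\frac{\nu^2}{4\kappa^2}$, the deficiency space $\mathrm{span}\{\Phi_\kappa\}$ identified through the Whittaker pair, the three-term decomposition $g=f+t c_0\,S_F^{-1}\Phi_\kappa+c_0\Phi_\kappa$ analysed near $r=0$ to produce the boundary condition $g_1=4\pi\alpha g_0$, and the rank-one Kre\u{\i}n resolvent formula with the parameter conversion fixing $\mathfrak{F}_{\nu,\kappa}$. You also correctly locate the real work in the small-$r$ expansion of $S_F^{-1}\Phi_\kappa$ (the paper's key computational lemma, giving the coefficient $\Gamma(1-\kappa)\|\Phi_\kappa\|_{L^2}^2$) and of $\mathscr{W}_{\kappa,\frac12}$ in the logarithmic case, which is exactly where the paper's effort goes; the only detail you skip is the paper's final analytic-continuation remark extending the formulas from $0<|\kappa|<\tfrac12$ to the full range, which your direct positivity argument largely renders unnecessary.
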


Consistently, when $\nu=0$ the boundary condition \eqref{eq:dsa} for the $\alpha$-extension takes the classical form $g'(0)=4\pi\alpha g(0)$, namely the well-known boundary condition for the generic self-adjoint Laplacian on the half-line \cite{Kostrykin-Schrader-2006,GTV-2012,DM-2015-halfline}.


When the radial analysis is lifted back to the three-dimensional Hilbert space, we re-obtain, through an alternative path, the following classification result already available in the literature (see, e.g., \cite[Theorem I.2.1.2]{albeverio-solvable}).

\begin{theorem}[Self-adjoint realisations of $\mathring{H}^{(\nu)}$]~
\label{thm:3Dclass}

\noindent The self-adjoint extensions of $\mathring{H}^{(\nu)}$ form the family $(H^{(\nu)}_\alpha)_{\alpha\in\mathbb{R}\cup\{\infty\}}$ characterised as follows. 
\begin{itemize}
 \item[(i)] With respect to the canonical decomposition \eqref{eq:ang_decomp} of $L^2(\mathbb{R}^3)$, the extension $H^{(\nu)}_\alpha$ is reduced as
 \begin{equation}\label{eq:ang_decomp_Halpha}
 H^{(\nu)}_\alpha\;\cong\;\bigoplus_{\ell=0}^\infty \Big( U^{-1} h^{(\nu)}_{\ell,\alpha}\, U\otimes\mathbbm{1} \Big)\,,
\end{equation}
where $h^{(\nu)}_{0,\alpha}$ is qualified in Theorem \ref{thm:1Dclass} and $h^{(\nu)}_{\ell,\alpha}$, for $\ell\geqslant 1$, is the closure of $h^{(\nu)}_{\ell}$ introduced in \eqref{eq:AngularOperator}, namely the $L^2(\mathbb{R}^+)$-self-adjoint operator
\begin{equation}\label{eq:domhell1}
 \begin{split}
  \mathcal{D}(h^{(\nu)}_{\ell,\alpha})\;&=\;\{g\in L^2(\mathbb{R}^+)\,|\,-g''+\textstyle{\frac{\,\ell(\ell+1)}{r^2}g+\frac{\nu}{r}g}\in L^2(\mathbb{R}^+)\} \\
  h^{(\nu)}_{\ell,\alpha}\,g\;&=\;-g''+\textstyle{\frac{\,\ell(\ell+1)}{r^2}g+\frac{\nu}{r}g}\,.
 \end{split}
\end{equation}
\item[(ii)] The choice $\alpha=\infty$ identifies the Friedrichs extension of $\mathring{H}^{(\nu)}$, which is precisely the self-adjoint hydrogenoid Hamiltonian
\begin{equation}\label{HnuFriedr}
 H^{(\nu)}\;=\;-\Delta+\frac{\nu}{\,|x|\,}\,,\qquad\mathcal{D}(H^{(\nu)})\;=\;H^2(\mathbb{R}^3)\,.
\end{equation}
 It is the only member of the family $(H^{(\nu)}_\alpha)_{\alpha\in\mathbb{R}\cup\{\infty\}}$ whose domain's functions have separately finite kinetic and finite potential energy, in the sense of energy forms.
 \item[(iii)] For given $\kappa \in (-\infty,0)\cup(0,1)$, $\mathrm{sign}\,\kappa=-\mathrm{sign}\,\nu$, one has
 \begin{equation}\label{eq:HnuResolvent}
  \Big(H^{(\nu)}_\alpha+\frac{\nu^2}{\,4\kappa^2}\,\mathbbm{1}\Big)^{\!-1}\;=\;\Big(H^{(\nu)}+\frac{\nu^2}{\,4\kappa^2}\,\mathbbm{1}\Big)^{\!-1} +\frac{1}{\,\alpha-\mathfrak{F}_{\nu,\kappa}\,}|\mathfrak{g}_{\nu,\kappa}\rangle\langle\mathfrak{g}_{\nu,\kappa}|\,,
 \end{equation}
 where
 \begin{equation}\label{eq:ourgnukappa}
 \mathfrak{g}_{\nu,\kappa}(x)\;:=\;\Gamma({\textstyle 1-\kappa})\,\frac{\,\,\mathscr{W}_{\kappa,\frac{1}{2}}(-\frac{\nu}{\kappa}|x|)}{4\pi|x|}
\end{equation}
and $\mathfrak{F}_{\nu,\kappa}$ is defined in \eqref{eq:Fnuk}.
  \item[(iv)] For given $\kappa \in (-\infty,0)\cup(0,1)$, $\mathrm{sign}\,\kappa=-\mathrm{sign}\,\nu$, one has
  \begin{equation}\label{eq:DomainHalpha}
   \begin{split}
    \mathcal{D}(H^{(\nu)}_\alpha)\;&=\;\Big\{ \psi=\varphi_\kappa+\frac{\varphi_\kappa(0)}{\,\alpha-\mathfrak{F}_{\nu,\kappa}\,}\,\mathfrak{g}_{\nu,\kappa}\Big|\,\varphi_\kappa\in H^2(\mathbb{R}^3)\Big\} \\
    \Big(H^{(\nu)}_\alpha+\frac{\nu^2}{\,4\kappa^2}\,\mathbbm{1}\Big)\psi\;&=\;\Big(H^{(\nu)}+\frac{\nu^2}{\,4\kappa^2}\,\mathbbm{1}\Big)\varphi_\kappa\,,
   \end{split}
  \end{equation}
  the decomposition of each $\psi$ being unique.
\end{itemize}
\end{theorem}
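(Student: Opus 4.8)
\subsection*{Proof proposal}

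The plan is to deduce every claim from the radial analysis of Theorems \ref{thm:1Fried} and \ref{thm:1Dclass} via the angular decomposition \eqref{eq:ang_decomp}--\eqref{eq:ang_decomp_operator}. First I would record that $\mathring{H}^{(\nu)}$ is the orthogonal sum $\bigoplus_{\ell}(U^{-1}h^{(\nu)}_\ell U\otimes\mathbbm{1})$, and recall from Section \ref{sec:angular_decomp} that the blocks with $\ell\geqslant 1$ are essentially self-adjoint while $h^{(\nu)}_0$ has deficiency index one. Hence the deficiency subspace of $\mathring{H}^{(\nu)}$ is entirely concentrated in the $\ell=0$ sector, whose angular factor $\mathrm{span}\{Y_0^0\}$ is one-dimensional; consequently no self-adjoint extension can couple different angular blocks, and each extension must act as the closure $h^{(\nu)}_{\ell,\alpha}=\overline{h^{(\nu)}_\ell}$ for $\ell\geqslant 1$ and as some self-adjoint realisation of $h^{(\nu)}_0$ on the $\ell=0$ channel. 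Theorem \ref{thm:1Dclass}(i) then supplies precisely the family $h^{(\nu)}_{0,\alpha}$, $\alpha\in\R\cup\{\infty\}$, giving the reduction \eqref{eq:ang_decomp_Halpha} and part (i); the maximal-domain description \eqref{eq:domhell1} for $\ell\geqslant 1$ is the standard limit-point characterisation.

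For part (ii) I would identify $\alpha=\infty$ with the Friedrichs extension. The form closure of $\mathring{H}^{(\nu)}$ on $C^\infty_0(\R^3\setminus\{0\})$ has form domain $H^1(\R^3)$ (a point carries vanishing $H^1$-capacity in three dimensions), and $H^{(\nu)}$ with domain $H^2(\R^3)$ is self-adjoint by Kato--Rellich, with $H^2(\R^3)\subset H^1(\R^3)$; the inclusion of the operator domain in the form domain characterises the Friedrichs extension, so $H^{(\nu)}_\infty=H^{(\nu)}$. Equivalently, in the $\ell=0$ channel a spherically symmetric $\psi\in H^2(\R^3)$ lifts to the radial profile $f(r)=\sqrt{4\pi}\,r\psi$ with $f(0)=0$, which is exactly the Friedrichs boundary condition of Theorem \ref{thm:1Fried}(ii). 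The \emph{separately finite energy} statement then follows because on $H^1(\R^3)$ both $\int|\nabla\psi|^2$ and, by Hardy's inequality, $\int\tfrac{\nu}{|x|}|\psi|^2$ are finite, whereas for $\alpha\neq\infty$ the domain \eqref{eq:DomainHalpha} contains the singular $\mathfrak{g}_{\nu,\kappa}\sim\tfrac{1}{4\pi|x|}$, whose kinetic and potential integrals diverge separately at the origin while only their renormalised combination is finite.

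Part (iii) is obtained by lifting the radial resolvent identity \eqref{eq:1Dresolvent}. The isometry $f\mapsto\frac{f(|x|)}{\sqrt{4\pi}\,|x|}$ onto the $\ell=0$ subspace carries $\Phi_\kappa(r)=\mathscr{W}_{\kappa,\frac{1}{2}}(-\frac{\nu}{\kappa}r)$ to $\frac{\sqrt{4\pi}}{\Gamma(1-\kappa)}\,\mathfrak{g}_{\nu,\kappa}$, with $\mathfrak{g}_{\nu,\kappa}$ as in \eqref{eq:ourgnukappa}, so that $|\Phi_\kappa\rangle\langle\Phi_\kappa|$ becomes $\frac{4\pi}{\Gamma(1-\kappa)^2}|\mathfrak{g}_{\nu,\kappa}\rangle\langle\mathfrak{g}_{\nu,\kappa}|$. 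Since $H^{(\nu)}_\alpha$ and $H^{(\nu)}$ have identical restrictions to every $\ell\geqslant 1$ block, the resolvent difference is supported in the $\ell=0$ sector and equals the lift of the rank-one term in \eqref{eq:1Dresolvent}; the prefactor $\frac{\Gamma(1-\kappa)^2}{4\pi}$ cancels against $\frac{4\pi}{\Gamma(1-\kappa)^2}$, leaving exactly $\frac{1}{\alpha-\mathfrak{F}_{\nu,\kappa}}|\mathfrak{g}_{\nu,\kappa}\rangle\langle\mathfrak{g}_{\nu,\kappa}|$ as in \eqref{eq:HnuResolvent}, with $\mathfrak{F}_{\nu,\kappa}$ from \eqref{eq:Fnuk}.

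Finally, part (iv) follows by inverting \eqref{eq:HnuResolvent}: writing $\psi=(H^{(\nu)}_\alpha+\frac{\nu^2}{4\kappa^2})^{-1}\chi$ and $\varphi_\kappa=(H^{(\nu)}+\frac{\nu^2}{4\kappa^2})^{-1}\chi\in H^2(\R^3)$ yields $\psi=\varphi_\kappa+\frac{\langle\mathfrak{g}_{\nu,\kappa},\chi\rangle}{\alpha-\mathfrak{F}_{\nu,\kappa}}\,\mathfrak{g}_{\nu,\kappa}$, and it remains to show $\langle\mathfrak{g}_{\nu,\kappa},\chi\rangle=\varphi_\kappa(0)$. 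Using the small-argument Whittaker asymptotics $\mathscr{W}_{\kappa,\frac{1}{2}}(z)\to\frac{1}{\Gamma(1-\kappa)}$, one checks that $\mathfrak{g}_{\nu,\kappa}(x)\to\frac{1}{4\pi|x|}$ as $x\to 0$, whence $(H^{(\nu)}+\frac{\nu^2}{4\kappa^2})\mathfrak{g}_{\nu,\kappa}=\delta_0$ in the distributional sense (the $\frac{1}{4\pi|x|}$ head producing $\delta_0$ under $-\Delta$, the lower-order terms being locally integrable), so that $\langle\mathfrak{g}_{\nu,\kappa},\chi\rangle=\langle\delta_0,\varphi_\kappa\rangle=\varphi_\kappa(0)$ by self-adjointness of the resolvent. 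Uniqueness of the splitting is immediate because $\mathfrak{g}_{\nu,\kappa}\notin H^2(\R^3)$. I expect the principal obstacle to be exactly this distributional computation of $(H^{(\nu)}+\frac{\nu^2}{4\kappa^2})\mathfrak{g}_{\nu,\kappa}$ at the origin, together with the careful matching of all multiplicative constants between the radial normalisation in \eqref{eq:1Dresolvent} and its three-dimensional counterpart; the remaining steps are routine once the angular reduction is in place.
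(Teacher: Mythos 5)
Your proposal is correct, and its global strategy is the same as the paper's: reduce everything to the radial problem through the angular decomposition, then lift the radial Kre\u{\i}n--Vi\v{s}ik--Birman results (Theorems \ref{thm:1Fried} and \ref{thm:1Dclass}, Corollary \ref{cor:resolventSbeta}) back to $L^2(\mathbb{R}^3)$, with exactly the same bookkeeping of the factor $4\pi/\Gamma(1-\kappa)^2$ in part (iii). Two local steps differ. For part (ii) your primary argument (form closure of $\mathring{H}^{(\nu)}$ has form domain $H^1(\mathbb{R}^3)$ by vanishing capacity of a point, $H^{(\nu)}$ is self-adjoint on $H^2(\mathbb{R}^3)$ by Kato--Rellich, and operator-domain-inside-form-domain singles out the Friedrichs extension) is a genuinely different, purely three-dimensional identification; the paper instead identifies the $\ell=0$ radial domain of \eqref{HnuFriedr} with $\mathcal{D}(S_F)=H^2(\mathbb{R}^+)\cap H^1_0(\mathbb{R}^+)$, which is the route you offer as ``equivalently''. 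Your capacity argument buys independence from Lemma \ref{lem:DomOpFridS}, at the price of two standard external facts; either works. Two small cautions there: strictly speaking $\mathfrak{g}_{\nu,\kappa}$ itself does \emph{not} belong to $\mathcal{D}(H^{(\nu)}_\alpha)$ (only elements with non-zero singular component $c\,\mathfrak{g}_{\nu,\kappa}$ do), and your uniqueness claim leans on \eqref{eq:DomainHalpha} from part (iv), so either reorder the parts (no circularity arises) or argue radially via Corollary \ref{cor:Frie}, as the paper does. For part (iv) the paper merely cites the argument of \cite[Theorems I.1.1.3 and I.2.1.2]{albeverio-solvable}, whereas you supply it: the distributional identity $\big(H^{(\nu)}+\frac{\nu^2}{4\kappa^2}\big)\mathfrak{g}_{\nu,\kappa}=\delta_0$, giving $\langle\mathfrak{g}_{\nu,\kappa},\chi\rangle=\varphi_\kappa(0)$; this is indeed the content of that citation and is sound, provided you add the routine density step extending the pairing $\langle\mathfrak{g}_{\nu,\kappa},-\Delta\varphi\rangle=\langle-\Delta\mathfrak{g}_{\nu,\kappa},\varphi\rangle$ from $\varphi\in C^\infty_0(\mathbb{R}^3)$ to $\varphi\in H^2(\mathbb{R}^3)$.
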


We observe that \eqref{eq:DomainHalpha} provides the typical decomposition of a generic element in $\mathcal{D}(H_\alpha^{(\nu)})$ into the `regular' part $\varphi_\kappa \in H^2(\mathbb{R}^3)$ and the `singular' part $\mathfrak{g}_{\nu,\kappa}\sim |x|^{-1}$ as $x\to 0$ with a precise `boundary condition' among the two.

The uniqueness property of part (ii) above is another feature that, as we shall see, emerges naturally within the Kre{\u\i}n-Vi\v{s}ik-Birman scheme. It gives the standard hydrogenoid Hamiltonian a somewhat physically distinguished status, in complete analogy with its semi-relativistic counterpart, the well-known distinguished realisation of the Dirac-Coulomb Hamiltonian (see, e.g., \cite{Gallone-AQM2017,MG_DiracCoulomb2017} and the references therein).

Last, we address the spectral analysis of each realisation $H_\alpha^{(\nu)}$. 

Since the  $H_\alpha^{(\nu)}$'s are rank-one perturbations, in the resolvent sense, of $H_{\alpha=\infty}^{(\nu)}\equiv H^{(\nu)}$,
then we deduce from \eqref{eq:spectrum_hydrogeonid} that 
\begin{equation}
 \sigma_{\mathrm{ess}}(H_\alpha^{(\nu)})\;=\;\sigma_{\mathrm{ac}}(H_\alpha^{(\nu)})\;=\;[0,+\infty)\,,\quad\sigma_{\mathrm{sc}}(H_\alpha^{(\nu)})\;=\;\varnothing\,,
\end{equation}
and only $\sigma_{\mathrm{point}}(H_\alpha^{(\nu)})$ differs from the corresponding $\sigma_{\mathrm{point}}(H^{(\nu)})$.

Concerning the corrections to $\sigma_{\mathrm{point}}(H^{(\nu)})$ due to the central perturbation, we distinguish among the two possible cases. If $\nu<0$, then the $n$-th eigenvalue $-\frac{\nu^2}{4n^2}$ in $\sigma_{\mathrm{point}}(H^{(\nu)})$ is $n^2$-fold degenerate, with partial $(2\ell+1)$-fold degeneracy in the sector of angular symmetry $\ell$ for all $\ell\in\{0,\dots,n-1\}$. All the eigenstates of $H^{(\nu)}$ with eigenvalue $-\frac{\nu^2}{4n^2}$ and with symmetry $\ell\geqslant 1$ are also eigenstates of any other realisation $H_\alpha^{(\nu)}$ with the same eigenvalue, because $H_\alpha^{(\nu)}$ is a perturbation of $H^{(\nu)}$ in the $s$-wave only. Thus, the effect of the central perturbation is a correction to the $\ell=0$ point spectrum of $H^{(\nu)}$, which consists of countably many non-degenerate eigenvalues $E_n:=-\frac{\nu^2}{4n^2}$, $n\in\mathbb{N}$.

If instead $\nu>0$, then a standard application of the Kato-Agmon-Simon Theorem (see e.g. \cite[Theorem XIII.58]{rs4}) gives $\sigma_{\mathrm{point}}(H^{(\nu)})=\varnothing$. Yet, if the central perturbation corresponds to an interaction that is attractive or at least not too much repulsive, then it can create one negative eigenvalue in the $\ell=0$ sector.

This is described in detail as follows.

\begin{theorem}[Eigenvalue corrections]
\label{thm:EV_corrections}~

\noindent For given $\alpha\in\mathbb{R}\cup\{\infty\}$ and $\nu\in\mathbb{R}$, let $\sigma^{(0)}_{\mathrm{p}}(H^{(\nu)}_\alpha)$ be point spectrum of the self-adjoint extension $H^{(\nu)}_\alpha$ with definite angular symmetry $\ell=0$ (`$s$-wave point spectrum'). Moreover, for $E<0$ let
 \begin{equation}\label{eq:Feigenvalues}
 \mathfrak{F}_\nu(E)\;:=\;\frac{\nu}{4\pi}\Big(\psi\big(1+{\textstyle\frac{\nu}{2\sqrt{|E|}}}\big)+ \ln(2 \sqrt{|E|}) +2\gamma - 1 - {\textstyle\frac{\sqrt{|E|}}{\nu}} \Big)\,.
\end{equation}
 \begin{itemize}
  \item[(i)] If $\nu<0$, then the equation
  \begin{equation}\label{eq:theoremFEalpha}
   \mathfrak{F}_\nu(E)\;=\;\alpha
  \end{equation}
 admits countably many simple negative roots that form an increasing sequence $(E_n^{(\nu,\alpha)})_{n\in\mathbb{N}}$ accumulating at zero, and 
  \begin{equation}
  \sigma^{(0)}_{\mathrm{p}}(H^{(\nu)}_\alpha)\;=\;\big\{ E_n^{(\nu,\alpha=\infty)}\,\big|\, n\in\mathbb{N}\big\}\,.
 \end{equation}
 For the Friedrichs extension,
  \begin{equation}
   E_n^{(\nu,\alpha=\infty)}\;=\;E_n^{(\nu)}\;=\;-\frac{\nu^2}{4 n^2}\,,
  \end{equation}
 that is, the ordinary hydrogenoid eigenvalues.
 \item[(ii)] If $\nu>0$, then the equation \eqref{eq:theoremFEalpha} has no negative roots if $\alpha\geqslant\alpha_\nu$, where
 \begin{equation}
  \alpha_\nu\;:=\;\frac{\nu}{4\pi}\,(\ln\nu+2\gamma-1)\,,
 \end{equation}
 and has one simple negative root $E_+^{(\nu,\alpha)}$ if $\alpha<\alpha_\nu$. Correspondingly,
 \begin{equation}
  \sigma^{(0)}_{\mathrm{p}}(H^{(\nu)}_\alpha)\;=\;
  \begin{cases}
   \varnothing & \textrm{ if }\;\alpha\geqslant\alpha_\nu\,, \\
   E_+^{(\nu,\alpha)}& \textrm{ if }\;\alpha<\alpha_\nu\,.
  \end{cases}
 \end{equation}
 \end{itemize}
\end{theorem}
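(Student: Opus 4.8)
The plan is to reduce the determination of $\sigma^{(0)}_{\mathrm{p}}(H^{(\nu)}_\alpha)$ to the study of the scalar equation $\mathfrak{F}_\nu(E)=\alpha$, using that, by Theorem \ref{thm:3Dclass}, $H^{(\nu)}_\alpha$ perturbs the Friedrichs realisation $H^{(\nu)}=H^{(\nu)}_\infty$ only in the $\ell=0$ channel, so that only the radial operator $h^{(\nu)}_{0,\alpha}$ contributes. First I would record the algebraic identification linking \eqref{eq:Fnuk} and \eqref{eq:Feigenvalues}: setting $E=-\nu^2/(4\kappa^2)$ with $\mathrm{sign}\,\kappa=-\mathrm{sign}\,\nu$, one has $\sqrt{|E|}=|\nu|/(2|\kappa|)$, whence $1+\frac{\nu}{2\sqrt{|E|}}=1-\kappa$, $\ln(2\sqrt{|E|})=\ln(-\frac{\nu}{\kappa})$, and $-\frac{\sqrt{|E|}}{\nu}=\frac{1}{2\kappa}$. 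Thus $\mathfrak{F}_{\nu,\kappa}=\mathfrak{F}_\nu(E)$ term by term, and $E\mapsto\mathfrak{F}_\nu(E)$ is the real-analytic continuation of $\kappa\mapsto\mathfrak{F}_{\nu,\kappa}$ to the whole negative axis, punctured by its poles.

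Next I would establish the eigenvalue criterion: for $E<0$, below $\sigma_{\mathrm{ess}}(H^{(\nu)}_\alpha)=[0,+\infty)$, one has $E\in\sigma^{(0)}_{\mathrm{p}}(H^{(\nu)}_\alpha)$ if and only if $\mathfrak{F}_\nu(E)=\alpha$. The quickest route reads this off the rank-one resolvent formula \eqref{eq:HnuResolvent}: in each pole-free interval of $(H^{(\nu)}+\frac{\nu^2}{4\kappa^2})^{-1}$ the $s$-wave eigenvalues of $H^{(\nu)}_\alpha$ not shared with $H^{(\nu)}$ are exactly the zeros of the scalar factor $\alpha-\mathfrak{F}_\nu(E)$, with $\mathfrak{g}_{\nu,\kappa}$ of \eqref{eq:ourgnukappa} spanning the one-dimensional eigenspace; the limiting value $\alpha=\infty$ returns instead the poles of $\mathfrak{F}_\nu$, i.e. the $s$-wave hydrogenoid levels carried by $H^{(\nu)}$. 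Equivalently, and more self-containedly, $E$ is an $s$-wave eigenvalue iff the unique $L^2(\mathbb{R}^+)$ solution $\Phi_\kappa(r)=\mathscr{W}_{\kappa,\frac12}(-\frac{\nu}{\kappa}r)$ of $-g''+\frac{\nu}{r}g=Eg$ (the $\mathscr{M}$-solution being discarded for its growth at infinity) satisfies the boundary condition $g_1=4\pi\alpha g_0$ of Theorem \ref{thm:1Dclass}; computing $g_0,g_1$ of \eqref{eq:g0g1limits-statements} from the small-$r$ expansion of $\mathscr{W}_{\kappa,\frac12}$ yields $g_1/g_0=4\pi\mathfrak{F}_\nu(E)$, so the boundary condition reduces precisely to $\mathfrak{F}_\nu(E)=\alpha$.

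It then remains to count the roots of $\mathfrak{F}_\nu(E)=\alpha$, separately in the two sign regimes. The structural fact I would exploit is that $\mathfrak{F}_\nu$ is a Herglotz (Nevanlinna) function of $E$ --- it is the scalar Weyl/$Q$-function of the Kre\u{\i}n--Vi\v{s}ik--Birman parametrisation --- hence strictly increasing on every interval of $(-\infty,0)$ free of its poles, with a negative residue at each simple pole. For $\nu<0$ the digamma $\psi(1+\frac{\nu}{2\sqrt{|E|}})$ has poles exactly where $1+\frac{\nu}{2\sqrt{|E|}}\in\{0,-1,-2,\dots\}$, i.e. at $E=E_n^{(\nu)}=-\nu^2/(4n^2)$; on each gap $\mathfrak{F}_\nu$ increases from $-\infty$ (as $E\to-\infty$ it behaves like $-\sqrt{|E|}/(4\pi)$, and just to the right of a pole it tends to $-\infty$) to $+\infty$ (just to the left of the next pole). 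By strict monotonicity and the intermediate value theorem there is exactly one simple root per gap, yielding the increasing sequence $(E_n^{(\nu,\alpha)})_n$ with $E_n^{(\nu,\alpha)}\uparrow 0$; as $\alpha\to\infty$ these roots collapse onto the poles, giving $E_n^{(\nu,\infty)}=E_n^{(\nu)}$, the ordinary hydrogenoid levels.

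For $\nu>0$ one has $1+\frac{\nu}{2\sqrt{|E|}}>1$, so $\mathfrak{F}_\nu$ is pole-free, smooth and strictly increasing on all of $(-\infty,0)$; its endpoint limits are $\lim_{E\to-\infty}\mathfrak{F}_\nu(E)=-\infty$ (the term $-\sqrt{|E|}/\nu$ dominating) and $\lim_{E\to 0^-}\mathfrak{F}_\nu(E)=\frac{\nu}{4\pi}(\ln\nu+2\gamma-1)=\alpha_\nu$, the latter obtained by combining $\psi(w)=\ln w+O(w^{-1})$ as $w\to+\infty$ with the term $\ln(2\sqrt{|E|})$. Hence $\mathfrak{F}_\nu$ maps $(-\infty,0)$ increasingly and bijectively onto $(-\infty,\alpha_\nu)$, so $\mathfrak{F}_\nu(E)=\alpha$ has a unique simple root $E_+^{(\nu,\alpha)}$ when $\alpha<\alpha_\nu$ and none when $\alpha\geqslant\alpha_\nu$. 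The step I expect to be most delicate is the rigorous strict monotonicity of $\mathfrak{F}_\nu$: written in $t=\sqrt{|E|}$, its derivative has the sign of $g(s):=2s-1-2s^2\psi'(1+s)$ with $s=\nu/(2t)$, and proving $g(s)<0$ for all $s>0$ needs a lower bound on the trigamma $\psi'(1+s)$ sharper than the elementary $\psi'(1+s)>1/(1+s)$ (which alone is inconclusive for $s>1$). I would sidestep this calculus entirely by invoking the Nevanlinna character of the $Q$-function furnished by the Kre\u{\i}n--Vi\v{s}ik--Birman scheme, which makes $\mathfrak{F}_\nu'>0$ on each pole-free interval automatic; the only remaining computation is then the pair of limits that pin down $-\infty$ and $\alpha_\nu$.
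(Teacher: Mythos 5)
Your proposal is correct, and its ``self-contained'' route is in fact the paper's own proof: the paper solves the radial problem $-g''+\frac{\nu}{r}g=Eg$, discards the $\mathscr{M}$-solution by its exponential growth at infinity, computes $g_0$ and $g_1$ for the surviving solution $\mathscr{W}_{\vartheta,\frac{1}{2}}(2r\sqrt{|E|})$, $\vartheta=-\nu/(2\sqrt{|E|})$, and imposes the boundary condition $g_1=4\pi\alpha\,g_0$ of Theorem \ref{thm:1Dclass}, arriving exactly at the criterion $\mathfrak{F}_\nu(E)=\alpha$; your alternative route via the poles of the resolvent \eqref{eq:HnuResolvent} is precisely the path the paper deliberately avoids and relegates to a consistency check (Remark \ref{rem:swavepoles}) --- note that this route also needs the separate observation that the poles themselves (the Friedrichs eigenvalues) are never $s$-wave eigenvalues of $H^{(\nu)}_\alpha$ for finite $\alpha$, which the boundary-condition route gives for free since there $g_0=0$ but $g_1\neq 0$. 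Where you genuinely add something is the monotonicity of $\mathfrak{F}_\nu$: the paper merely asserts that $E\mapsto\mathfrak{F}_\nu(E)$ is smooth and strictly increasing on each pole-free interval and only computes the endpoint limits. Your Herglotz justification is sound: since \eqref{eq:HnuResolvent} holds with the same, $\kappa$-independent labelling $\alpha$ for a continuum of real spectral points $z=-\nu^2/(4\kappa^2)$ below the spectrum of the Friedrichs extension, the function $\mathfrak{F}_\nu$ must be a positive affine rescaling of the canonical Kre\u{\i}n $Q$-function of the pair (reference extension, deficiency vector), whence $\mathfrak{F}_\nu'(E)=c\,\|\gamma(E)\|^2>0$, $c>0$, on every real interval in the resolvent set of $H^{(\nu)}$. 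Two remarks sharpen the picture. First, for $\nu<0$ the calculus route is actually elementary: with $s=\nu/(2\sqrt{|E|})<0$ and $\psi'>0$ wherever defined (trigamma is a sum of inverse squares), the quantity $2s^2\psi'(1+s)-2s+1$ is trivially positive, so no refined bound is needed in the regime where the countable family of roots arises. Second, for $\nu>0$ your diagnosis is exact: the elementary bound $\psi'(1+s)>\frac{1}{1+s}$ only gives $\frac{1-s}{1+s}$, inconclusive for $s>1$, but the standard refinement $\psi'(x)>\frac{1}{x}+\frac{1}{2x^2}$ (from the integral representation of $\psi'$) yields $2s^2\psi'(1+s)-2s+1>\frac{1}{(1+s)^2}>0$, so the direct proof closes in a few lines. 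Either supplement makes the root count --- and hence the paper's own argument --- fully rigorous.
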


\begin{figure}
\includegraphics[width=6.2cm]{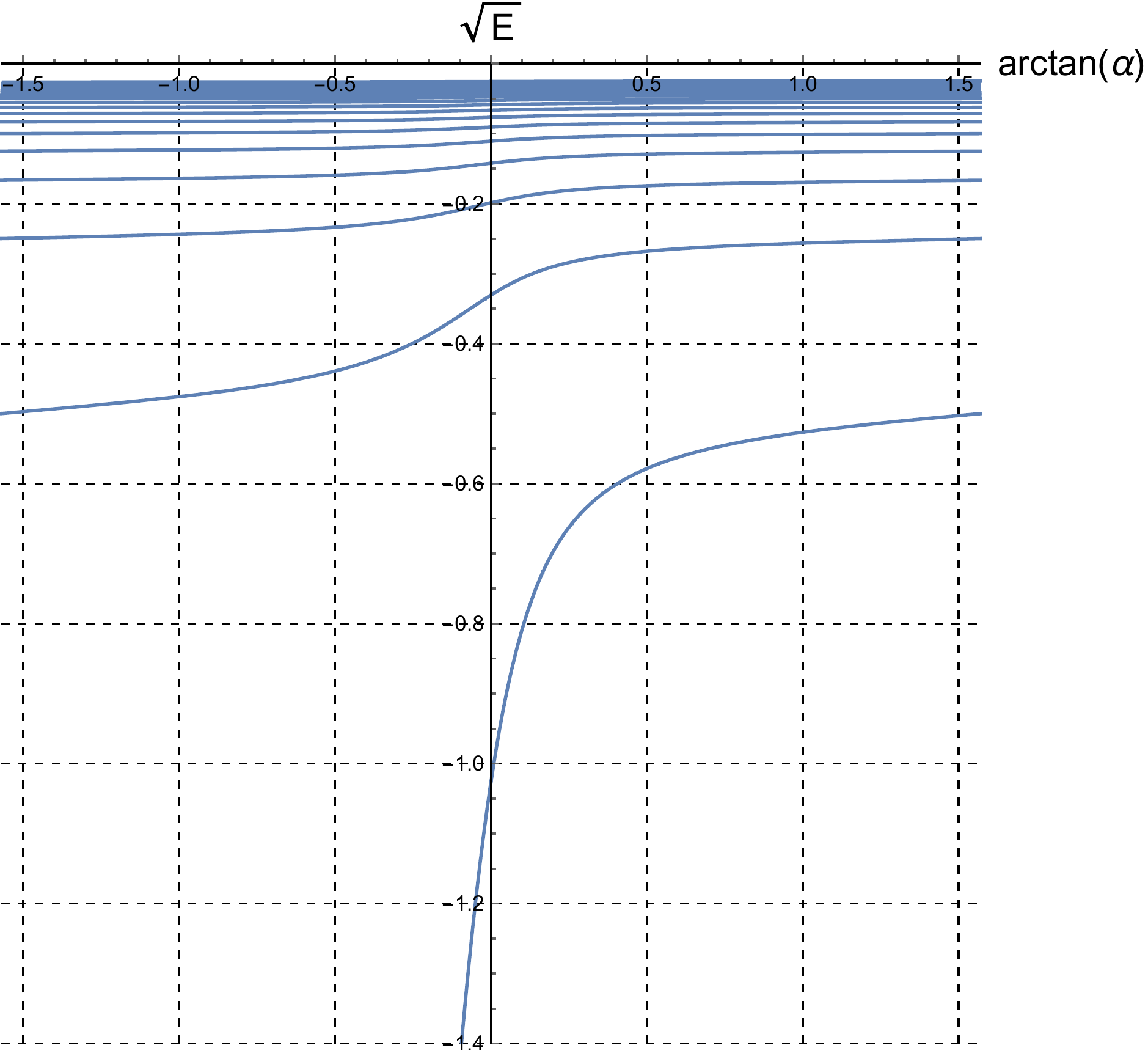}\quad
\includegraphics[width=5.8cm]{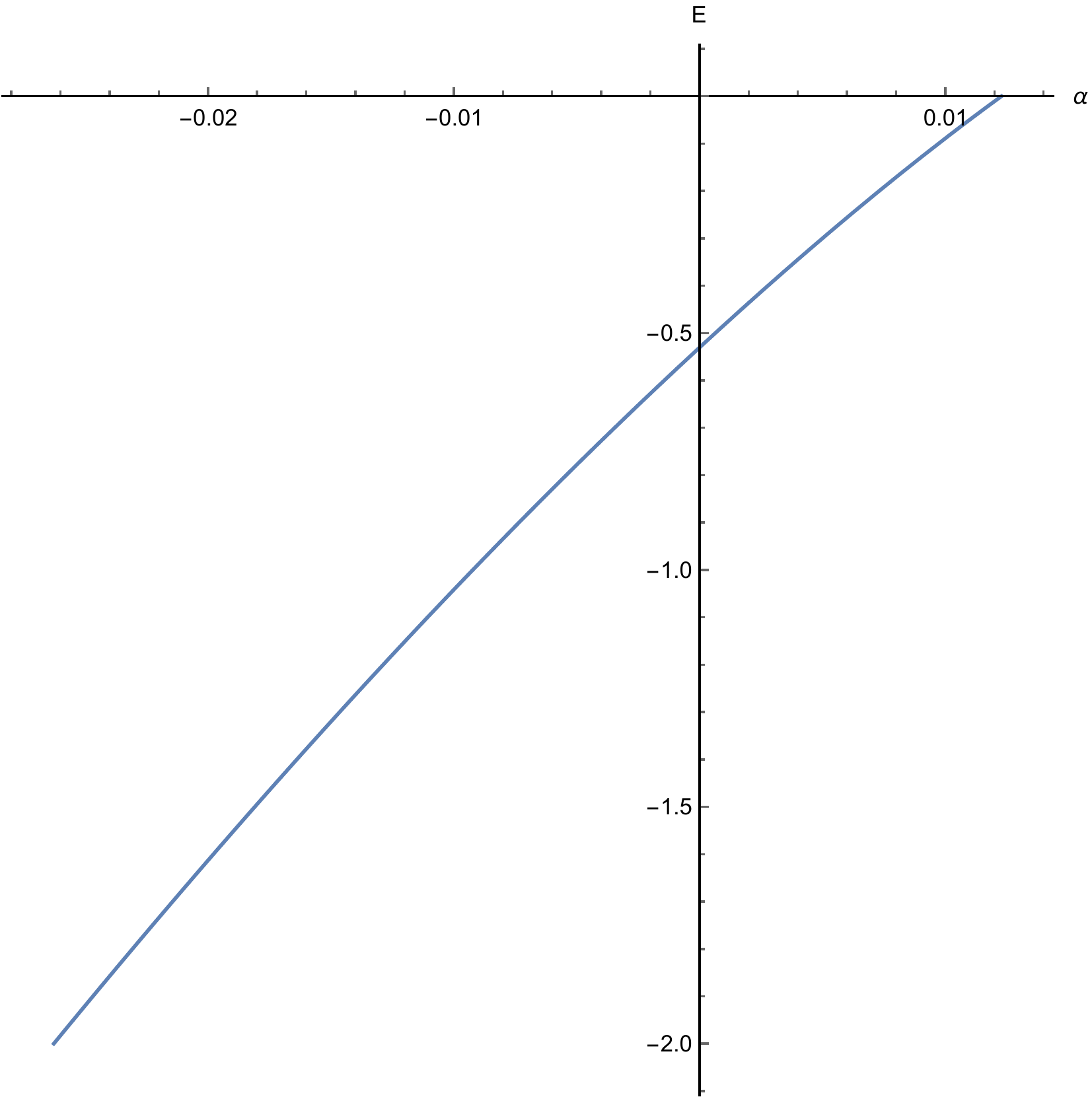}
\caption{Eigenvalues of the perturbed hydrogenoid Hamiltonian $H^{(\nu)}_\alpha$ for $\nu=-1$ (left) and $\nu=1$ (right). The scales of the energy $E$ and of the extension parameter $\alpha$ are modified to magnify the behaviour of the eigenvalues.} \label{fig:Eig}
\end{figure}

Figure  \ref{fig:Eig} displays the structure of the discrete spectrum described in Theorem \ref{thm:EV_corrections} above.

As we shall argue rigorously in due time, Figure  \ref{fig:Eig} confirms that when $ \nu < 0$ each $E_n^{(\nu,\alpha)}$ is smooth and strictly monotone in $\alpha$, with a typical \emph{fibred structure} of the union of all the discrete spectra $\sigma_{\text{disc}}(H^{(\nu)}_\alpha)$ 
\begin{equation}
(-\infty,0)\;=\bigcup_{\alpha \in (-\infty,+\infty]} \{ E_n^{(\nu,\alpha)}\, | \, n \in \mathbb{N}\} \;=\;\mathbb{R}\setminus \sigma_{\mathrm{ess}}(H_\alpha^{(\nu)})\,,
\end{equation}
(see Remark \ref{rem:spectra_fibre} below, and \cite{GM-2017-DC-EV} for an analogous phenomenon for Dirac operators),
and the correction $E_n^{(\nu,\alpha)}$ to the non-relativistic $E_n^{(\nu)}$ always \emph{decreases} the energy, with the intertwined relation $E_{n+1}^{(\nu,\alpha)}\geqslant E_n^{(\nu)}\geqslant E_n^{(\nu,\alpha)}$ (see Remark \ref{rem:EVdecreased}).


Analogously, when $\nu>0$,
\begin{equation}
(-\infty,0)\;=\bigcup_{\alpha \in (-\infty,\alpha_\nu)} \{ E_+^{(\nu,\alpha)}\} \;=\;\mathbb{R}\setminus \sigma_{\mathrm{ess}}(H_\alpha^{(\nu)})\,.
\end{equation}

\section{Self-adjoint realisations and classification}\label{sec:Section_of_Classification}

In this Section we establish the constructions of Theorems \ref{thm:1Fried}, \ref{thm:1Dclass}, and \ref{thm:3Dclass}. The main focus are the self-adjoint extensions on $L^2(\mathbb{R}^+)$ of the radial operator  $h_{0}^{(\nu)}$. Equivalently, we study the self-adjoint extensions of the shifted operator
\begin{equation}\label{eq:Symmetric}
 S\; :=\;-\frac{\ud^2}{\ud r^2} +\frac{\nu}{r}+\frac{\,\nu^2}{\,4\kappa^2}\,,\qquad\mathcal{D}(S)\;:=\;C^\infty_0(\mathbb{R}^+)\,,
\end{equation}
for generic 
\begin{equation}\label{eq:conditions_kappa}
 \kappa\,\in\,\mathbb{R}\,,\qquad\mathrm{sign}\,\kappa\;=\;-\mathrm{sign}\,\nu\,,\qquad0<|\kappa|<\textstyle\frac{1}{2}\,.
\end{equation}
Owing to \eqref{eq:spectrum_hydrogeonid}, $-\frac{\ud^2}{\ud r^2} +\frac{\nu}{r}\geqslant-\nu^2/4$, whence $S\geqslant{\textstyle\frac{1}{4}}\,\nu^2(\kappa^{-2}-1)$: thus, $S$ is densely defined and symmetric on $L^2(\mathbb{R}^+)$ with \emph{strictly positive bottom}. This feature will simplify the identification of the self-adjoint extensions of $S$: the corresponding extensions for $h_{0}^{(\nu)}$ are then obtained through a trivial shift.

It will be also convenient to make use of the notation
\begin{equation}
 \widetilde{S}\;: =\;-\frac{\ud^2}{\ud r^2} +\frac{\nu}{r}+\frac{\,\nu^2}{\,4\kappa^2} 
\end{equation}
to refer to the \emph{differential} action on functions in $L^2(\mathbb{R}^+)$, in the classical or the weak sense, with no reference to the operator domain.

In order to apply the Kre{\u\i}n-Vi\v{s}ik-Birman extension scheme \cite[Sec.~3]{GMO-KVB2017}, an amount of preparatory steps are needed (Subsect.~\ref{sec:homo} through \ref{sec:SclosureSstar}), in which we identify the spaces $\mathcal{D}(\overline{S})$, $\ker S^*$, and $S_F^{-1}\ker S^*$, $S_F$ being the Friedrichs extension of $S$. 
In Subsect.~\ref{sec:SclosureSstar} we qualify $S_F$ and prove Theorem \ref{thm:1Fried}; in Subsect.~\ref{sec:KVB_for_S} we classify the extensions of $S$ and prove Theorem \ref{thm:1Dclass}; last, in Subsect.~\ref{sec:3Dreconstr} we deduce Theorem~\ref{thm:3Dclass} from the previous results.

\subsection{The homogeneous radial problem}\label{sec:homo}~

We first qualify the space $\ker S^*$. By standard arguments (see, e.g., \cite[Lemma 15.1]{schmu_unbdd_sa}) 
\begin{equation}
 \begin{split}\label{eq:Sstar_maximal}
  \mathcal{D}(S^*)\;&=\;\left\{g\in L^2(\mathbb{R}^+)\left|\, \widetilde{S}g\in L^2(\mathbb{R}^+)\!\right.\right\} \\
  S^*g\;&=\;\widetilde{S}g\;=\;-g''+\frac{\nu}{r}\,g+\frac{\nu^2}{4\kappa^2}\,g\,,
 \end{split}
\end{equation}
that is, $S^*$ is the \emph{maximal} realisation of $\widetilde{S}$, and in fact $\overline{S}$ is the \emph{minimal} one. Thus, $\ker S^*$ is formed by the square-integrable solutions to $\widetilde{S} u = 0$ on $\mathbb{R}^+$. It is also standard (see e.g. \cite[Theorems 5.2--5.4]{Wasow_asympt_expansions}) that if $u$ solves $\widetilde{S} u = 0$, then it is smooth on $\mathbb{R}^+$, with possible singularity only at zero or infinity.

Through the change of variable $\rho:=-\frac{\nu}{\kappa}r$, $w(\rho):=u(r)$, where $-\frac{\nu}{\kappa}>0$ for every non-zero $\nu$ owing to \eqref{eq:conditions_kappa}, the differential problem becomes
\begin{equation}\label{eq:hypergeom}
 \Big(-\frac{\ud^2}{\ud \rho^2} -\frac{\kappa}{\rho}+\frac{1}{4}\Big)w\;=\;0\,,
\end{equation}
that is, a special case of Whittaker's equation $w''-(\frac{1}{4}-\frac{\kappa}{\rho}+(\frac{1}{4}-\mu^2)\frac{1}{\rho^2})w=0$ with parameter $\mu=\frac{1}{2}$ \cite[Eq.~(13.1.31)]{Abramowitz-Stegun-1964}.
The functions
\begin{eqnarray}
 \mathscr{M}_{\kappa,\frac{1}{2}}(\rho) &=& e^{-\frac{1}{2} \rho} \rho\, M_{1-\kappa,2}(\rho) \\
 \mathscr{W}_{\kappa,\frac{1}{2}}(\rho) &=& e^{-\frac{1}{2} \rho} \rho \,U_{1-\kappa,2}(\rho)
\end{eqnarray}
form a pair $(\mathscr{M}_{\kappa,\frac{1}{2}},\mathscr{W}_{\kappa,\frac{1}{2}})$ of linearly independent solutions to \eqref{eq:hypergeom} \cite[Eq.~(13.1.32)-(13.1.33)]{Abramowitz-Stegun-1964}, where $M_{a,b}$ and $U_{a,b}$ are, respectively, Kummer's and Tricomi's function \cite[Eq.~(13.1.2)-(13.1.3)]{Abramowitz-Stegun-1964}.

Owing to \cite[Eq.~(13.5.5), (13.5.7), (13.1.2) and (13.1.6)]{Abramowitz-Stegun-1964} as $\rho\to 0$, and to \cite[Eq.~(13.1.4) and (13.1.8)]{Abramowitz-Stegun-1964} as $\rho\to +\infty$, one has the asymptotics
\begin{equation}\label{eq:AsymM0}
 \begin{split}
 \mathscr{M}_{\kappa,\frac{1}{2}}(\rho)\;&\stackrel{\rho\to 0}{=}\; \rho -\frac{\kappa}{2} \rho^2 + \frac{1+2\kappa^2}{24} \rho^3 + O(\rho^4) \\
 \mathscr{W}_{\kappa,\frac{1}{2}}(\rho)\;&\stackrel{\rho\to 0}{=}\;\frac{1}{\Gamma(1-\kappa)} -\frac{ \kappa}{\Gamma(1-\kappa)} \,\rho \ln \rho \\
 &\qquad\quad +\frac{\,(2-4 \gamma) \kappa-2 \kappa\psi(1-\kappa)  -1\,}{2 \Gamma(1-\kappa)}\, \rho+O(\rho^2 \ln \rho)
 \end{split}
\end{equation}
and 
\begin{equation}\label{eq:AsymMInf}
\begin{split}
 \mathscr{M}_{\kappa,\frac{1}{2}}(\rho)\;&\stackrel{\rho\to +\infty}{=}\;\frac{1}{\Gamma(1-\kappa)} e^{\rho/2} \rho^{-\kappa} (1+O(\rho^{-1})) \\
 \mathscr{W}_{\kappa,\frac{1}{2}}(\rho)\;&\stackrel{\rho\to +\infty}{=}\;e^{-\rho/2} \rho^\kappa (1+O(\rho^{-1}))\,,
\end{split}
\end{equation}
where $\gamma\sim 0.577$ is the Euler-Mascheroni constant and $\psi(z)=\Gamma'(z)/\Gamma(z)$ is the digamma function. Since $0<|\kappa|<\frac{1}{2}$, the expressions \eqref{eq:AsymM0} and \eqref{eq:AsymMInf} make sense.

Therefore only $\mathscr{W}_{\kappa,\frac{1}{2}}$ is square-integrable at infinity, whereas both $\mathscr{M}_{\kappa,\frac{1}{2}}$ and $\mathscr{W}_{\kappa,\frac{1}{2}}$ are square-integrable at zero. This implies that the square-integrable solutions to $\widetilde{S}u \; = \;0$ form a \emph{one}-dimensional space, that is, $\dim\ker S^*=1$. 

Explicitly, upon setting
\begin{equation}\label{eq:FPhi}
 \begin{split}
  F_\kappa(r)\;&:=\;\mathscr{M}_{\kappa,\frac{1}{2}}(\lambda r) \\
  \Phi_\kappa(r)&:=\;\mathscr{W}_{\kappa,\frac{1}{2}}(\lambda r)\,,\qquad \lambda:=-{\textstyle\frac{\nu}{\kappa}}>0\,,
 \end{split}
\end{equation}
one has that
\begin{equation}\label{eq:kerSstar}
 \ker S^*\;=\;\mathrm{span}\{\Phi_\kappa\}
\end{equation}
and that $(F_\kappa,\Phi_\kappa)$ is a pair of linearly independent solutions to the original problem $\widetilde{S}u=0$.

\subsection{Inhomogeneous inverse radial problem}~

Next, let us focus on the inhomogeneous problem $\widetilde{S}f=g$ in the unknown $f$ for given $g$. With respect to the fundamental system $(F_\kappa,\Phi_\kappa)$ for $\widetilde{S}u=0$, the general solution is given by
\begin{equation}
 f\;=\;c_1 F_\kappa + c_2 \Phi_\kappa + f_{\mathrm{part}}
\end{equation}
for $c_1,c_2\in\mathbb{C}$ and some particular solution $f_{\mathrm{part}}$, i.e., $\widetilde{S}f_{\mathrm{part}}=g$.

The Wronskian
\begin{equation}
W(\Phi_\kappa,F_\kappa)(r)\;:=\;\det \begin{pmatrix}
\Phi_\kappa(r) & F_\kappa(r) \\
\Phi_\kappa'(r) & F_\kappa'(r)
\end{pmatrix}
\end{equation}
relative to the pair $(F_\kappa,\Phi_\kappa)$ is actually constant in $r$, owing to Liouville's theorem, with a value that can be computed by means of the asymptotics \eqref{eq:AsymM0} or \eqref{eq:AsymMInf} and amounts to
\begin{equation}\label{eq:value_of_W}
W(\Phi_\kappa,F_\kappa)\;=\;\frac{-\nu/\kappa}{\Gamma(1-\kappa)}\;=:\; W\,.
\end{equation}

A standard application of the method of variation of constants \cite[Section 2.4]{Wasow_asympt_expansions} shows that we can take $f_{\mathrm{part}}$ to be
\begin{equation}
f_{\text{part}}(r) \;=\; \int_0^{+\infty} G(r,\rho) g(\rho) \, \ud \rho\,,
\end{equation}
where 
\begin{equation}\label{eq:Green}
G(r,\rho) \, := \,\frac{1}{W} \begin{cases}
\Phi_\kappa(r) F_\kappa(\rho) \qquad \text{if }0 < \rho < r\\
F_\kappa(r) \Phi_\kappa(\rho) \qquad \text{if } 0 < r < \rho \,.
\end{cases}
\end{equation}

The following property holds.

\begin{lemma}\label{lem:RGbddsa}
The integral operator $R_G$ on $L^2(\mathbb{R}^+,\ud r)$ with kernel $G(r,\rho)$ given by \eqref{eq:Green} is bounded and self-adjoint.
\end{lemma}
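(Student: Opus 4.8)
The plan is to treat the two claims separately: self-adjointness is essentially structural, whereas boundedness requires a genuine estimate. First I would record that, under \eqref{eq:conditions_kappa}, both $\kappa$ and the argument $\lambda r=-\frac{\nu}{\kappa}r$ are real, so the Whittaker functions $\mathscr{M}_{\kappa,\frac12}$ and $\mathscr{W}_{\kappa,\frac12}$, and hence $F_\kappa$ and $\Phi_\kappa$ from \eqref{eq:FPhi}, are real-valued; the Wronskian $W$ in \eqref{eq:value_of_W} is real as well. Therefore the kernel \eqref{eq:Green} is real and symmetric, $G(r,\rho)=G(\rho,r)$. Once $R_G$ is shown to be bounded, this Hermitian symmetry together with Fubini's theorem gives $\langle R_G f,g\rangle=\langle f,R_G g\rangle$ for all $f,g\in L^2(\mathbb{R}^+)$, so the everywhere-defined bounded operator $R_G$ is self-adjoint.

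The substance is boundedness. I would stress at the outset that $R_G$ is \emph{not} Hilbert--Schmidt: the exponential growth of $F_\kappa$ at infinity makes $\iint|G(r,\rho)|^2\,\ud r\,\ud\rho$ diverge, so no crude square-integrability estimate can work. Instead I would invoke the Schur (Schur--Holmgren) test with the constant weight $p\equiv1$. For fixed $r>0$, splitting \eqref{eq:Green} at $\rho=r$ gives
\[
 \int_0^{+\infty}|G(r,\rho)|\,\ud\rho\;=\;\frac{1}{|W|}\Big(|\Phi_\kappa(r)|\int_0^r|F_\kappa(\rho)|\,\ud\rho+|F_\kappa(r)|\int_r^{+\infty}|\Phi_\kappa(\rho)|\,\ud\rho\Big).
\]
Because $G$ is symmetric, the row and column integrals coincide, so it suffices to bound $\sup_{r>0}$ of this right-hand side, and boundedness of $R_G$ follows with norm controlled by that supremum.

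To bound the supremum I would use the asymptotics \eqref{eq:AsymM0}--\eqref{eq:AsymMInf}. The decisive feature is the exact exponential cancellation at infinity, due to $(\lambda/2)^2=\nu^2/(4\kappa^2)$ being precisely the spectral shift: from \eqref{eq:AsymMInf}, $\Phi_\kappa(r)\sim e^{-\lambda r/2}(\lambda r)^{\kappa}$ while $F_\kappa(r)\sim\Gamma(1-\kappa)^{-1}e^{\lambda r/2}(\lambda r)^{-\kappa}$. Integration by parts then gives $\int_0^r|F_\kappa|\sim\frac{2}{\lambda}\Gamma(1-\kappa)^{-1}e^{\lambda r/2}(\lambda r)^{-\kappa}$ and $\int_r^{+\infty}|\Phi_\kappa|\sim\frac{2}{\lambda}e^{-\lambda r/2}(\lambda r)^{\kappa}$, so both products on the right-hand side above converge, as $r\to+\infty$, to the finite constant $\frac{2}{\lambda\,\Gamma(1-\kappa)}$. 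Near the origin \eqref{eq:AsymM0} gives $\Phi_\kappa$ bounded and $F_\kappa(r)\sim\lambda r\to0$; since $\Phi_\kappa\in L^1(\mathbb{R}^+)$ (bounded at $0$, exponentially small at $+\infty$), both products tend to $0$ as $r\downarrow0$. As the integrand is continuous on $(0,+\infty)$ with finite limits at both ends, the supremum is finite.

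The main obstacle is exactly this estimate at infinity: one must check that the exponential growth of $F_\kappa$ is matched \emph{exactly}---not merely up to polynomial factors---by the decay of $\Phi_\kappa$, including the precise cancellation of the algebraic factors $(\lambda r)^{\pm\kappa}$ after integration. This is what forces the Schur integral to tend to a nonzero constant rather than blow up, and it is the mechanism by which a non-Hilbert--Schmidt operator is nevertheless bounded.
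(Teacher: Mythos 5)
Your proof is correct, and it takes a genuinely different route from the paper's. The paper splits $R_G$ into four pieces according to the partition $(0,1)\cup(1,+\infty)$ in each variable: the three pieces whose kernel involves a neighbourhood of the origin are shown to be square-integrable, hence Hilbert--Schmidt, and only the remaining piece $G^{++}$, bounded by $e^{-\frac{\lambda}{4}|r-\rho|}$, is handled by a Schur test. You instead run a single global Schur test with constant weight, computing the precise asymptotics of the row integral $\int_0^{+\infty}|G(r,\rho)|\,\ud\rho$: it vanishes as $r\downarrow 0$ and converges to a finite constant as $r\to+\infty$, thanks to the exact cancellation of both the exponentials and the algebraic factors $(\lambda r)^{\pm\kappa}$ after integration. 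Both arguments rest on the same mechanism --- the growth $e^{\lambda r/2}$ of $F_\kappa$ being matched by the decay $e^{-\lambda r/2}$ of $\Phi_\kappa$ --- but the paper exploits it only up to a deliberate loss (absorbing the algebraic factors into $e^{\frac{\lambda}{4}|r-\rho|}$), whereas you extract the sharp limiting constant $\frac{2}{\lambda\Gamma(1-\kappa)}$ via the integration-by-parts asymptotics of $\int_0^r|F_\kappa|$ and $\int_r^{+\infty}|\Phi_\kappa|$. What your version buys is a one-shot proof with an explicit bound on $\Vert R_G\Vert$ (the supremum of the row integral, divided by $|W|$); what the paper's decomposition buys is robustness, since only crude one-sided bounds on the kernel are needed and the near-origin pieces come out Hilbert--Schmidt (hence compact) as a by-product. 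Your self-adjointness argument (real symmetric kernel plus boundedness) coincides with the paper's, and your preliminary remark that $R_G$ fails to be Hilbert--Schmidt is correct --- it is exactly why the paper, too, must treat the $G^{++}$ piece by a Schur-type bound rather than by square-integrability.
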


\begin{proof}
$R_G$ splits into the sum of four integral operators with kernels given by
\begin{equation*}
 \begin{split}
  G^{++}(r,\rho)\;&:=\;G(r,\rho)\,\mathbf{1}_{(1,+\infty)}(r)\,\mathbf{1}_{(1,+\infty)}(\rho) \\
  G^{+-}(r,\rho)\;&:=\;G(r,\rho)\,\mathbf{1}_{(1,+\infty)}(r)\,\mathbf{1}_{(0,1)}(\rho) \\
  G^{-+}(r,\rho)\;&:=\;G(r,\rho)\,\mathbf{1}_{(0,1)}(r)\,\mathbf{1}_{(1,+\infty)}(\rho) \\
  G^{--}(r,\rho)\;&:=\;G(r,\rho)\,\mathbf{1}_{(0,1)}(r)\,\mathbf{1}_{(0,1)}(\rho)\,, 
 \end{split}
 \end{equation*}
where $\mathbf{1}_J$ denotes the characteristic function of the interval $J\subset\mathbb{R}^+$.
We can estimate each $G^{LM}(r,\rho)$, $L,M\in\{+,-\}$, by means of the short and large distance asymptotics \eqref{eq:AsymM0}-\eqref{eq:AsymMInf} for $F_\kappa$ and $\Phi_\kappa$. Calling $\lambda=-\frac{\nu}{\kappa}$ as in \eqref{eq:FPhi}, for example, 
\[
 \begin{split}
  |\Phi_\kappa(r) \,F_\kappa(\rho)\,\mathbf{1}_{(1,+\infty)}(r)\,\mathbf{1}_{(1,+\infty)}(\rho)|\;&\lesssim\; e^{-\frac{\lambda}{2}(r-\rho)} \,\left(\frac{r}{\rho}\right)^\kappa \qquad\textrm{if }\,0<\rho<r \\
  |F_\kappa(r)\, \Phi_\kappa(\rho)\,\mathbf{1}_{(1,+\infty)}(r)\,\mathbf{1}_{(1,+\infty)}(\rho)|\;&\lesssim\; e^{-\frac{\lambda}{2}(\rho-r)} \, \left(\frac{\rho}{r}\right)^\kappa \qquad\;\textrm{if }\,0<r<\rho\,,
 \end{split}
\]
because $F_k$ diverges exponentially and $\Phi_\kappa$ vanishes exponentially as $r\to+\infty$. Thus,
\[
 |G^{++}(r,\rho)|\;\lesssim\;e^{-\frac{\lambda}{4}|r-\rho|}\,.
\]
With analogous reasoning we find
\begin{equation*}\tag{*}\label{eq:matrix_estimates}
 \begin{split}
  |G^{++}(r,\rho)|\;&\lesssim\;e^{-\frac{\lambda}{4}|r-\rho|}\,\mathbf{1}_{(1,+\infty)}(r)\,\mathbf{1}_{(1,+\infty)}(\rho) \\
  |G^{+-}(r,\rho)|\;&\lesssim\;e^{-\frac{\lambda}{4}r}\,\mathbf{1}_{(1,+\infty)}(r)\,\mathbf{1}_{(0,1)}(\rho) \\
  |G^{-+}(r,\rho)|\;&\lesssim\;e^{-\frac{\lambda}{4}\rho}\,\mathbf{1}_{(0,1)}(r)\,\mathbf{1}_{(1,+\infty)}(\rho) \\
  |G^{--}(r,\rho)|\;&\lesssim\;\mathbf{1}_{(0,1)}(r)\,\mathbf{1}_{(0,1)}(\rho)\,.
 \end{split}
\end{equation*}
The last three bounds in \eqref{eq:matrix_estimates} imply $G^{+-},G^{-+},G^{--}\in L^2(\mathbb{R}^+\times\mathbb{R}^+,\ud r\,\ud\rho)$ and therefore the corresponding integral operators are Hilbert-Schmidt operators on $L^2(\mathbb{R}^+)$. The first bound in \eqref{eq:matrix_estimates} allows to conclude, by an obvious Schur test, that also the integral operator with kernel $G^{++}(r,\rho)$ is bounded on $L^2(\mathbb{R}^+)$. This proves the overall boundedness of $R_G$. Its self-adjointness is then clear from \eqref{eq:Green}: the adjoint $R_G^*$ of $R_G$ has kernel $\overline{G(\rho,r)}$, but $G$ is real-valued and $G(\rho,r)=G(r,\rho)$, thus proving that $R_G^*=R_G$.
\end{proof}

\subsection{Distinguished extension and its inverse}~


In the Kre\u{\i}n-Vi\v{s}ik-Birman scheme one needs a \emph{reference} self-adjoint extension of $S$ with everywhere defined bounded inverse: the Friedrichs extension $S_F$ is surely so, since the bottom of $S$ is strictly positive by construction.

In this Subsection we shall prove the following.
\begin{proposition}\label{eq:RGisSFinv}
 $R_G=S_F^{-1}$.
\end{proposition}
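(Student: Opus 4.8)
The plan is to identify $R_G$ with $S_F^{-1}$ through the variational characterisation of the Friedrichs extension, avoiding any explicit prior knowledge of $\mathcal{D}(S_F)$ (which we only extract afterwards in Theorem \ref{thm:1Fried}). Recall that, $S$ being strictly positive, $S_F$ is the self-adjoint operator associated with the closure $\overline s$ of the form $s[\phi]=\langle\phi,S\phi\rangle$, $\phi\in C^\infty_0(\mathbb{R}^+)$, and $S_F^{-1}$ is bounded, everywhere defined, and self-adjoint. A bounded operator $T$ equals $S_F^{-1}$ as soon as, for every $g\in L^2(\mathbb{R}^+)$, the function $u:=Tg$ lies in the form domain $\mathcal{D}[\overline s]$ and satisfies $\overline s(v,u)=\langle v,g\rangle$ for all $v\in\mathcal{D}[\overline s]$, for this says precisely that $u\in\mathcal{D}(S_F)$ with $S_Fu=g$. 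Since $R_G$ is already bounded by Lemma \ref{lem:RGbddsa}, it suffices to verify these two properties for $u=R_Gg$.

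First I would establish the Green-function identity $\widetilde S\,(R_Gg)=g$ in the weak sense. Writing $u=R_Gg$ explicitly via \eqref{eq:Green} as
\[
 u(r)=\frac{1}{W}\Big(\Phi_\kappa(r)\int_0^r F_\kappa\,g\,\ud\rho+F_\kappa(r)\int_r^{+\infty}\Phi_\kappa\,g\,\ud\rho\Big),
\]
one differentiates twice; the contributions from the variable endpoints cancel in each differentiation, and using $\widetilde S F_\kappa=\widetilde S\Phi_\kappa=0$ together with the Wronskian normalisation \eqref{eq:value_of_W} one obtains $\widetilde S u=g$, the coefficient of $g$ being exactly $W(\Phi_\kappa,F_\kappa)/W=1$. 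Hence $u\in\mathcal{D}(S^*)$ with $S^*u=g$ by \eqref{eq:Sstar_maximal}. This is a routine variation-of-constants computation of which I would only record the cancellation of the boundary terms.

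The heart of the matter is the membership $R_Gg\in\mathcal{D}[\overline s]$. I would first identify the form domain: by Hardy's inequality $\int_0^{+\infty}r^{-2}|f|^2\,\ud r\leqslant 4\int_0^{+\infty}|f'|^2\,\ud r$ on $C^\infty_0(\mathbb{R}^+)$, the potential $r^{-1}$ is infinitesimally form-bounded relative to $-\ud^2/\ud r^2$, so (the bottom of $S$ being strictly positive) the form norm is equivalent to the $H^1$ norm and $\mathcal{D}[\overline s]=H^1_0(\mathbb{R}^+)$. It then remains to prove $R_Gg\in H^1_0(\mathbb{R}^+)$, that is $u,u'\in L^2$ with $u(r)\to 0$ as $r\downarrow 0$. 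The vanishing at the origin follows from $F_\kappa(0)=0$ in \eqref{eq:FPhi}--\eqref{eq:AsymM0}: as $r\downarrow 0$ the term $F_\kappa(r)\int_r^{+\infty}\Phi_\kappa g$ vanishes because $F_\kappa(r)\sim\lambda r$, while $\Phi_\kappa(r)\int_0^r F_\kappa g\to 0$ since $\Phi_\kappa(0)=\Gamma(1-\kappa)^{-1}$ is finite and $\int_0^r F_\kappa g\to 0$. The $L^2$ bound on $u$ is Lemma \ref{lem:RGbddsa}; for $u'$ the variable endpoints again cancel and
\[
 u'(r)=\frac{1}{W}\Big(\Phi_\kappa'(r)\int_0^r F_\kappa\,g+F_\kappa'(r)\int_r^{+\infty}\Phi_\kappa\,g\Big),
\]
which I would estimate by the same splitting into near-zero and near-infinity regions used in the proof of Lemma \ref{lem:RGbddsa}, now invoking the asymptotics \eqref{eq:AsymM0}--\eqref{eq:AsymMInf} for $F_\kappa',\Phi_\kappa'$. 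Finally, once $u=R_Gg\in\mathcal{D}[\overline s]$ and $S^*u=g$ are in hand, integration by parts gives $\overline s(v,u)=\langle v,\widetilde S u\rangle=\langle v,g\rangle$ first for $v\in C^\infty_0(\mathbb{R}^+)$, with no boundary terms since both $u$ and $v$ vanish at the endpoints, and then for all $v\in\mathcal{D}[\overline s]$ by density and continuity; this yields $u\in\mathcal{D}(S_F)$ with $S_Fu=g$, whence $R_G=S_F^{-1}$. I expect the only genuine obstacle to be the $L^2$-control of $u'$ near the origin: there $\Phi_\kappa'$ carries the logarithmic singularity $\sim\kappa\lambda\,\Gamma(1-\kappa)^{-1}\ln(\lambda r)$ visible in \eqref{eq:AsymM0}, and one must check that it is tamed by the factor $\int_0^r F_\kappa g=O(r^{3/2}\|g\|_{L^2})$, so that the product stays bounded and $u'$ is square-integrable at $0$.
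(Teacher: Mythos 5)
Your proof is correct, but it runs along a genuinely different route than the paper's. The paper never verifies the weak (form) equation directly: it first shows, by an abstract argument (Lemma \ref{eq:RGinvertsExtS}: $R_G$ is bounded, self-adjoint, injective with dense range, and $S^*R_G=\mathbbm{1}$), that $R_G$ is the inverse of \emph{some} self-adjoint extension $\mathscr{S}$ of $S$, and then pins $\mathscr{S}$ down as $S_F$ by invoking the classical property that the Friedrichs extension is the \emph{only} self-adjoint extension whose operator domain lies inside $\mathcal{D}[S_F]$; the needed inclusion $\mathrm{ran}\,R_G\subset\mathcal{D}[S_F]$ is obtained from the weighted estimate $r^{-1}R_Gg\in L^2$ (Lemma \ref{lem:ranRGinDrinv}), which combined with $\widetilde{S}R_Gg=g$ gives $(R_Gg)''\in L^2$, and then $(R_Gg)'\in L^2$ follows from a standard interpolation-type regularity fact ($f,f''\in L^2\Rightarrow f'\in L^2$). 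You instead bypass both Lemma \ref{eq:RGinvertsExtS} and the distinguished-extension property, appealing only to the definition of $S_F$ through the form closure and to the first representation theorem, and you prove $R_Gg\in H^1_0$ by differentiating the Green representation and estimating $u'$ directly. Each approach has a price: the paper's regularity trick is precisely what spares it any control of $u'$, whereas your route requires asymptotics for $F_\kappa'$ and $\Phi_\kappa'$, which are \emph{not} literally contained in \eqref{eq:AsymM0}--\eqref{eq:AsymMInf} (differentiating an asymptotic expansion is not automatic); they are standard for Whittaker functions, and can alternatively be extracted by integrating the ODE $u''=(\frac{\nu}{r}+\frac{\nu^2}{4\kappa^2})u$, but this step should be stated as such rather than cited from the function asymptotics. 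Conversely, your argument is more self-contained (no appeal to the uniqueness characterisation of $S_F$ among extensions) and makes the mechanism $S_FR_G=\mathbbm{1}$ completely explicit; note only that the final deduction $R_G=S_F^{-1}$ also uses that $S_F$ is injective, which is guaranteed by the strictly positive bottom of $S$.
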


This is checked in several steps. First, we recognise that $R_G$ inverts a self-adjoint extension of $S$.

\begin{lemma}\label{eq:RGinvertsExtS}
 There exists a self-adjoint extension $\mathscr{S}$ of $S$ in $L^2(\mathbb{R}^+)$ which has everywhere defined and bounded inverse and such that $\mathscr{S}^{-1}=R_G$.
\end{lemma}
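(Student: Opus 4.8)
The plan is to obtain $\mathscr{S}$ as the operator inverse of $R_G$, taking advantage of the fact that $R_G$ is already known to be bounded and self-adjoint (Lemma~\ref{lem:RGbddsa}). The decisive preliminary step is to show that $R_G$ realises a two-sided inverse of the differential action $\widetilde{S}$ on the minimal domain. I would first record that, by the very construction of $G$ through the method of variation of constants, $R_G$ is a \emph{right} inverse: for every $g\in C^\infty_0(\mathbb{R}^+)$ the function $f_{\mathrm{part}}=R_G g$ solves $\widetilde{S}f_{\mathrm{part}}=g$ classically, so $\widetilde{S}(R_G g)=g$. I would then upgrade this to the \emph{left}-inverse identity $R_G(\widetilde{S}f)=f$ for all $f\in C^\infty_0(\mathbb{R}^+)$, which is the technical core of the lemma.

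To prove the left-inverse identity, set $g:=\widetilde{S}f\in C^\infty_0(\mathbb{R}^+)$ and consider $u:=R_G g-f\in L^2(\mathbb{R}^+)$. By the right-inverse property $\widetilde{S}u=\widetilde{S}(R_G g)-\widetilde{S}f=g-g=0$, and since the square-integrable solutions of $\widetilde{S}u=0$ form precisely $\ker S^*=\mathrm{span}\{\Phi_\kappa\}$ by \eqref{eq:kerSstar}, we must have $u=c\,\Phi_\kappa$ for some $c\in\mathbb{C}$. To force $c=0$ I would analyse the behaviour as $r\downarrow 0$: because $f$, and hence $g=\widetilde{S}f$, are supported away from the origin, for $r$ below $\mathrm{supp}\,g$ the kernel reduces to $G(r,\rho)=W^{-1}F_\kappa(r)\Phi_\kappa(\rho)$, so that $R_G g(r)=W^{-1}F_\kappa(r)\int_0^{+\infty}\Phi_\kappa(\rho)g(\rho)\,\ud\rho$, which tends to $0$ as $r\downarrow 0$ because $F_\kappa(r)=\mathscr{M}_{\kappa,\frac{1}{2}}(\lambda r)=O(r)$ by \eqref{eq:AsymM0}. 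As $f\equiv 0$ near the origin as well, $u(r)\to 0$; but $\Phi_\kappa(r)\to\Gamma(1-\kappa)^{-1}\neq 0$ by \eqref{eq:AsymM0}, whence $c=0$ and $R_G(\widetilde{S}f)=f$.

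With the two-sided inverse property available, the construction of $\mathscr{S}$ is immediate. The identity $f=R_G(\widetilde{S}f)$ shows that $C^\infty_0(\mathbb{R}^+)\subseteq\ran R_G$, so $\ran R_G$ is dense in $L^2(\mathbb{R}^+)$; since $R_G$ is self-adjoint, $\ker R_G=(\ran R_G)^\perp=\{0\}$ and $R_G$ is injective. I would then define $\mathscr{S}:=R_G^{-1}$ with $\mathcal{D}(\mathscr{S}):=\ran R_G$. The inverse of an injective bounded self-adjoint operator with dense range is itself self-adjoint, since $(R_G^{-1})^*=(R_G^*)^{-1}=R_G^{-1}$; hence $\mathscr{S}$ is self-adjoint, and by construction it has the everywhere-defined bounded inverse $R_G=\mathscr{S}^{-1}$. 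Finally $\mathscr{S}$ extends $S$: for $f\in\mathcal{D}(S)=C^\infty_0(\mathbb{R}^+)$ the left-inverse identity gives $f=R_G(Sf)$, so $f\in\mathcal{D}(\mathscr{S})$ and $\mathscr{S}f=R_G^{-1}f=Sf$.

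I expect the main obstacle to lie in the left-inverse identity $R_G(\widetilde{S}f)=f$, and precisely in the boundary analysis at $r=0$ that excludes a spurious $\Phi_\kappa$-component of $u$; the right-inverse property is standard ODE theory, and the abstract passage from an injective bounded self-adjoint operator to its self-adjoint inverse is routine.
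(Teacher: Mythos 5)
Your proof is correct, but it diverges from the paper's at the decisive steps. The paper never proves (nor needs) your left-inverse identity $R_G(\widetilde{S}f)=f$: it starts from the right-inverse identity $\widetilde{S}R_G\,g=g$ for \emph{all} $g\in L^2(\mathbb{R}^+)$ (asserted as holding by construction of $G$), from which injectivity is immediate ($R_Gg=0\Rightarrow g=\widetilde{S}R_Gg=0$) and density of the range follows from self-adjointness; the extension property is then obtained by duality: every $f=R_Gg\in\mathcal{D}(\mathscr{S})$ satisfies $\mathscr{S}f=g=\widetilde{S}f=S^*f$ because $S^*$ is the maximal realisation \eqref{eq:Sstar_maximal}, whence $\mathscr{S}\subset S^*$, and taking adjoints gives $\overline{S}=S^{**}\subset\mathscr{S}^*=\mathscr{S}$. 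You replace this abstract duality step with concrete analysis: the left-inverse identity on $C^\infty_0(\mathbb{R}^+)$, proved via ODE uniqueness ($u=c\,\Phi_\kappa$) plus the boundary behaviour at $r=0$ forcing $c=0$; from it you obtain the density of $\ran R_G$ and the inclusion $S\subset\mathscr{S}$ directly, with injectivity deduced from density (the reverse order of the paper, which gets density from injectivity). The paper's route is more economical -- no kernel asymptotics are needed at this stage, only the identification of $S^*$ as the maximal operator -- while your route has two genuine merits: it invokes the variation-of-constants formula only on $C^\infty_0$ data, where it is classical ODE theory, whereas the paper's blanket claim $\widetilde{S}R_Gg=g$ for arbitrary $g\in L^2$ tacitly requires that the integral formula solve the equation in the weak sense; and it exhibits the inclusion $S\subset\mathscr{S}$ without passing through adjoints. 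Both arguments are sound.
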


\begin{proof}
 $R_G$ is bounded and self-adjoint (Lemma \ref{lem:RGbddsa}), and by construction satisfies $\widetilde{S}\,R_G\,g=g$ $\forall g\in L^2(\mathbb{R}^+)$. Therefore, $R_G g=0$ for some $g\in L^2(\mathbb{R}^+)$ implies $g=0$, i.e., $R_G$ is injective. Then $R_G$ has dense range ($(\mathrm{ran}\,R_G)^\perp=\ker R_G$). As such (see, e.g., \cite[Theorem 1.8(iv)]{schmu_unbdd_sa}), $\mathscr{S}:=R_G^{-1}$ is self-adjoint. One thus has $R_G=\mathscr{S}^{-1}$ and from the identity $S^*R_G=\mathbbm{1}$ on $L^2(\mathbb{R}^+)$ one deduces that for any $f\in\mathcal{D}(\mathscr{S})$, say, $f=R_G g=\mathscr{S}^{-1} g$ for some $g\in L^2(\mathbb{R}^+)$, the identity $S^*f=\mathscr{S}f$ holds. This means that $S^*\supset\mathscr{S}$, whence also $\overline{S}=S^{**}\subset\mathscr{S}$, i.e., $\mathscr{S}$ is a self-adjoint extension of $S$.
\end{proof}

Next, we recall the following concerning the form of the Friedrichs extension. Let us define
 \begin{equation}\label{eq:NormF}
  \|f\|_F^2\;:=\;\langle f,Sf\rangle+\langle f,f\rangle\,,
 \end{equation}
 which, for $f\in C^\infty_0(\mathbb{R}^+)$, is a norm equal to
 \begin{equation}\label{eq:NormF2}
  \|f\|_F^2\;=\;\|f'\|_{L^2}^2+\nu\|r^{-\frac{1}{2}}f\|_{L^2}^2+(\textstyle{\frac{\nu^2}{\,4\,\kappa^2}}+1)\|f\|_{L^2}^2\,.
 \end{equation}

\begin{lemma}\label{lem:Fform}
 The quadratic form of the Friedrichs extension of $S$ is given by
 \begin{equation}\label{eq:Fform}
  \begin{split}
  \mathcal{D}[S_F]\;&=\;\big\{f\in L^2(\mathbb{R}^+)\,\big|\, \|f'\|_{L^2}^2+\nu\,\|r^{-\frac{1}{2}}f\|_{L^2}^2+\|f\|_{L^2}^2<+\infty\big\} \\
   S_F[f,h]\;&=\;\int_0^{+\infty}\!\!\Big(\,\overline{f'(r)}h'(r)+\nu\,\frac{\,\overline{f(r)}h(r)}{r}+\frac{\nu^2}{\,4\,\kappa^2}\,\overline{f(r)}h(r)\Big)\ud r\,.
  \end{split}
 \end{equation}
 \end{lemma}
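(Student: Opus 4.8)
The plan is to invoke the abstract characterisation of the Friedrichs extension: $\mathcal{D}[S_F]$ is the completion of $\mathcal{D}(S)=C^\infty_0(\mathbb{R}^+)$ with respect to the form norm $\|\cdot\|_F$ of \eqref{eq:NormF}, realised as a subspace of $L^2(\mathbb{R}^+)$, and $S_F[f,h]$ is the jointly continuous extension of $\langle f,Sh\rangle$ to this completion (see, e.g., \cite{schmu_unbdd_sa}). First I would integrate by parts on $C^\infty_0(\mathbb{R}^+)$ to pass from \eqref{eq:NormF} to the explicit expression \eqref{eq:NormF2}, and likewise rewrite $\langle f, Sh\rangle$ in the symmetric form appearing in \eqref{eq:Fform}; the boundary terms vanish because the functions are compactly supported away from the origin. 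The whole content of the lemma is then to identify this $\|\cdot\|_F$-completion with the set described in \eqref{eq:Fform} and to verify that the form acts there as stated.

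The key analytic input is that the Coulomb term $\nu/r$ is infinitesimally form-bounded relative to $-\ud^2/\ud r^2$. I would derive this from the classical Hardy inequality $\int_0^{+\infty} r^{-2}|f|^2\,\ud r \le 4\int_0^{+\infty}|f'|^2\,\ud r$ on $C^\infty_0(\mathbb{R}^+)$ together with the elementary pointwise bound $r^{-1}\le \delta\, r^{-2}+(4\delta)^{-1}$, which yields $\int_0^{+\infty} r^{-1}|f|^2\,\ud r \le 4\delta\,\|f'\|_{L^2}^2 + (4\delta)^{-1}\|f\|_{L^2}^2$ for every $\delta>0$. With $\delta$ small, the upper bound $\|f\|_F \lesssim \|f\|_{H^1}$ is immediate since the potential term is dominated; for the lower bound, when $\nu<0$ the sign-indefinite contribution is absorbed by combining this form bound with the strictly positive bottom $S\ge \tfrac14\nu^2(\kappa^{-2}-1)$ guaranteed by \eqref{eq:conditions_kappa}. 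Concretely, a convex combination of $\|f\|_F^2\ge (1+\tfrac14\nu^2(\kappa^{-2}-1))\|f\|_{L^2}^2$ with the form-bound estimate recovers $\|f\|_F^2 \gtrsim \|f'\|_{L^2}^2+\|f\|_{L^2}^2$. Hence $\|\cdot\|_F$ and the $H^1$-norm are equivalent on $C^\infty_0(\mathbb{R}^+)$, the two completions coincide, and $\mathcal{D}[S_F]=\overline{C^\infty_0(\mathbb{R}^+)}^{\,\|\cdot\|_{H^1}}=H^1_0(\mathbb{R}^+)$.

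It then remains to check that $H^1_0(\mathbb{R}^+)$ is exactly the set in \eqref{eq:Fform}, i.e.\ that membership is equivalent to $f\in L^2$ with $f'\in L^2$ and $r^{-1/2}f\in L^2$. For $f\in H^1_0(\mathbb{R}^+)$ the trace $f(0)=0$ is defined, and the estimate $|f(r)|^2\le r\int_0^r|f'|^2$ gives $\int_0^1 r^{-1}|f|^2\,\ud r\le \|f'\|_{L^2}^2<+\infty$, so all three terms are finite; conversely $\int_0^{+\infty}r^{-1}|f|^2\,\ud r<+\infty$ forces $f(0)=0$ for $f\in H^1(\mathbb{R}^+)$, placing $f$ in $H^1_0$. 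This is also where I would clarify the only mildly delicate reading of \eqref{eq:Fform}: when $\nu<0$ the three summands must each be finite, so that no $\infty-\infty$ ambiguity arises, and the norm equivalence already forces $f'\in L^2$, whence the middle term is finite by the Hardy bound. Finally the identity for $S_F[f,h]$ is extended from $C^\infty_0(\mathbb{R}^+)$ to all of $\mathcal{D}[S_F]$ by joint $\|\cdot\|_F$-continuity. The main obstacle is precisely the coercivity of $\|\cdot\|_F$ in the attractive case $\nu<0$, where the sign-indefinite potential term must be controlled; this is resolved by the interplay between the infinitesimal Hardy form bound and the strictly positive bottom of $S$.
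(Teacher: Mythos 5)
Your proof is correct, and its core machinery --- the characterisation of $\mathcal{D}[S_F]$ as the $\|\cdot\|_F$-completion of $C^\infty_0(\mathbb{R}^+)$ inside $L^2(\mathbb{R}^+)$, plus Hardy's inequality $\|r^{-1}f\|_{L^2}\leqslant 2\|f'\|_{L^2}$ to dominate the Coulomb term --- is exactly the machinery the paper uses; the difference is one of organisation and of completeness. The paper's own proof of this lemma is a two-line appeal to the standard Friedrichs construction, asserting that \eqref{eq:Fform} ``follows at once'' from \eqref{eq:NormF}--\eqref{eq:NormF2}, and it defers the Hardy-based norm equivalence $\|\cdot\|_F\approx\|\cdot\|_{H^1}$ on $C^\infty_0(\mathbb{R}^+)$ to the subsequent Lemma \ref{lem:DomSFrid}, which identifies $\mathcal{D}[S_F]=H^1_0(\mathbb{R}^+)$. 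You instead prove that equivalence inside this proof (your convex combination of the bottom bound $S\geqslant\tfrac14\nu^2(\kappa^{-2}-1)$ with the $\delta$-dependent Hardy estimate is a repackaging of the paper's direct choice of $\varepsilon$ subject to $\nu^2<|\nu|/\varepsilon<1+\nu^2/(4\kappa^2)$; both hinge on $0<|\kappa|<\tfrac12$), and you then add a step the paper never spells out: that $H^1_0(\mathbb{R}^+)$ coincides with the set written in \eqref{eq:Fform}, read so that each summand is finite, via the trace estimate $|f(r)|^2\leqslant r\int_0^r|f'|^2\,\ud r$ and the converse observation that $r^{-1/2}f\in L^2$ forces $f(0)=0$ for $f\in H^1(\mathbb{R}^+)$. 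That last step is a genuine gain in rigour: the abstract form-norm completion is not automatically the ``naive finite-energy set'', and identifying the two is precisely where Hardy's inequality enters. In short, you prove this lemma together with the substance of Lemma \ref{lem:DomSFrid} in a single pass, which makes your argument longer than the paper's proof of this particular statement but self-contained and, as a justification of the set-theoretic equality claimed in \eqref{eq:Fform}, more airtight.
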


\begin{proof}
 A standard construction (see, e.g., \cite[Theorem A.2]{GMO-KVB2017}), that follows from the fact that $\mathcal{D}[S_F]$ is the closure of $\mathcal{D}(S)=C^\infty_0(\mathbb{R}^+)$ in the norm $\|\cdot\|_F$: then \eqref{eq:Fform} follows at once from \eqref{eq:NormF}-\eqref{eq:NormF2}.
\end{proof}

In fact, the Friedrichs form domain is a classical functional space.

\begin{lemma}\label{lem:DomSFrid}
$\mathcal{D}[S_F]\:=\:H^1_0(\mathbb{R}^+)\::=\:\overline{C^\infty_0(\mathbb{R}^+)}^{\Vert \, \Vert_{H^1}}$.
\end{lemma}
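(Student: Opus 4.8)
The plan is to prove that the two spaces coincide by showing that the form norm $\|\cdot\|_F$ of \eqref{eq:NormF2} is \emph{equivalent}, on the common core $C^\infty_0(\mathbb{R}^+)$, to the Sobolev norm $\|\cdot\|_{H^1}$. Indeed, as recorded in the proof of Lemma \ref{lem:Fform}, $\mathcal{D}[S_F]$ is by construction the closure of $C^\infty_0(\mathbb{R}^+)$ in $\|\cdot\|_F$, whereas $H^1_0(\mathbb{R}^+)$ is its closure in $\|\cdot\|_{H^1}$. If the two norms are equivalent on $C^\infty_0(\mathbb{R}^+)$, then a sequence is $\|\cdot\|_F$-Cauchy if and only if it is $\|\cdot\|_{H^1}$-Cauchy; since convergence in either norm forces $L^2$-convergence, the two completions, viewed as subspaces of $L^2(\mathbb{R}^+)$, are literally the same set, which is the assertion.

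To establish the equivalence I would first isolate the only term by which $\|f\|_F^2$ differs from a multiple of $\|f\|_{H^1}^2$, namely the Coulomb form $\nu\int_0^{+\infty} r^{-1}|f(r)|^2\,\ud r$. The key point is that this term is form-bounded relative to the kinetic term $\|f'\|_{L^2}^2$ with \emph{relative bound zero}. This follows from the one-dimensional Hardy inequality $\int_0^{+\infty} r^{-2}|f|^2\,\ud r\leq 4\|f'\|_{L^2}^2$, valid for $f\in C^\infty_0(\mathbb{R}^+)$, combined with Cauchy--Schwarz and the arithmetic-geometric mean inequality:
\[
 \int_0^{+\infty}\frac{|f(r)|^2}{r}\,\ud r\;\leq\;\Big(\int_0^{+\infty}\frac{|f|^2}{r^2}\,\ud r\Big)^{1/2}\|f\|_{L^2}\;\leq\;2\,\|f'\|_{L^2}\|f\|_{L^2}\;\leq\;\varepsilon\,\|f'\|_{L^2}^2+\frac{1}{\varepsilon}\,\|f\|_{L^2}^2\,,
\]
which holds for every $\varepsilon>0$.

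With this estimate in hand both inequalities of the equivalence follow by elementary algebra. For the upper bound one either discards the Coulomb term (if $\nu<0$) or estimates it as above (if $\nu>0$) to obtain $\|f\|_F^2\leq C\,\|f\|_{H^1}^2$. For the lower bound one solves \eqref{eq:NormF2} for $\|f'\|_{L^2}^2$ and chooses $\varepsilon$ small enough that $|\nu|\varepsilon<1$, so that the displayed bound absorbs the (possibly negative) Coulomb contribution into a fixed fraction of $\|f'\|_{L^2}^2$; the residual multiple of $\|f\|_{L^2}^2$ is then controlled by the fact that $S$ has strictly positive bottom, whence $\|f\|_F^2\geq\|f\|_{L^2}^2$, giving $\|f\|_{H^1}^2\leq C'\,\|f\|_F^2$. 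The main obstacle is precisely this lower bound in the attractive case $\nu<0$, where the Coulomb term \emph{lowers} the form and one must rule out that it overwhelms the kinetic energy; the relative-bound-zero character of the Hardy estimate is exactly what guarantees it cannot.
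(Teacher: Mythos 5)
Your proof is correct and follows essentially the same route as the paper's: Hardy's inequality makes the Coulomb term form-bounded with relative bound zero by the kinetic term, whence $\|\cdot\|_F$ and $\|\cdot\|_{H^1}$ are equivalent norms on $C^\infty_0(\mathbb{R}^+)$ and the two completions, both sitting inside $L^2(\mathbb{R}^+)$, coincide. The only (harmless) variation is in the lower bound: the paper chooses $\varepsilon$ so that the coefficients of $\|f'\|_{L^2}^2$ and of $\|f\|_{L^2}^2$ are \emph{simultaneously} strictly positive, which is possible because $\nu^2<1+\frac{\nu^2}{4\kappa^2}$ owing to $0<|\kappa|<\frac{1}{2}$, whereas you absorb a possibly negative $L^2$-coefficient by invoking $\|f\|_F^2\geqslant\|f\|_{L^2}^2$, i.e., the strictly positive bottom of $S$.
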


\begin{proof}
Hardy's inequality
\begin{equation*}
\Vert r^{-1} f \Vert_{L^2} \;\leqslant\; 2 \,\Vert f' \Vert_{L^2}\qquad \forall\,f \in C^\infty_0(\mathbb{R}^+)
\end{equation*}
implies
\begin{equation*}
\Vert r^{-\frac{1}{2}} f \Vert^2_{L^2} \;\leqslant\; \frac{\varepsilon}{4} \,\Vert r^{-1} f \Vert^2_{L^2} +\frac{1}{\varepsilon}\, \Vert f \Vert^2_{L^2} \;\leqslant\; \varepsilon \Vert f' \Vert_{L^2}^2 + \varepsilon^{-1} \Vert f \Vert_{L^2}^2
\end{equation*}
for arbitrary $\varepsilon>0$. This and \eqref{eq:NormF2} imply on the one hand $\|f\|_F\lesssim\|f\|_{H^1}$, and on the other hand
\[
 \|f\|_F^2\;\geqslant\;(1-|\nu|\varepsilon)\|f'\|_{L^2}^2+(1+\textstyle{\frac{\nu^2}{\,4\,\kappa^2}}-|\nu|\varepsilon^{-1})\|f\|_{L^2}^2\,.
\]
The r.h.s.~above is equivalent to the $H^1$-norm provided that the coefficients of $\|f'\|_{L^2}^2$ and $\|f\|_{L^2}^2$ are strictly positive, which is the same as
\[
 \nu^2\;<\;\frac{|\nu|}{\varepsilon}\;<\;1+\frac{\nu^2}{\,4\,\kappa^2}\,.
\]
For given $\nu$ and $\kappa$, a choice of $\varepsilon>0$ satisfying the inequalities above is always possible, because $\nu^2<1+\frac{\nu^2}{\,4\,\kappa^2}$, or equivalently, $1+\nu^2(\frac{1}{4\kappa^2}-1)>0$, which is true owing to the assumption $0<|\kappa|<\frac{1}{2}$. We have therefore shown that $\|f\|_F\approx\|f\|_{H^1}$ in the sense of the equivalence of norms on $C^\infty_0(\mathbb{R}^+)$. Now, the $\|\cdot\|_F$-completion of $C^\infty_0(\mathbb{R}^+)$ is by definition $\mathcal{D}[S_F]$, whereas the $\|\cdot\|_{H^1}$-completion is $H^1_0(\mathbb{R}^+)$: the Lemma is therefore proved.
\end{proof}

Let us now highlight the following feature of $\mathrm{ran} \,R_G$.

\begin{lemma}\label{lem:ranRGinDrinv}
For every $g \in L^2(\mathbb{R}^+)$ one has
\begin{equation}\label{eq:Integral}
\int_0^{+\infty} \frac{| (R_G g)(r) |^2}{r^2} \ud r \;<\; +\infty\,,
\end{equation}
i.e.,
\begin{equation}
\mathrm{ran} \, R_G \subset \mathcal{D}(r^{-1})\,.
\end{equation}
\end{lemma}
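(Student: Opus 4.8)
The plan is to prove the weighted integrability estimate \eqref{eq:Integral} directly from the explicit structure of the Green kernel $G(r,\rho)$ in \eqref{eq:Green}, using the same four-region decomposition already employed in the proof of Lemma \ref{lem:RGbddsa}. The key observation is that the only place where the weight $r^{-2}$ could cause trouble is near $r=0$, since for $r$ bounded away from zero the weight is harmless and the bound \eqref{eq:Integral} follows immediately from the boundedness of $R_G$ established in Lemma \ref{lem:RGbddsa}. Thus the entire content of the Lemma is a small-$r$ analysis, and I would isolate it by writing $R_G g = (R_G g)\mathbf{1}_{(0,1)} + (R_G g)\mathbf{1}_{(1,+\infty)}$ and dismissing the second piece at once.

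For the small-$r$ part I would exploit the fact that $F_\kappa(r)=\mathscr{M}_{\kappa,\frac12}(\lambda r)$ vanishes linearly at the origin: from the asymptotics \eqref{eq:AsymM0} one has $F_\kappa(r)\sim \lambda r$ as $r\downarrow 0$, so that $F_\kappa(r)/r$ stays bounded near zero. For $r\in(0,1)$ the defining integral splits as
\begin{equation*}
 (R_G g)(r)\;=\;\frac{1}{W}\Big(\Phi_\kappa(r)\!\int_0^r F_\kappa(\rho)g(\rho)\,\ud\rho+F_\kappa(r)\!\int_r^{+\infty}\!\Phi_\kappa(\rho)g(\rho)\,\ud\rho\Big)\,.
\end{equation*}
The second term is controlled because $F_\kappa(r)/r$ is bounded and $\int_r^{+\infty}\Phi_\kappa(\rho)g(\rho)\,\ud\rho$ is uniformly bounded in $r$ by Cauchy-Schwarz (using that $\Phi_\kappa\in L^2$ at infinity, with only an integrable $\rho^2\ln^2\rho$-type singularity at zero from \eqref{eq:AsymM0}). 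The delicate term is the first one, where $\Phi_\kappa(r)/r$ behaves like $r^{-1}\ln r$ near zero, so I must show that $r^{-1}\Phi_\kappa(r)\int_0^r F_\kappa(\rho)g(\rho)\,\ud\rho$ is square-integrable on $(0,1)$.

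The main obstacle is exactly this first term: the factor $r^{-2}\Phi_\kappa(r)^2\sim r^{-2}(\ln r)^2$ is badly non-integrable on its own, and one must extract enough decay from the inner integral $\int_0^r F_\kappa(\rho)g(\rho)\,\ud\rho$ to compensate. The plan is to estimate $\big|\int_0^r F_\kappa(\rho)g(\rho)\,\ud\rho\big|\lesssim r^{3/2}\|g\|_{L^2}$ by Cauchy-Schwarz together with $|F_\kappa(\rho)|\lesssim\rho$, giving $\big(\int_0^r\rho^2\,\ud\rho\big)^{1/2}=O(r^{3/2})$. Combining, the integrand $r^{-2}\,\big|\Phi_\kappa(r)\big|^2\,\big|\int_0^r F_\kappa g\big|^2\lesssim r^{-2}(\ln r)^2\,r^3\|g\|_{L^2}^2=r\,(\ln r)^2\|g\|_{L^2}^2$, which is manifestly integrable on $(0,1)$. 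I would then assemble the three contributions, invoke Minkowski's inequality to combine the pieces, and conclude \eqref{eq:Integral} with an implicit constant depending only on $\nu$ and $\kappa$; the final inclusion $\mathrm{ran}\,R_G\subset\mathcal{D}(r^{-1})$ is an immediate restatement.
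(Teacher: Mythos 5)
Your proposal is correct and follows essentially the same route as the paper's proof: reduce to $r\in(0,1)$ via the boundedness of $R_G$ from Lemma \ref{lem:RGbddsa}, split the Green-kernel representation into the $\Phi_\kappa(r)\int_0^r F_\kappa g$ and $F_\kappa(r)\int_r^{+\infty}\Phi_\kappa g$ pieces, and use Cauchy--Schwarz with $|F_\kappa(\rho)|\lesssim\rho$ to get the $O(r^{3/2})$ and $O(r)$ behaviours that make the weighted integral finite. One small inaccuracy: by \eqref{eq:AsymM0}, $\Phi_\kappa(r)\to 1/\Gamma(1-\kappa)\neq 0$ as $r\downarrow 0$ (the logarithm only enters the subleading $O(r\ln r)$ term), so $\Phi_\kappa(r)/r$ behaves like $r^{-1}$ rather than $r^{-1}\ln r$; since you use this only as an upper bound, your estimate remains valid (indeed slightly wasteful), and the conclusion is unaffected.
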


\begin{proof}
It suffices to prove the finiteness of the integral in \eqref{eq:Integral} only for $r \in (0,1)$, since $\int_1^{+\infty} r^{-2} | (R_Gg)(r) |^2 \,\ud r \leqslant \Vert R_G \Vert^2 \Vert g \Vert^2_{L^2}$. Owing to \eqref{eq:value_of_W} and \eqref{eq:Green},
\begin{equation*}\tag{*}
|(R_Gg)(r)|\;\lesssim\; |\Phi_\kappa(r)| \int_0^r |F_\kappa(\rho) g(\rho) | \, \ud \rho + |F_\kappa(r)| \int_0^{+\infty} |\Phi_\kappa(\rho) g(\rho)| \,\ud \rho\,.
\end{equation*}
We then exploit the asymptotics \eqref{eq:AsymM0}.
The first summand in the r.h.s.~above as a $O(r^{3/2})$-quantity as $r\downarrow 0$, because in this limit  $\Phi_\kappa$ is smooth and bounded, whereas $F_\kappa$ is smooth and vanishes as $O(r)$, and therefore
\begin{equation*}
\int_0^r |F_\kappa(\rho) g(\rho)| \,\ud \rho \;\leqslant\sup_{\rho \in [0,r]}|F_\kappa(\rho)|\, \Vert g \Vert_{L^2} \,r^{1/2}\,=\, O(r^{3/2}) \,.
\end{equation*} 
The second  summand in the r.h.s.~of (*) is a $O(r)$-quantity as $r\downarrow 0$, because so is $F_\kappa(r)$ and because $\int_0^{+\infty}|\Phi_\kappa(\rho) g(\rho)|\, \ud \rho \leqslant \Vert \Phi_\kappa \Vert_{L^2} \Vert g \Vert_{L^2}$. Thus, $(R_Gg)(r)=O(r)$ as $r\downarrow 0$, whence the integrability of $r^{-2}|(R_Gg)(r)|^2$ at zero.
\end{proof}

We can finally prove that $R_G=S_F^{-1}$.

\begin{proof}[Proof of Proposition \ref{eq:RGisSFinv}]
 $R_G=\mathscr{S}^{-1}$ for some $\mathscr{S}=\mathscr{S}^*\supset S$ (Lemma \ref{eq:RGinvertsExtS}), and we want to conclude that $\mathscr{S}=S_F$. This follows if we show that $\mathcal{D}(\mathscr{S})\subset \mathcal{D}[S_F]$, owing to the well-known property of $S_F$ that distinguishes it from all other self-adjoint extensions of $S$.
 
 Let us then pick a generic $f=R_G g\in\mathrm{ran}\,R_G=\mathcal{D}(\mathscr{S})$ for some $g\in L^2(\mathbb{R}^+)$ and show that $S_F[f]:=S_F[f,f]<+\infty$, the form of $S_F$ being given by Lemma \ref{lem:Fform}. The fact that $\|f\|_{L^2}^2$ is finite is obvious, and the finiteness of $\|r^{-\frac{1}{2}}f\|_{L^2}^2$ follows by interpolation from Lemma \ref{lem:ranRGinDrinv}. We are thus left with proving that $\|f'\|_{L^2}^2<+\infty$, and the conclusion then follows from \eqref{eq:Fform}.
 
 Now, $f\in\mathcal{D}(S^*)$ and therefore $-f''+\frac{\nu}{r} f+\frac{\nu^2}{\,4\,\kappa^2} f=g\in L^2(\mathbb{R}^+)$: this, and the already mentioned square-integrability of $f$ and $r^{-1}f$, yield $f''\in L^2(\mathbb{R}^+)$. It is then standard (see, e.g., \cite[Remark 4.21]{Grubb-DistributionsAndOperators-2009}) to deduce that $f'$ too belongs to $L^2(\mathbb{R}^+)$, thus concluding the proof.
%
\end{proof}

For later purposes we set for convenience
\begin{equation}\label{eq:Psikappa}
 \Psi_\kappa\;:=\;S_F^{-1}\Phi_\kappa\;=\;R_G\Phi_\kappa
\end{equation}
and we prove the following.

\begin{lemma}\label{lem:RGPhi_atzero} One has
\begin{equation}\label{eq:RGPhi_atzero}
\Psi_\kappa(r) \;=\; \Gamma(1- \kappa) \Vert \Phi_\kappa \Vert_{L^2}^2\, r + O(r^2)\qquad\textrm{as } r \downarrow 0\,.
\end{equation}
\end{lemma}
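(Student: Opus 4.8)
The plan is to insert the explicit integral representation $\Psi_\kappa = R_G\Phi_\kappa$ afforded by the Green kernel \eqref{eq:Green} and to extract the leading behaviour as $r\downarrow 0$ directly from the short-distance asymptotics \eqref{eq:AsymM0}. Splitting the defining integral at $\rho=r$ according to the two branches of $G$ in \eqref{eq:Green}, I would write
\[
 \Psi_\kappa(r)\;=\;\frac{1}{W}\,\Phi_\kappa(r)\int_0^r F_\kappa(\rho)\,\Phi_\kappa(\rho)\,\ud\rho\;+\;\frac{1}{W}\,F_\kappa(r)\int_r^{+\infty}\Phi_\kappa(\rho)^2\,\ud\rho\,,
\]
so that the task reduces to controlling the two summands separately. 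Recall from \eqref{eq:FPhi} and \eqref{eq:AsymM0} that, as $r\downarrow 0$, $F_\kappa(r)=\lambda r+O(r^2)$ with $\lambda=-\nu/\kappa$, while $\Phi_\kappa(r)=\Gamma(1-\kappa)^{-1}+O(r\ln r)$ is in particular bounded near the origin.

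For the first summand, the integrand $F_\kappa(\rho)\Phi_\kappa(\rho)$ vanishes linearly, being equal to $\lambda\,\Gamma(1-\kappa)^{-1}\rho+O(\rho^2\ln\rho)$, so its primitive over $[0,r]$ is $O(r^2)$; multiplying by the bounded factor $\Phi_\kappa(r)$ keeps this summand of order $O(r^2)$, i.e.\ it contributes only to the remainder in \eqref{eq:RGPhi_atzero}. The leading term therefore comes entirely from the second summand. Here I would use that $\Phi_\kappa\in L^2(\mathbb{R}^+)$ and is bounded near zero, whence
\[
 \int_r^{+\infty}\Phi_\kappa(\rho)^2\,\ud\rho\;=\;\|\Phi_\kappa\|_{L^2}^2-\int_0^r\Phi_\kappa(\rho)^2\,\ud\rho\;=\;\|\Phi_\kappa\|_{L^2}^2+O(r)\,,
\]
the error being $O(r)$ precisely because $\Phi_\kappa(0)=\Gamma(1-\kappa)^{-1}$ is finite. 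Combined with $F_\kappa(r)=\lambda r+O(r^2)$, the second summand equals $\frac{\lambda}{W}\|\Phi_\kappa\|_{L^2}^2\,r+O(r^2)$. Invoking the Wronskian value \eqref{eq:value_of_W}, namely $W=\lambda/\Gamma(1-\kappa)$, gives $\lambda/W=\Gamma(1-\kappa)$, and adding back the $O(r^2)$ contribution of the first summand yields exactly \eqref{eq:RGPhi_atzero}.

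There is no genuine obstacle here beyond careful bookkeeping of the error terms: the one point that must be handled with care is that both the linear coefficient and the $O(r^2)$ remainder are controlled, which rests entirely on the boundedness of $\Phi_\kappa$ and the linear vanishing of $F_\kappa$ at the origin supplied by \eqref{eq:AsymM0}. The structural reason the answer is clean is that the \emph{regular} solution $F_\kappa$ vanishes like $\lambda r$ whereas the \emph{singular} solution $\Phi_\kappa$ tends to the nonzero constant $\Gamma(1-\kappa)^{-1}$, so that the off-diagonal part of the kernel reproduces precisely the factor $\Gamma(1-\kappa)\,\|\Phi_\kappa\|_{L^2}^2$.
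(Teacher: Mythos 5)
Your proposal is correct and follows essentially the same route as the paper's own proof: the same splitting of $R_G\Phi_\kappa$ into the two branches of the Green kernel \eqref{eq:Green}, the same use of the asymptotics \eqref{eq:AsymM0} to show the first summand is $O(r^2)$ and that $\int_r^{+\infty}\Phi_\kappa^2\,\ud\rho=\|\Phi_\kappa\|_{L^2}^2+O(r)$, and the same Wronskian cancellation $\lambda/W=\Gamma(1-\kappa)$ producing the stated linear coefficient.
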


\begin{proof}
Owing to \eqref{eq:value_of_W} and \eqref{eq:Green},
\begin{equation*}
(R_G \Phi_\kappa)(r) \;= \; -\frac{\kappa\,\Gamma(1-\kappa)}{\nu} \Big(\Phi_\kappa(r)\! \int_0^r F_\kappa(\rho) \Phi_\kappa(\rho) \,\ud \rho + F_\kappa(r)\! \int_r^{+\infty} \!\!\Phi_\kappa^2(\rho) \,\ud \rho\Big)\,.
\end{equation*}
As $r\downarrow 0$, \eqref{eq:AsymM0} and \eqref{eq:FPhi} imply that the first summand behaves as
\begin{equation*}
{\textstyle -\frac{\,\kappa\Gamma(1-\kappa)}{\nu}}\,\big({\textstyle\frac{1}{\Gamma(1-\kappa)}}+ O(r \ln r)\big)\! \int_0^r \!\big({\textstyle-\frac{\nu}{\kappa}} \rho+O(\rho^2)\big) \big({\textstyle\frac{1}{\Gamma(1-\kappa)}}+O(\rho\ln\rho) \big) \, \ud \rho\,,
\end{equation*}
which, after some simplifications, becomes
\begin{equation*}
\frac{1}{\,2\,\Gamma(1-\kappa)}\, r^2 + O(r^3 \ln r)\,.
\end{equation*}
The second summand turns out to be the leading term: indeed, as $r\downarrow 0$, $\int_r^\infty \Phi_\kappa^2\,\ud\rho = \| \Phi_\kappa \|_{L^2(\mathbb{R}^+)}^2 + O(r)$ and hence 
\begin{equation*}
-\frac{\kappa\,\Gamma(1-\kappa)}{\nu} \, F_\kappa(r)\! \int_r^{+\infty}\!\! \Phi_\kappa^2(\rho) \,\ud \rho\;=\;\Gamma(1- \kappa) \Vert \Phi_\kappa \Vert_{L^2}^2\, r + O(r^2)\,,
\end{equation*}
which completes the proof.
\end{proof}

\subsection{Operators $\overline{S}$, $S_F$, and $S^*$}\label{sec:SclosureSstar}~

In general (see \cite[Theorem 2.2]{GMO-KVB2017} and \cite[Eq.~(2.6)]{GMO-KVB2017}), the space $\mathcal{D}(S^*)$ implicitly qualified in \eqref{eq:Sstar_maximal} and the space $\mathcal{D}(S_F)$ have the following internal structure:
\begin{eqnarray}
 \mathcal{D}(S^*)&=&\mathcal{D}(\overline{S})\dotplus S_F^{-1}\ker S^*\dotplus\ker S^* \\
 \mathcal{D}(S_F)&=&\mathcal{D}(\overline{S})\dotplus S_F^{-1}\ker S^*\,.
\end{eqnarray}
Owing to \eqref{eq:kerSstar} and to \eqref{eq:Psikappa}, this reads
\begin{eqnarray}
 \mathcal{D}(S^*)&=&\big\{ g=f+ c_1 \Psi_\kappa +c_0\Phi_\kappa\,|\,f\in\mathcal{D}(\overline{S}),\,c_0,c_1\in\mathbb{C}\big\}\label{eq:DSstar_firstversion} \\
 \mathcal{D}(S_F)&=&\mathcal{D}(\overline{S})\dotplus \mathrm{span}\{\Psi_\kappa\}\,. \label{eq:DSF_firstversion}
\end{eqnarray}

Let us focus on the space $\mathcal{D}(\overline{S})$. As observed, e.g., in \cite[Prop.~3.1(i)-(ii)]{DR-2017}, the functions in $\mathcal{D}(\overline{S})$ display the following features.
\begin{lemma}\label{lem:Derez}
Let $f \in \mathcal{D}(\overline{S})$. Then the functions $f$ and $f'$
\begin{itemize}
 \item[(i)] are continuous on $\mathbb{R}^+$ and vanish as $r\to +\infty$;
 \item[(ii)] vanish as $r\downarrow 0$ as
 \begin{equation}\label{eq:ffprimeDSclosure}
\begin{split}
f(r) = o(r^{3/2})\,, \qquad f'(r) = o (r^{1/2})\,.
\end{split}
\end{equation}
\end{itemize}
\end{lemma}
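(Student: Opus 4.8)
The plan is to derive everything from two facts: that $f\in\mathcal D(\overline S)\subset\mathcal D(S^*)$, so that $\widetilde S f=:g\in L^2(\mathbb{R}^+)$ by \eqref{eq:Sstar_maximal}, and that $f$ lies in the \emph{minimal} domain, which forces the boundary data at the origin to vanish. I would first settle the regularity and the behaviour at infinity (part (i)). Since $-f''+\big(\frac{\nu}{r}+\frac{\nu^2}{4\kappa^2}\big)f=g$ with $g\in L^2$, and the potential is smooth and bounded on every $[\epsilon,R]\subset\mathbb{R}^+$, one gets $f''\in L^2_{\mathrm{loc}}(\mathbb{R}^+)$, hence $f\in H^2_{\mathrm{loc}}(\mathbb{R}^+)\hookrightarrow C^1(\mathbb{R}^+)$, which gives continuity of $f$ and $f'$. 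Near infinity the potential is bounded, so $f''=\big(\frac{\nu}{r}+\frac{\nu^2}{4\kappa^2}\big)f-g\in L^2([1,+\infty))$; together with $f\in L^2$ this yields $f\in H^2([1,+\infty))$ by the same standard argument already invoked in the proof of Proposition \ref{eq:RGisSFinv} (namely \cite[Remark 4.21]{Grubb-DistributionsAndOperators-2009}), whence $f,f'\in H^1([1,+\infty))$ and both are continuous and vanish as $r\to+\infty$.

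For the sharp behaviour at the origin (part (ii)) I would use the fundamental system $(F_\kappa,\Phi_\kappa)$ of \eqref{eq:FPhi} and variation of constants. Since $\widetilde S f=g$, on $(0,1)$ one may write
\[
 f\;=\;a\,F_\kappa+b\,\Phi_\kappa+f_{\mathrm{part}},\qquad
 f_{\mathrm{part}}(r)=\tfrac1W\Big(\Phi_\kappa(r)\!\int_0^r\! F_\kappa g\,\ud\rho-F_\kappa(r)\!\int_0^r\!\Phi_\kappa g\,\ud\rho\Big),
\]
with $a,b\in\mathbb{C}$ and $W$ as in \eqref{eq:value_of_W}; the base point $0$ is admissible because $F_\kappa g,\Phi_\kappa g\in L^1(0,1)$. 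The decisive point is that $f_{\mathrm{part}}$ is \emph{negligible} at the stated orders: by Cauchy--Schwarz and $\int_0^r|g|^2\,\ud\rho=o(1)$ (absolute continuity of the integral of $g\in L^2$), the asymptotics \eqref{eq:AsymM0} give $\int_0^rF_\kappa g=o(r^{3/2})$ and $\int_0^r\Phi_\kappa g=o(r^{1/2})$, whence, using $\Phi_\kappa=O(1)$, $F_\kappa=O(r)$, $\Phi_\kappa'=O(\ln r)$, $F_\kappa'=O(1)$, one finds $f_{\mathrm{part}}(r)=o(r^{3/2})$ and $f_{\mathrm{part}}'(r)=o(r^{1/2})$. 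Thus it only remains to prove $a=b=0$.

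To kill $a$ and $b$ I would exploit that $f$ belongs to the minimal domain through the boundary (Wronskian) form at the origin, which is well defined here because $\widetilde S$ is limit circle at $0$ (as reflected in $\dim\ker S^*=1$, \eqref{eq:kerSstar}). Since $\overline S=(S^*)^*$, for every $v\in\mathcal D(S^*)$ the adjoint identity $\langle\overline S f,v\rangle=\langle f,S^*v\rangle$ together with Green's formula (the potential being real, so that $\widetilde S$ is formally self-adjoint) gives
\[
 \lim_{R\to+\infty}\!\big(\overline f v'-\overline{f'}v\big)(R)\;=\;\lim_{\epsilon\downarrow0}\big(\overline f v'-\overline{f'}v\big)(\epsilon)\,.
\]
Choosing $v=\chi F_\kappa$, where $\chi$ is a smooth cut-off equal to $1$ near $0$ and supported in $[0,1]$ (so that $\chi F_\kappa\in\mathcal D(S^*)$ even though $F_\kappa\notin L^2$ at infinity), makes the right endpoint drop and, inserting $f=aF_\kappa+b\Phi_\kappa+f_{\mathrm{part}}$ and using $\Phi_\kappa F_\kappa'-\Phi_\kappa'F_\kappa=W(\Phi_\kappa,F_\kappa)=W$, yields $0=\overline b\,W$, i.e.\ $b=0$; the $f_{\mathrm{part}}$-contribution to the boundary form vanishes precisely because of the $o(r^{3/2}),o(r^{1/2})$ rates just established. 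Choosing instead $v=\Phi_\kappa\in\ker S^*\subset\mathcal D(S^*)$ makes the right endpoint drop by part (i) and the exponential decay of $\Phi_\kappa$, and analogously gives $0=-\overline a\,W$, i.e.\ $a=0$. Since $W\neq0$ one concludes $a=b=0$, so that $f=f_{\mathrm{part}}=o(r^{3/2})$ and $f'=f_{\mathrm{part}}'=o(r^{1/2})$.

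The main obstacle is exactly the step $a=b=0$: one must be sure that the adjoint relation transfers \emph{exactly} into the vanishing of this boundary form against \emph{all} of $\mathcal D(S^*)$ with no residual contribution from infinity, and that the remainder $f_{\mathrm{part}}$ does not pollute the extraction of $a$ and $b$. This is why the sharp little-$o$ control (rather than a mere $O$) coming from $\int_0^r|g|^2\,\ud\rho\to0$ is essential, in particular when pairing against $\Phi_\kappa'=O(\ln r)$; everything else is routine one-dimensional elliptic regularity and the explicit endpoint asymptotics \eqref{eq:AsymM0}--\eqref{eq:AsymMInf}.
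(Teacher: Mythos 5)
Your proposal is correct, but note that it is not parallel to anything in the paper: the paper does not prove Lemma \ref{lem:Derez} at all, it simply imports it from \cite[Prop.~3.1(i)-(ii)]{DR-2017}. What you have produced is a self-contained substitute, and it is the classical limit-circle endpoint analysis: (a) one-dimensional elliptic regularity plus the standard fact that $f,f''\in L^2$ forces $f'\in L^2$ handles continuity and decay at infinity; (b) variation of constants against the fundamental system $(F_\kappa,\Phi_\kappa)$, with Cauchy--Schwarz and $\|g\|_{L^2(0,r)}=o(1)$, gives the sharp little-$o$ rates for the particular solution; (c) the Lagrange/Green boundary form, evaluated against the two test elements $\chi F_\kappa$ and $\Phi_\kappa$ of $\mathcal{D}(S^*)$, kills the homogeneous coefficients $a,b$ exactly because the closure is characterised by $\langle\overline{S}f,v\rangle=\langle f,S^*v\rangle$ for all $v\in\mathcal{D}(S^*)$. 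All three steps are sound, including the cancellation of the $g$-terms in $f_{\mathrm{part}}'$ (so that $f_{\mathrm{part}}'=\frac{1}{W}(\Phi_\kappa'\int_0^r F_\kappa g-F_\kappa'\int_0^r\Phi_\kappa g)=o(r^{1/2})$) and the Wronskian bookkeeping that yields $0=\bar{b}\,W$ and $0=-\bar{a}\,W$. What your route buys is that it uses only objects already set up in Section \ref{sec:homo} (the pair $(F_\kappa,\Phi_\kappa)$, the Wronskian \eqref{eq:value_of_W}, the asymptotics \eqref{eq:AsymM0}), making the paper independent of \cite{DR-2017}; what the citation buys the authors is brevity and a statement valid in the broader setting of Whittaker operators. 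Two small points deserve an explicit word in a final write-up: the derivative asymptotics $F_\kappa'=O(1)$, $\Phi_\kappa'=O(\ln r)$ require differentiating \eqref{eq:AsymM0}, which is legitimate because the underlying Kummer/Tricomi expansions are convergent series (with logarithmic terms) and not merely asymptotic; and when you discard the boundary term at infinity for $v=\Phi_\kappa$ you need $\Phi_\kappa'\to 0$, which follows by observing that your part (i) argument applies verbatim to every element of $\mathcal{D}(S^*)$ as in \eqref{eq:Sstar_maximal}, not just to $\mathcal{D}(\overline{S})$.
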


We can then conclude the following.
\begin{lemma}\label{lem:DSclosure_is_H20}
 One has
 \begin{equation}\label{eq:DSclosure_is_H20}
  \mathcal{D}(\overline{S})\;=\;H^2_0(\mathbb{R}^+)\;=\;\overline{\,C^\infty_0(\mathbb{R}^+)\,}^{\|\,\|_{H^2}}\,.
 \end{equation}
\end{lemma}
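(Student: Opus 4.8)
The plan is to prove the identity through the equivalence of norms on the core $C^\infty_0(\mathbb{R}^+)$. By definition, $\mathcal{D}(\overline{S})$ is the completion of $C^\infty_0(\mathbb{R}^+)$ in the graph norm $\|f\|_{\overline{S}}^2:=\|f\|_{L^2}^2+\|Sf\|_{L^2}^2$, whereas $H^2_0(\mathbb{R}^+)$ is by definition the completion of the same space in the $\|\cdot\|_{H^2}$ norm. Hence it suffices to show that $\|\cdot\|_{\overline{S}}$ and $\|\cdot\|_{H^2}$ are equivalent on $C^\infty_0(\mathbb{R}^+)$: two equivalent norms produce identical Cauchy sequences with identical $L^2$-limits, so the two completions coincide as subspaces of $L^2(\mathbb{R}^+)$. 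This mirrors the argument already carried out for the form domain in Lemma \ref{lem:DomSFrid}, only now at the level of the operator (graph) norm rather than the form norm. Throughout, $f\in C^\infty_0(\mathbb{R}^+)$ is compactly supported away from both $0$ and $+\infty$, so no boundary terms arise in any integration by parts.

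One direction is immediate. From $\widetilde{S}f=-f''+\frac{\nu}{r}f+\frac{\nu^2}{4\kappa^2}f$ one gets $\|Sf\|_{L^2}\leqslant\|f''\|_{L^2}+|\nu|\,\|r^{-1}f\|_{L^2}+\frac{\nu^2}{4\kappa^2}\|f\|_{L^2}$, and Hardy's inequality $\|r^{-1}f\|_{L^2}\leqslant 2\|f'\|_{L^2}$ (the same one invoked in Lemma \ref{lem:DomSFrid}) bounds the singular term by $\|f'\|_{L^2}\leqslant\|f\|_{H^2}$. Thus $\|f\|_{\overline{S}}\lesssim\|f\|_{H^2}$.

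For the reverse inequality I would solve for the highest derivative, writing $f''=-Sf+\frac{\nu}{r}f+\frac{\nu^2}{4\kappa^2}f$, so that $\|f''\|_{L^2}\leqslant\|Sf\|_{L^2}+|\nu|\,\|r^{-1}f\|_{L^2}+\frac{\nu^2}{4\kappa^2}\|f\|_{L^2}$. Applying Hardy once more and then the interpolation inequality $\|f'\|_{L^2}\leqslant\varepsilon\,\|f''\|_{L^2}+C_\varepsilon\|f\|_{L^2}$ (which follows on $C^\infty_0(\mathbb{R}^+)$ from $\|f'\|_{L^2}^2=-\langle f,f''\rangle\leqslant\|f\|_{L^2}\|f''\|_{L^2}$ and Young's inequality), I obtain
\begin{equation*}
\|f''\|_{L^2}\;\leqslant\;\|Sf\|_{L^2}+2|\nu|\varepsilon\,\|f''\|_{L^2}+C\|f\|_{L^2}\,.
\end{equation*}
Choosing $\varepsilon>0$ with $2|\nu|\varepsilon<1$ (the case $\nu=0$ being trivial), the $\|f''\|_{L^2}$ term on the right is absorbed into the left-hand side, giving $\|f''\|_{L^2}\lesssim\|Sf\|_{L^2}+\|f\|_{L^2}\lesssim\|f\|_{\overline{S}}$. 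The interpolation bound then controls $\|f'\|_{L^2}$ as well, and altogether $\|f\|_{H^2}\lesssim\|f\|_{\overline{S}}$, completing the norm equivalence.

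The only genuinely delicate point is the control of the singular potential contribution $\frac{\nu}{r}f$ inside the graph norm, since naively this term could fail to be dominated by $\|f''\|_{L^2}$; the resolution is precisely the combination of Hardy's inequality with the absorption step above, whose feasibility rests on the smallness of $\varepsilon$ and is unobstructed here because $f$ is compactly supported. As a consistency check, the decay $f(r)=o(r^{3/2})$, $f'(r)=o(r^{1/2})$ of Lemma \ref{lem:Derez} is compatible with the vanishing traces $f(0)=f'(0)=0$ characterising $H^2_0(\mathbb{R}^+)$, corroborating the identification \eqref{eq:DSclosure_is_H20}.
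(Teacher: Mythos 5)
Your proof is correct, and it takes a genuinely different route from the paper's. You establish the two-sided equivalence of the graph norm of $S$ and the $H^2$-norm directly on the common core $C^\infty_0(\mathbb{R}^+)$: the forward bound via Hardy's inequality, the reverse bound via Hardy plus the interpolation estimate $\|f'\|_{L^2}\leqslant\varepsilon\|f''\|_{L^2}+C_\varepsilon\|f\|_{L^2}$ and absorption of the $\|f''\|_{L^2}$ term (legitimate since $\varepsilon$ is free, for every value of $\nu$), and you then identify the two completions inside $L^2(\mathbb{R}^+)$ -- a step that is sound because both norms dominate the $L^2$-norm and both completions embed injectively into $L^2(\mathbb{R}^+)$, the graph-norm one by closability of the symmetric operator $S$. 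The paper argues instead by a sandwich-and-codimension mechanism: it proves the inclusions $\mathcal{D}(\overline{S})\subset H^2_0(\mathbb{R}^+)$, using the vanishing asymptotics $f(r)=o(r^{3/2})$, $f'(r)=o(r^{1/2})$ of Lemma \ref{lem:Derez} (imported from the literature), and $H^2_0(\mathbb{R}^+)\subset\mathcal{D}(S^*)$, via Sobolev embedding, and then exploits the decomposition $\mathcal{D}(S^*)=\mathcal{D}(\overline{S})\dotplus\mathrm{span}\{\Psi_\kappa,\Phi_\kappa\}$ of \eqref{eq:DSstar_firstversion}, together with the observation that no non-zero element of $\mathrm{span}\{\Psi_\kappa,\Phi_\kappa\}$ lies in $H^2_0(\mathbb{R}^+)$, so that equality is forced by codimension counting. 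Your argument is more elementary and self-contained: it requires neither the decay estimates of Lemma \ref{lem:Derez} nor any knowledge of the explicit solutions $\Phi_\kappa$, $\Psi_\kappa$, and it generalises verbatim to any potential that is Hardy-subordinate to $-\frac{\ud^2}{\ud r^2}$. What the paper's route buys is structural information reused downstream: the same chain of inclusions simultaneously produces the refined description \eqref{eq:DSstar_secondversion} of $\mathcal{D}(S^*)$, which is the form in which the adjoint enters the Kre\u{\i}n-Vi\v{s}ik-Birman classification of the extensions, whereas in your approach that description would still have to be derived afterwards by substituting $\mathcal{D}(\overline{S})=H^2_0(\mathbb{R}^+)$ into \eqref{eq:DSstar_firstversion}.
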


\begin{proof}
 First we observe that
 \[\tag{i}
  \mathcal{D}(\overline{S})\;\subset\; H^2_0(\mathbb{R}^+)\,.
 \]
 Indeed, for any $f\in\mathcal{D}(\overline{S})$ one has $\widetilde{S}f=-f''+\frac{\nu}{r}f+\frac{\nu^2}{4\kappa^2}f\in L^2(\mathbb{R}^+)$, as well as $f\in L^2(\mathbb{R}^+)$ and $r^{-1}f\in L^2(\mathbb{R}^+)$, the latter following from \eqref{eq:ffprimeDSclosure}; therefore, $f''\in L^2(\mathbb{R}^+)$ and hence, as recalled already, necessarily $f\in H^2(\mathbb{R}^+)$. Owing to \eqref{eq:ffprimeDSclosure} again, $f(0)=f'(0)=0$, whence $f\in H^2_0(\mathbb{R}^+)$. 
 
 We also have the inclusion
  \[\tag{ii}
  H^2_0(\mathbb{R}^+)\;\subset\;\mathcal{D}(S^*)\,.
 \]
 Indeed, for any $f\in H^2_0(\mathbb{R}^+)$ one has $f,f''\in L^2(\mathbb{R}^+)$, and $f\in C^1_0(\mathbb{R}^+)$ by Sobolev's Lemma, where $C^1_0(\mathbb{R}^+)$ is the space of the $C^1$-functions over $\mathbb{R}^+$ vanishing at zero together with their derivative. Thus, $f(r)=o(r)$ as $r\downarrow 0$, implying $r^{-1}f\in L^2(\mathbb{R}^+)$.
 Then $\widetilde{S}f=-f''+\frac{\nu}{r}f+\frac{\nu^2}{4\kappa^2}f\in L^2(\mathbb{R}^+)$, which by \eqref{eq:Sstar_maximal} means that $f\in \mathcal{D}(S^*)$.
 
 We have then the chain
 \[
 \begin{split}
  \mathcal{D}(\overline{S})\;&\subset\; H^2_0(\mathbb{R}^+)\;\subset\;\mathcal{D}(S^*)\;=\;\mathcal{D}(\overline{S})\dotplus\mathrm{span}\{\Psi_\kappa,\Phi_\kappa\} \\
  &\subset\; H^2_0(\mathbb{R}^+)\dotplus\mathrm{span}\{\Psi_\kappa,\Phi_\kappa\}\;\subset\;\mathcal{D}(S^*)\,,
 \end{split}
 \]
 where the first two inclusions are (i) and (ii) respectively, the identity that follows is an application of \eqref{eq:DSstar_firstversion}, then the next inclusion follows from (i) again and the sum remains direct because no non-zero element in $\mathrm{span}\{\Psi_\kappa,\Phi_\kappa\}$ belongs to $H^2_0(\mathbb{R}^+)$, and the last inclusion follows from (ii) and \eqref{eq:DSstar_firstversion}. Therefore,
 \[
  \mathcal{D}(\overline{S})\dotplus\mathrm{span}\{\Psi_\kappa,\Phi_\kappa\}\;=\;H^2_0(\mathbb{R}^+)\dotplus\mathrm{span}\{\Psi_\kappa,\Phi_\kappa\}\,,\quad\textrm{with }\mathcal{D}(\overline{S})\,\subset\, H^2_0(\mathbb{R}^+)\,,
 \]
 whence necessarily $\mathcal{D}(\overline{S})=H^2_0(\mathbb{R}^+)$.
 \end{proof}
 
 As a consequence, \eqref{eq:DSF_firstversion} now reads
 \begin{equation}\label{eq:DSF2}
  \mathcal{D}(S_F)\;=\;H^2_0(\mathbb{R}^+)\dotplus\mathrm{span}\{\Psi_\kappa\}
 \end{equation}
  and in addition we can qualify $\mathcal{D}(S_F)$ as follows.
 
 \begin{lemma}\label{lem:DomOpFridS} One has
  \begin{equation}\label{eq:DSF_z}
   \begin{split}
    \mathcal{D}(S_F)\;&=\;H^2(\mathbb{R}^+)\cap H^1_0(\mathbb{R}^+) \\
    &=\;\big\{f\in H^2(\mathbb{R}^+)\,|\,f(0)=O(r)\;\textrm{as}\;r\downarrow 0\big\}\,.
   \end{split}
  \end{equation}
 \end{lemma}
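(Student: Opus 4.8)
The plan is to prove the two equalities in Lemma \ref{lem:DomOpFridS} by showing that the abstract characterisation \eqref{eq:DSF2}, namely $\mathcal{D}(S_F)=H^2_0(\mathbb{R}^+)\dotplus\mathrm{span}\{\Psi_\kappa\}$, coincides with the classical space $H^2(\mathbb{R}^+)\cap H^1_0(\mathbb{R}^+)$, and then to re-express the latter in the limiting form given in the last line of \eqref{eq:DSF_z}. First I would establish the inclusion $\mathcal{D}(S_F)\subset H^2(\mathbb{R}^+)\cap H^1_0(\mathbb{R}^+)$. Any $f\in\mathcal{D}(S_F)$ splits as $f=f_0+c\,\Psi_\kappa$ with $f_0\in H^2_0(\mathbb{R}^+)$ and $c\in\mathbb{C}$. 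Since $H^2_0(\mathbb{R}^+)\subset H^2(\mathbb{R}^+)$ trivially, it suffices to check that $\Psi_\kappa\in H^2(\mathbb{R}^+)\cap H^1_0(\mathbb{R}^+)$. By construction $\Psi_\kappa=R_G\Phi_\kappa=S_F^{-1}\Phi_\kappa\in\mathrm{ran}\,R_G\subset\mathcal{D}(r^{-1})$ (Lemma \ref{lem:ranRGinDrinv}), and the same argument used in the proof of Proposition \ref{eq:RGisSFinv} — that $f\in\mathcal{D}(S^*)$ together with $f,r^{-1}f\in L^2$ forces $f''\in L^2$ and hence $f\in H^2(\mathbb{R}^+)$ — applies verbatim to $\Psi_\kappa$. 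Moreover Lemma \ref{lem:RGPhi_atzero} gives $\Psi_\kappa(r)=\Gamma(1-\kappa)\|\Phi_\kappa\|_{L^2}^2\,r+O(r^2)$, so in particular $\Psi_\kappa(0)=0$ and hence $\Psi_\kappa\in H^1_0(\mathbb{R}^+)$.

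Next I would prove the reverse inclusion $H^2(\mathbb{R}^+)\cap H^1_0(\mathbb{R}^+)\subset\mathcal{D}(S_F)$. The natural route is to argue by dimension/codimension using the direct sum \eqref{eq:DSF2}. Given $f\in H^2(\mathbb{R}^+)\cap H^1_0(\mathbb{R}^+)$, one has $f(0)=0$, and Sobolev embedding on the half-line gives $f\in C^1$ with $f(r)=f'(0)r+o(r)$; combined with $f''\in L^2$ this yields $r^{-1}f\in L^2$, so that $\widetilde{S}f\in L^2(\mathbb{R}^+)$ and $f\in\mathcal{D}(S^*)$. By \eqref{eq:DSstar_firstversion} we may write $f=f_0+c_1\Psi_\kappa+c_0\Phi_\kappa$ uniquely. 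The goal is to show $c_0=0$, which then places $f$ in $\mathcal{D}(S_F)$ by \eqref{eq:DSF2}. Here I would compare the behaviours at the origin: $\Phi_\kappa(r)\to 1/\Gamma(1-\kappa)\neq 0$ as $r\downarrow 0$ by \eqref{eq:AsymM0}, whereas both $f$ (by hypothesis) and $\Psi_\kappa$ (by Lemma \ref{lem:RGPhi_atzero}) and every element of $H^2_0(\mathbb{R}^+)$ vanish at $0$. Reading off the limit $\lim_{r\downarrow 0}$ of the decomposition therefore forces $c_0/\Gamma(1-\kappa)=0$, i.e.\ $c_0=0$, giving $f\in H^2_0(\mathbb{R}^+)\dotplus\mathrm{span}\{\Psi_\kappa\}=\mathcal{D}(S_F)$.

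Finally I would identify $H^2(\mathbb{R}^+)\cap H^1_0(\mathbb{R}^+)$ with $\{f\in H^2(\mathbb{R}^+)\mid f(r)=O(r)\text{ as }r\downarrow 0\}$. For $f\in H^2(\mathbb{R}^+)$ the Sobolev embedding $H^2(\mathbb{R}^+)\hookrightarrow C^1([0,\infty))$ shows $f$ and $f'$ have finite limits at $0$, and the membership in $H^1_0$ is equivalent to $f(0)=0$; since $f'(0)$ is finite, $f(0)=0$ is in turn equivalent to $f(r)=f'(0)r+o(r)=O(r)$. This gives the second equality in \eqref{eq:DSF_z}. I expect the main obstacle to be the middle step — the rigorous exclusion of the $\Phi_\kappa$-component. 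The subtlety is that one must be sure the boundary value $\lim_{r\downarrow 0}g(r)$ is a well-defined continuous linear functional that separates $\Phi_\kappa$ from the rest of the decomposition, so that reading off the constant term at the origin is legitimate; this relies on the explicit asymptotics \eqref{eq:AsymM0} for $\mathscr{W}_{\kappa,\frac{1}{2}}$ together with Lemmas \ref{lem:Derez} and \ref{lem:RGPhi_atzero}, and must be done with care because the $\rho\ln\rho$ terms in \eqref{eq:AsymM0} mean the decomposition is not simply a splitting into clean powers of $r$.
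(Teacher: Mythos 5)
Your proof is correct, and its first half runs essentially parallel to the paper's: both split an element of $\mathcal{D}(S_F)$ according to \eqref{eq:DSF2} and upgrade $\Psi_\kappa$ to $H^2(\mathbb{R}^+)$ via Lemma \ref{lem:ranRGinDrinv} and the equation $\widetilde{S}\Psi_\kappa=\Phi_\kappa$ (the paper handles the $H^1_0$-membership slightly differently, through the structural inclusion $\mathcal{D}(S_F)\subset\mathcal{D}[S_F]=H^1_0(\mathbb{R}^+)$ of Lemma \ref{lem:DomSFrid}, while you use $\Psi_\kappa(0)=0$ plus the boundary-value characterisation of $H^1_0$ on the half-line; both are fine). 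Where the routes genuinely diverge is the reverse inclusion. The paper, given $\phi\in H^2(\mathbb{R}^+)\cap H^1_0(\mathbb{R}^+)$, writes down the explicit two-term decomposition $\phi=f+\frac{\phi'(0)}{\Psi_\kappa'(0)}\,\Psi_\kappa$, which is admissible because $\Psi_\kappa'(0)=\Gamma(1-\kappa)\|\Phi_\kappa\|_{L^2}^2\neq 0$ by Lemma \ref{lem:RGPhi_atzero}, and then checks $f(0)=f'(0)=0$ so that $f\in H^2_0(\mathbb{R}^+)$ and $\phi\in\mathcal{D}(S_F)$ by \eqref{eq:DSF2}. You instead first show $\phi\in\mathcal{D}(S^*)$, invoke the three-term decomposition \eqref{eq:DSstar_firstversion}, and kill the coefficient $c_0$ of $\Phi_\kappa$ by letting $r\downarrow 0$; this is legitimate because each of $\phi$, the $H^2_0$-component (Lemma \ref{lem:Derez}), and $\Psi_\kappa$ (Lemma \ref{lem:RGPhi_atzero}) tends to zero pointwise while $\Phi_\kappa\to 1/\Gamma(1-\kappa)\neq 0$ -- only pointwise limits are needed, so your closing worry about the $r\ln r$ terms is harmless at this order. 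The trade-off: the paper's construction is shorter and bypasses the preliminary step $H^2\cap H^1_0\subset\mathcal{D}(S^*)$, but it needs $\Psi_\kappa'(0)\neq 0$ and the characterisation of $H^2_0(\mathbb{R}^+)$ by two vanishing boundary values; your argument never uses $\Psi_\kappa'(0)\neq 0$, mirrors the chain argument of Lemma \ref{lem:DSclosure_is_H20}, and in addition makes explicit the second equality in \eqref{eq:DSF_z} (the $O(r)$ characterisation via the $C^1$ Sobolev embedding), which the paper leaves implicit.
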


  \begin{proof}
  Based on \eqref{eq:DSF_firstversion} and \eqref{eq:DSclosure_is_H20}, let $\phi=f+c\,\Phi_k\in \mathcal{D}(S_F)$ for generic $f\in H^2_0(\mathbb{R}^+)$ and $c\in\mathbb{C}$. From $-\Psi_\kappa''+\frac{\nu}{r}\Psi_k+\frac{\nu^2}{4\kappa^2}\Psi_\kappa=S_F\Psi_\kappa=\Phi_\kappa\in L^2(\mathbb{R}^+)$ and from Lemma \ref{lem:ranRGinDrinv} one deduces that $\Psi_\kappa''\in L^2(\mathbb{R}^+)$ and hence $\Psi_\kappa\in H^2(\mathbb{R}^+)$, which proves that $\mathcal{D}(S_F)\subset H^2(\mathbb{R}^+)$. Moreover, $\mathcal{D}(S_F)\subset\mathcal{D}[S_F]=H^1_0(\mathbb{R}^+)$, owing to Lemma \ref{lem:DomSFrid}, whence the conclusion $\mathcal{D}(S_F)\subset H^2(\mathbb{R}^+)\cap H^1_0(\mathbb{R}^+)$.

  For the converse inclusion, any $\phi\in H^2(\mathbb{R}^+)\cap H^1_0(\mathbb{R}^+)$ is re-written as $\phi=f+\frac{\phi'(0)}{\Psi_\kappa'(0)}\Psi_\kappa$ with $f:=\phi-\frac{\phi'(0)}{\Psi_\kappa'(0)}\Psi_\kappa$ (it is clear from the proof of Lemma \ref{lem:RGPhi_atzero} that $\Psi_\kappa'(0)=-\Gamma(1-\kappa)\|\Phi_\kappa\|_{L^2}^2\neq 0$). By linearity $f\in H^2(\mathbb{R}^+)$, by the assumptions on $\phi$ and \eqref{eq:RGPhi_atzero} $f(0)=0$, and by construction $f'(0)=0$. Thus, $f\in H^2_0(\mathbb{R}^+)$. Then $\phi\in\mathcal{D}(S_F)$ owing to \eqref{eq:DSF2}.
  \end{proof}

 In turn, we can now re-write \eqref{eq:DSstar_firstversion} as
 \begin{equation}\label{eq:DSstar_secondversion}
  \begin{split}
   \mathcal{D}(S^*)\;&=\;H^2_0(\mathbb{R}^+)\dotplus\mathrm{span}\{\Psi_\kappa,\Phi_\kappa\} \\
   &=\;\big(H^2(\mathbb{R}^+)\cap H^1_0(\mathbb{R}^+)\big)\dotplus\mathrm{span}\{\Phi_\kappa\}\,.
  \end{split}
 \end{equation}
 
To conclude this subsection we prove Theorem \ref{thm:1Fried}.

\begin{proof}[Proof of Theorem \ref{thm:1Fried}]
Since $S-h_0^{(\nu)}$ is bounded, both $h_0^{(\nu)}$ and $S$ have deficiency index one. Parts (i) and (ii) follow at once, respectively from Lemma~\ref{lem:DomOpFridS} and Lemma~\ref{lem:DomSFrid}, since the shift does not modify the domains. Concerning part (iii), it follows from
\[
\Big(h_{0,F}^{-1}+\frac{\nu^2}{4 \kappa^2} \Big)^{-1} \;=\; S_F^{-1} \;=\;R_G
\]
and from the expression \eqref{eq:Green} for the kernel of $R_G$, using the definitions \eqref{eq:FPhi} and 
\eqref{eq:value_of_W}.
\end{proof}

\subsection{Kre{\u\i}n-Vi\v{s}ik-Birman classification of the extensions}~\label{sec:KVB_for_S}


Based on the Kre{\u\i}n-Vi\v{s}ik-Birman extension theory \cite[Theorem 3.4]{GMO-KVB2017}, applied to the present case of deficiency index one, the self-adjoint extensions of $S$ correspond to those restrictions of $S^*$ to subspaces of $\mathcal{D}(S^*)$ that, in terms of formula \eqref{eq:DSstar_firstversion}, are identified by the condition
\begin{equation}\label{eq:c1betac0}
 c_1\;=\;\beta c_0\qquad\textrm{for some }\,\beta\in\mathbb{R}\cup\{\infty\}\,,
\end{equation}
the extension parametrised by $\beta=\infty$ having the domain \eqref{eq:DSF2} and being therefore the Friedrichs extension.

\begin{remark}
 If one replaces the restriction condition \eqref{eq:c1betac0} with the same expression where now $\beta$ is allowed to be a generic complex number, this gives all possible \emph{closed} extensions of $S$ between $\overline{S}$ and $S^*$, as follows by a straightforward application of Grubb's extension theory (see, e.g., \cite[Chapter 13]{Grubb-DistributionsAndOperators-2009}), namely the natural generalisation of the Kre{\u\i}n-Vi\v{s}ik-Birman theory for closed extensions. A recent application of Grubb's theory to operators of point interactions, including $(-\Delta)|_{C^\infty_0(\mathbb{R}^3\setminus\{0\})}$ in $L^2(\mathbb{R}^3)$, from the point of view of Friedrichs systems, is presented in \cite{EM-FriedrichsDelta2017}. 
\end{remark}

Let us denote with $S_\beta$ the extension selected by \eqref{eq:c1betac0} for given $\beta$. Owing to \eqref{eq:DSstar_firstversion} and \eqref{eq:c1betac0}, a generic $g\in\mathcal{D}(S_\beta)$ decomposes as
\begin{equation}\label{eq:SelfAdjointFunction-0}
 g\;=\;f+\beta c_0\Psi_\kappa+c_0\Phi_\kappa
\end{equation}
for unique $f\in H^2_0(\mathbb{R}^+)$ and $c_0\in\mathbb{C}$. The asymptotics \eqref{eq:AsymM0}, \eqref{eq:RGPhi_atzero}, and \eqref{eq:ffprimeDSclosure} imply
\begin{equation}\label{eq:SelfAdjointFunction}
 \begin{split}
  g(r)\;&=\;\frac{c_0}{\Gamma(1-\kappa)}+\frac{c_0\, \nu}{\Gamma(1-\kappa)}\, r \ln r \\
  &\quad + \Big(c_0 \, \nu \, \frac{\,2\psi(1-\kappa)+2 \ln(-\frac{\nu}{\kappa})+(4 \gamma-2) +\kappa^{-1}\,}{2\, \Gamma(1- \kappa)}+c_0\, \beta \,\Gamma(1 - \kappa)  \Vert \Phi_\kappa \Vert^2\Big)\,r \\
  &\quad + o(r^{3/2})\qquad\textrm{as }\,r\downarrow 0\,.
 \end{split}
\end{equation}
The $O(1)$-term and $O(r\ln r)$-term in \eqref{eq:SelfAdjointFunction} come from $\Phi_\kappa$, and so does the first $O(r)$-term; the second $O(r)$-term comes instead from $\Psi_\kappa$; the $o(r^{3/2})$-remainder comes from $f$.

The analogous asymptotics for a generic function $g\in\mathcal{D}(S^*)$ is
\begin{equation}\label{eq:AdjointFunction}
g(r)\;=\; C_0\Big(\frac{1}{\Gamma(1-\kappa)}+\frac{\nu}{\Gamma(1-\kappa)} r \ln r\Big) + C_1 r + o(r^{3/2})\qquad\textrm{as }\,r\downarrow 0
\end{equation}
for some $C_0,C_1\in\mathbb{C}$, as follows again from \eqref{eq:AsymM0}, \eqref{eq:RGPhi_atzero}, and \eqref{eq:ffprimeDSclosure} applied to \eqref{eq:DSstar_firstversion}.
Comparing \eqref{eq:SelfAdjointFunction} with \eqref{eq:AdjointFunction} we conclude the following.

\begin{proposition}[Classification of extensions at $\ell=0$: shift-dependent formulation]\label{prop:classificationThm}~

\noindent The self-adjoint extensions of $S$ form a family $\{S_\beta\,|\,\beta\in\mathbb{R}\cup\{\infty\}\}$. The extension with $\beta=\infty$ is the Friedrichs extension $S_F$. For $\beta\in\mathbb{R}$, the extension $S_\beta$ is the restriction of $S^*$ to the domain $\mathcal{D}(S_\beta)$ that consists of all functions in $\mathcal{D}(S^*)$ for which the coefficient $C_0$ of the leading term $\frac{1}{\Gamma(1-\kappa)}+\frac{\nu}{\Gamma(1-\kappa)} r \ln r$ and the coefficient $C_1$ of the next $O(r)$-subleading term, as $r\downarrow 0$, are constrained by the relation
\begin{equation}\label{eq:bc}
 \frac{C_1}{C_0} \; = \; c_{\nu,\kappa}\, \beta+ d_{\nu,\kappa}\,,
\end{equation}
where
\begin{equation}\label{eq:c-d-coeff}
 \begin{split}
  c_{\nu,\kappa} \;& := \; \Gamma(1 - \kappa)\, \|\Phi_\kappa \|_{L^2}^2\\
  d_{\nu,\kappa} \; &:= \; \nu\,\frac{\, 2 \psi (1- \kappa)+2 \ln(-\frac{\nu}{\kappa})+2(2\gamma - 1)   +\kappa^{-1}\,}{2\, \Gamma(1- \kappa)}\,.
 \end{split}
\end{equation}
Equivalently,
\begin{equation}\label{eq:structural_classification}
 \begin{split}
  S_\beta\;&=\;S^*\upharpoonright\mathcal{D}(S_\beta) \\
  \mathcal{D}(S_\beta)\;&=\;\big\{g=f+\beta c_0\Psi_\kappa+c_0\Phi_\kappa\,\big|\,f\in H^2_0(\mathbb{R}^+),\, c_0\in\mathbb{C} \big\}\,.
 \end{split}
\end{equation}
\end{proposition}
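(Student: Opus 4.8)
The plan is to read the statement off as a direct consequence of the Kre{\u\i}n-Vi\v{s}ik-Birman parametrization already set up, the only genuine work being the translation of the abstract extension parameter $\beta$ into the concrete boundary condition \eqref{eq:bc}. First I would invoke the abstract classification \cite[Theorem 3.4]{GMO-KVB2017} in the deficiency-index-one case. Since $S_F$ is a reference self-adjoint extension with everywhere defined bounded inverse (Proposition \ref{eq:RGisSFinv}) and $\dim\ker S^*=1$ by \eqref{eq:kerSstar}, every self-adjoint extension of $S$ is the restriction of $S^*$ to the subspace of $\mathcal{D}(S^*)$ cut out by the single linear condition $c_1=\beta c_0$ in the decomposition \eqref{eq:DSstar_firstversion}, with $\beta$ ranging over $\mathbb{R}\cup\{\infty\}$. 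Feeding $c_1=\beta c_0$ back into \eqref{eq:DSstar_firstversion} yields at once the structural description \eqref{eq:structural_classification}, and taking $\beta=\infty$ (equivalently $c_0=0$) recovers the domain \eqref{eq:DSF2} of $S_F$, so that $\beta=\infty$ labels the Friedrichs extension.

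Next I would convert $\beta$ into a boundary condition at the origin by matching the two asymptotic expansions already in hand. For $g\in\mathcal{D}(S_\beta)$ the relevant behaviour is \eqref{eq:SelfAdjointFunction}, obtained by inserting the short-distance asymptotics \eqref{eq:AsymM0} of $\Phi_\kappa$, \eqref{eq:RGPhi_atzero} of $\Psi_\kappa$, and \eqref{eq:ffprimeDSclosure} of the $H^2_0$-component $f$; for a generic $g\in\mathcal{D}(S^*)$ the expansion \eqref{eq:AdjointFunction} isolates the coefficient $C_0$ of the leading block $\frac{1}{\Gamma(1-\kappa)}(1+\nu r\ln r)$ and the coefficient $C_1$ of the subsequent $O(r)$ term. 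Comparing the two, $\Phi_\kappa$ supplies the $O(1)$ and $O(r\ln r)$ terms together with one $O(r)$ contribution, $\Psi_\kappa$ supplies only the $O(r)$ term $\Gamma(1-\kappa)\|\Phi_\kappa\|_{L^2}^2\, r$ (by Lemma \ref{lem:RGPhi_atzero}), and $f$ is pushed into the $o(r^{3/2})$ remainder; reading off coefficients gives $C_0=c_0$ and $C_1=c_0\big(d_{\nu,\kappa}+\beta\, c_{\nu,\kappa}\big)$ with $c_{\nu,\kappa},d_{\nu,\kappa}$ as in \eqref{eq:c-d-coeff}, which is precisely the relation \eqref{eq:bc}.

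The step I expect to require the most care is ensuring that this read-off is unambiguous, i.e.\ that the three summands $f$, $\Psi_\kappa$, $\Phi_\kappa$ occupy cleanly separated orders, so that $C_0$ and $C_1$ are genuinely determined by $\beta$ and $c_0$ alone. This is exactly what the asymptotics \eqref{eq:AsymM0}, \eqref{eq:RGPhi_atzero}, \eqref{eq:ffprimeDSclosure} guarantee, the only delicate bookkeeping being the $\ln(-\frac{\nu}{\kappa})$ produced when the scaling $\lambda=-\frac{\nu}{\kappa}$ inside $\Phi_\kappa(r)=\mathscr{W}_{\kappa,\frac12}(\lambda r)$ is expanded, since this is what feeds the logarithmic term of $d_{\nu,\kappa}$. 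I would finally phrase \eqref{eq:bc} as the homogeneous relation $C_1=(c_{\nu,\kappa}\beta+d_{\nu,\kappa})C_0$, so that the degenerate case $C_0=0$ (corresponding to $g\in\mathcal{D}(\overline{S})$) is covered without any division by zero, and so that the limiting value $\beta=\infty$ consistently recovers the Friedrichs constraint $C_0=0$.
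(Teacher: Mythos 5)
Your proposal is correct and follows essentially the same route as the paper: invoking \cite[Theorem 3.4]{GMO-KVB2017} to impose the restriction $c_1=\beta c_0$ on the decomposition \eqref{eq:DSstar_firstversion}, which gives \eqref{eq:structural_classification} directly, and then matching the expansion \eqref{eq:SelfAdjointFunction} (built from \eqref{eq:AsymM0}, \eqref{eq:RGPhi_atzero}, \eqref{eq:ffprimeDSclosure}) against the generic form \eqref{eq:AdjointFunction} to read off $C_0=c_0$ and $C_1=c_0(c_{\nu,\kappa}\beta+d_{\nu,\kappa})$. Your closing remarks on the clean separation of orders, the origin of the $\ln(-\frac{\nu}{\kappa})$ term in $d_{\nu,\kappa}$, and the homogeneous phrasing of \eqref{eq:bc} at $C_0=0$ and $\beta=\infty$ are exactly the bookkeeping the paper performs implicitly.
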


Within the  Kre{\u\i}n-Vi\v{s}ik-Birman extension scheme an equivalent classification in terms of quadratic forms is available. In the present setting, \cite[Theorem 3.6]{GMO-KVB2017} yields at once the following.

\begin{proposition}[Shift-dependent classification at $\ell=0$: form version]\label{prop:form_Sbeta}~

\noindent The self-adjoint extensions of $S$ form a family $\{S_\beta\,|\,\beta\in\mathbb{R}\cup\{\infty\}\}$. The extension with $\beta=\infty$ is the Friedrichs extension $S_F$. For $\beta\in\mathbb{R}$, the extension $S_\beta$ has quadratic form
 \begin{equation}\label{eq:formextensions}
 \begin{split}
  \mathcal{D}[S_\beta]\;&=\;\mathcal{D}[S_F]\dotplus\mathrm{span}\{\Phi_\kappa\} \\
  S_\beta[\phi_\kappa+c_\kappa\Phi_\kappa]\;&=\;S_F[\phi_\kappa]+\beta|c_\kappa|^2\|\Phi_\kappa\|_{L^2}^2
 \end{split}
\end{equation}
for generic $\phi_\kappa\in\mathcal{D}[S_F]$ and $c_\kappa\in\mathbb{C}$. 
\end{proposition}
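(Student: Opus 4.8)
The plan is to apply directly the form version of the Kre{\u\i}n-Vi\v{s}ik-Birman classification, namely \cite[Theorem 3.6]{GMO-KVB2017}, exactly in the way the operator-domain version (Proposition \ref{prop:classificationThm}) was deduced from \cite[Theorem 3.4]{GMO-KVB2017}. All the structural ingredients that the abstract theorem requires are by now available: $S$ is densely defined, symmetric, and of strictly positive bottom, so that its Friedrichs extension $S_F$ has everywhere defined bounded inverse (Proposition \ref{eq:RGisSFinv}); moreover its deficiency space is the one-dimensional $\ker S^*=\mathrm{span}\{\Phi_\kappa\}$ of \eqref{eq:kerSstar}. Thus nothing beyond invoking the general theorem and specialising it to deficiency index one is needed.

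I would first recall the abstract statement. In the Kre{\u\i}n-Vi\v{s}ik-Birman scheme the self-adjoint extensions of $S$ are parametrised by the self-adjoint operators $B$ acting on (subspaces of) $\ker S^*$, and for each such $B$ the associated closed and lower-semibounded form reads
\begin{equation*}
 \mathcal{D}[S_B]\;=\;\mathcal{D}[S_F]\dotplus\mathcal{D}[B]\,,\qquad S_B[f+w]\;=\;S_F[f]+\langle w,Bw\rangle\,,
\end{equation*}
for $f\in\mathcal{D}[S_F]$ and $w\in\mathcal{D}[B]\subset\ker S^*$. Since here $\ker S^*$ is one-dimensional, any admissible $B$ is multiplication by a real scalar, save for the limiting choice $\mathcal{D}[B]=\{0\}$ (no deficiency contribution to the form) which corresponds to $\beta=\infty$ and returns $\mathcal{D}[S_F]$ together with $S_F$ itself.

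The key step is to identify this scalar with the extension parameter $\beta$ already fixed in Proposition \ref{prop:classificationThm}. This is not a fresh computation but is forced by the coherence of \cite[Theorems 3.4 and 3.6]{GMO-KVB2017}: one and the same operator $B$ labels both the operator-domain restriction \eqref{eq:structural_classification}, through the boundary relation $c_1=\beta c_0$ of \eqref{eq:c1betac0} (equivalently $v=Bw$ with $v=c_1\Phi_\kappa$ and $w=c_0\Phi_\kappa$), and the quadratic form. Hence $B$ is multiplication by $\beta$ on $\mathrm{span}\{\Phi_\kappa\}$; for $w=c_\kappa\Phi_\kappa$ this gives $\langle w,Bw\rangle=\beta\,|c_\kappa|^2\,\|\Phi_\kappa\|_{L^2}^2$, while $\mathcal{D}[B]=\mathrm{span}\{\Phi_\kappa\}$ because $B$ is bounded on a finite-dimensional space. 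Substituting into the displayed abstract formula yields precisely \eqref{eq:formextensions}.

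Rather than a genuine obstacle, the one point that deserves care is the bookkeeping of this scalar: I would check that the normalisation of $B$ inherited from \cite{GMO-KVB2017} is taken with respect to the $L^2$-pairing on $\ker S^*$, which is exactly what produces the factor $\|\Phi_\kappa\|_{L^2}^2$, and that the value $\beta=\infty$ consistently collapses the deficiency summand so as to recover the Friedrichs form. Both facts are immediate from the way $\beta$ entered the operator classification, so no further analysis is required.
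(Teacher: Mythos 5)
Your proposal is correct and takes essentially the same route as the paper, whose entire proof consists of invoking \cite[Theorem 3.6]{GMO-KVB2017} in the present deficiency-index-one setting (``yields at once the following''). The extra bookkeeping you supply --- identifying the Birman parameter $B$ with multiplication by the same scalar $\beta$ appearing in Proposition \ref{prop:classificationThm}, which produces the factor $\|\Phi_\kappa\|_{L^2}^2$ through the $L^2$-pairing, and checking that $\beta=\infty$ collapses the deficiency summand to recover $S_F$ --- is precisely what the paper leaves implicit.
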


Thus, the classification provided by Proposition \ref{prop:classificationThm} identifies each extension \emph{directly from the short distance behaviour of the elements of its domain}, and the self-adjointness condition \eqref{eq:bc} is a constrained \emph{boundary condition} as $r\downarrow 0$ (see Remark \ref{rem:previous_literature} below for further comments).
This turns out to be particularly informative for practical purposes, including our next purposes of classification of the discrete spectra of the $S_\beta$'s.

The Friedrichs extension, $\beta=\infty$, is read out from \eqref{eq:bc} as $C_0=0$ and $C_1=c_{\nu,k}$, upon interpreting $C_0\beta=1$. In this case, as expected, \eqref{eq:structural_classification} takes the form of \eqref{eq:DSF2} and \eqref{eq:formextensions} is interpreted as $\mathcal{D}[S_{\beta=\infty}]=\mathcal{D}[S_F]$. Moreover, the following feature of $S_F$ is now obvious from \eqref{eq:structural_classification} and from the short-distance asymptotics of $\Phi_\kappa$ and $\Psi_\kappa$ given by \eqref{eq:AsymM0} and \eqref{eq:RGPhi_atzero} above.

\begin{corollary}\label{cor:Frie}
 The Friedrichs extension $S_F$ is the \emph{only} member of the family $\{S_\beta\,|\,\beta\in\mathbb{R}\cup\{\infty\}\}$ with operator domain contained in $\mathcal{D}[r^{-1}]$, i.e., it is the only self-adjoint extension whose domain's functions have finite expectation of the potential (and hence also of the kinetic) energy. 
\end{corollary}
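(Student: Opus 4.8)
The plan is to read the statement off directly from the structural classification \eqref{eq:structural_classification} together with the short-distance asymptotics already collected, so essentially no new machinery is needed. First I would pin down the meaning of $\mathcal{D}[r^{-1}]$ as the form domain of multiplication by $r^{-1}$, namely
\begin{equation*}
 \mathcal{D}[r^{-1}]\;=\;\Big\{g\in L^2(\mathbb{R}^+)\,\Big|\,\int_0^{+\infty}\frac{|g(r)|^2}{r}\,\ud r<+\infty\Big\}\,,
\end{equation*}
so that the potential energy $\langle g,\frac{\nu}{r}g\rangle=\nu\int_0^{+\infty}r^{-1}|g|^2\,\ud r$ is finite precisely on $\mathcal{D}[r^{-1}]$. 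The decisive observation is a near-origin dichotomy for the elements of $\mathcal{D}(S^*)$, which are continuous at $0$: if $g(0)=L\neq 0$ then $r^{-1}|g(r)|^2\sim|L|^2 r^{-1}$ is non-integrable at the origin, whereas if $g(r)=O(r^s)$ with $s>0$ then $r^{-1}|g(r)|^2=O(r^{2s-1})$ is integrable there. Thus membership in $\mathcal{D}[r^{-1}]$ is governed entirely by the leading behaviour of $g$ as $r\downarrow 0$, i.e.\ by whether $g$ vanishes at the origin.

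Next I would apply this dichotomy to a generic element of $\mathcal{D}(S_\beta)$ with $\beta\in\mathbb{R}$. By \eqref{eq:structural_classification} every such $g$ reads $g=f+\beta c_0\Psi_\kappa+c_0\Phi_\kappa$ with $f\in H^2_0(\mathbb{R}^+)$ and $c_0\in\mathbb{C}$, and the three asymptotics \eqref{eq:ffprimeDSclosure}, \eqref{eq:RGPhi_atzero}, \eqref{eq:AsymM0} give respectively $f(r)=o(r^{3/2})$, $\Psi_\kappa(r)=\Gamma(1-\kappa)\|\Phi_\kappa\|_{L^2}^2\,r+O(r^2)$, and $\Phi_\kappa(r)\to\frac{1}{\Gamma(1-\kappa)}$ as $r\downarrow 0$. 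Hence $g(0)=\frac{c_0}{\Gamma(1-\kappa)}$, and since $\Gamma(1-\kappa)\neq 0$ any $g$ with $c_0\neq 0$ satisfies $g(0)\neq 0$ and therefore lies outside $\mathcal{D}[r^{-1}]$ by the dichotomy above. As $\mathcal{D}(S_\beta)$ manifestly contains functions with $c_0\neq 0$ for every finite $\beta$, this proves $\mathcal{D}(S_\beta)\not\subset\mathcal{D}[r^{-1}]$ for all $\beta\in\mathbb{R}$.

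It then remains to treat the Friedrichs case and to record the kinetic-energy consequence. For $\beta=\infty$ the domain \eqref{eq:DSF2} is $\mathcal{D}(S_F)=H^2_0(\mathbb{R}^+)\dotplus\mathrm{span}\{\Psi_\kappa\}$, i.e.\ exactly the $c_0=0$ functions, each vanishing at least linearly at the origin; equivalently, by Lemma~\ref{lem:DomOpFridS} and Lemma~\ref{lem:DomSFrid}, $\mathcal{D}(S_F)=H^2(\mathbb{R}^+)\cap H^1_0(\mathbb{R}^+)\subset H^1_0(\mathbb{R}^+)=\mathcal{D}[S_F]$. Therefore every $g\in\mathcal{D}(S_F)$ has finite $\|g'\|_{L^2}^2$ (kinetic energy) and, by Hardy's inequality, finite $\nu\|r^{-1/2}g\|_{L^2}^2$ (potential energy), so $\mathcal{D}(S_F)\subset\mathcal{D}[r^{-1}]$ and the finiteness of the potential energy is accompanied by finiteness of the kinetic energy. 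Combining this with the previous paragraph yields that $S_F$ is the unique extension whose operator domain is contained in $\mathcal{D}[r^{-1}]$. The only genuinely delicate point—the main obstacle—is the bookkeeping of the leading asymptotic term, specifically confirming that it is exactly the $\Phi_\kappa$-component, which does not vanish at the origin, that obstructs membership in $\mathcal{D}[r^{-1}]$, while the $\Psi_\kappa$- and $H^2_0$-parts vanish fast enough to be harmless; this is already secured by \eqref{eq:AsymM0} and \eqref{eq:RGPhi_atzero}.
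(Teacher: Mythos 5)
Your proposal is correct and follows exactly the route the paper intends: the paper declares the corollary ``obvious'' from the structural classification \eqref{eq:structural_classification} together with the short-distance asymptotics \eqref{eq:AsymM0} and \eqref{eq:RGPhi_atzero}, which is precisely the dichotomy you spell out (the $c_0\Phi_\kappa$ component fails to vanish at the origin and destroys membership in $\mathcal{D}[r^{-1}]$, while the $H^2_0$ and $\Psi_\kappa$ parts vanish at least linearly and are harmless). Your extra details---Hardy's inequality for the kinetic/potential-energy bookkeeping on $\mathcal{D}(S_F)\subset H^1_0(\mathbb{R}^+)$---are a faithful expansion of what the paper leaves implicit, not a different argument.
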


Another immediate consequence of the extension parametrisation \eqref{eq:structural_classification}, as an application of Kre{\u\i}n's resolvent formula for deficiency index one \cite[Theorem 6.6]{GMO-KVB2017}, is the following.

\begin{corollary}\label{cor:resolventSbeta}
The self-adjoint extension $S_\beta$  is invertible if and only if $\beta\neq 0$, in which case
\begin{equation}
S_\beta^{-1}\; = \; S_F^{-1}+\frac{1}{\beta} \frac{1}{\Vert \Phi_\kappa \Vert^2} |\Phi_\kappa \rangle \langle \Phi_\kappa|\,.
\end{equation}
\end{corollary}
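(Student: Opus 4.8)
The plan is to combine the structural description \eqref{eq:structural_classification} of $\mathcal{D}(S_\beta)$ with a direct verification; the asserted identity is, in fact, nothing but the specialisation to deficiency index one of the Kre{\u\i}n resolvent formula of \cite[Theorem 6.6]{GMO-KVB2017}, with $S_F$ taken as reference extension. I would first dispose of the non-invertible case. For $\beta=0$ the template \eqref{eq:structural_classification} contains $\Phi_\kappa$ itself (take $f=0$, $c_0=1$), and $S_0\Phi_\kappa=S^*\Phi_\kappa=0$ by \eqref{eq:kerSstar}; hence $0$ is an eigenvalue and $S_0$ fails to be invertible.

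For $\beta\neq 0$ I would prove invertibility and the resolvent identity at one stroke, by introducing the bounded self-adjoint operator
\[
 T\;:=\;S_F^{-1}+\frac{1}{\beta\,\|\Phi_\kappa\|^2}\,|\Phi_\kappa\rangle\langle\Phi_\kappa|
\]
and showing $S_\beta T=\mathbbm{1}$. Fix $g\in L^2(\mathbb{R}^+)$ and decompose $S_F^{-1}g=f_0+a\,\Psi_\kappa$ with $f_0\in H^2_0(\mathbb{R}^+)$ and $a\in\mathbb{C}$, as permitted by \eqref{eq:DSF2}; then
\[
 Tg\;=\;f_0+a\,\Psi_\kappa+\frac{\langle\Phi_\kappa,g\rangle}{\beta\,\|\Phi_\kappa\|^2}\,\Phi_\kappa\,.
\]
To recognise $Tg$ as an element of $\mathcal{D}(S_\beta)$ in the form $f+\beta c_0\Psi_\kappa+c_0\Phi_\kappa$ of \eqref{eq:structural_classification}, I must identify $c_0$ with the coefficient $\frac{\langle\Phi_\kappa,g\rangle}{\beta\|\Phi_\kappa\|^2}$ of $\Phi_\kappa$ and then check the compatibility $a=\beta c_0$, that is $a=\langle\Phi_\kappa,g\rangle/\|\Phi_\kappa\|^2$.

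The only genuine computation, and the step I expect to be the crux, is the evaluation of this coefficient $a$. Applying $S_F$ to $S_F^{-1}g=f_0+a\Psi_\kappa$ and using $S_F\Psi_\kappa=\Phi_\kappa$ from \eqref{eq:Psikappa} gives $g=S_F f_0+a\Phi_\kappa$, whence $\langle\Phi_\kappa,g\rangle=\langle\Phi_\kappa,S_F f_0\rangle+a\|\Phi_\kappa\|^2$. Here $\langle\Phi_\kappa,S_F f_0\rangle$ vanishes: since $f_0\in H^2_0(\mathbb{R}^+)=\mathcal{D}(\overline S)$ one has $S_F f_0=\overline{S}f_0$, and the adjoint relation together with $\Phi_\kappa\in\ker S^*$ gives $\langle\Phi_\kappa,\overline{S}f_0\rangle=\langle S^*\Phi_\kappa,f_0\rangle=0$. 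Thus $a=\langle\Phi_\kappa,g\rangle/\|\Phi_\kappa\|^2$, the compatibility holds, and $Tg\in\mathcal{D}(S_\beta)$. Finally $S_\beta Tg=S^* Tg=S_F f_0+a\Phi_\kappa=g$, using $S^*\Phi_\kappa=0$ and $S^*\Psi_\kappa=\Phi_\kappa$. Together with the injectivity of $S_\beta$ for $\beta\neq 0$ (any null vector lies in $\ker S^*=\mathrm{span}\{\Phi_\kappa\}$, and uniqueness of the decomposition \eqref{eq:structural_classification} forces it to vanish once $\beta\neq 0$), the surjectivity just established shows that $S_\beta$ is a bijection and that the bounded $T$ equals $S_\beta^{-1}$, which is precisely the asserted formula.
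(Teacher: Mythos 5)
Your proof is correct, but it takes a genuinely different route from the paper's. The paper disposes of this corollary in one line: it presents it as an immediate application of the abstract Kre\u{\i}n resolvent formula for deficiency index one, \cite[Theorem 6.6]{GMO-KVB2017}, applied to the extension parametrisation \eqref{eq:structural_classification}, so the entire content is delegated to the general Kre\u{\i}n-Vi\v{s}ik-Birman machinery. You instead re-prove that formula by hand in this concrete setting: you verify that the candidate operator $T=S_F^{-1}+\frac{1}{\beta\,\|\Phi_\kappa\|^2}|\Phi_\kappa\rangle\langle\Phi_\kappa|$ maps $L^2(\mathbb{R}^+)$ into $\mathcal{D}(S_\beta)$ and is a right inverse of $S_\beta$, the crux being the identification of the $\Psi_\kappa$-coefficient $a=\langle\Phi_\kappa,g\rangle/\|\Phi_\kappa\|^2$ via the adjoint pairing $\langle\Phi_\kappa,\overline{S}f_0\rangle=\langle S^*\Phi_\kappa,f_0\rangle=0$; injectivity for $\beta\neq 0$ (and its failure for $\beta=0$, via the eigenvector $\Phi_\kappa$) then follows from $\ker S^*=\mathrm{span}\{\Phi_\kappa\}$ together with the uniqueness of the decomposition \eqref{eq:structural_classification}. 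Every ingredient you use is already established in the paper ($S_F^{-1}=R_G$ bounded and self-adjoint from Lemma \ref{lem:RGbddsa} and Proposition \ref{eq:RGisSFinv}, $\mathcal{D}(\overline{S})=H^2_0(\mathbb{R}^+)$ from Lemma \ref{lem:DSclosure_is_H20}, $S_F\Psi_\kappa=\Phi_\kappa$ from \eqref{eq:Psikappa}, and the direct sums \eqref{eq:DSstar_firstversion} and \eqref{eq:DSF2}), so the argument is sound and complete. What your approach buys is self-containedness: in effect you re-derive the deficiency-index-one Kre\u{\i}n formula from scratch, which makes transparent both why the Birman parameter enters the resolvent as $1/\beta$ and why $\beta=0$ is precisely the non-invertible extension. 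What the paper's approach buys is brevity and coherence with its stated programme of showcasing how the abstract Kre\u{\i}n-Vi\v{s}ik-Birman scheme delivers such resolvent formulas for free once the parametrisation of the extensions is in place.
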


\begin{remark}
 Unlike the Friedrichs extension, the `energy' $S_\beta[g]$ of an element $g\in\mathcal{D}[S_\beta]$ when $\beta\neq\infty$ differs from the formal expression $\|g'\|_{L^2}^2+\nu\|r^{-\frac{1}{2}}g\|\|_{L^2}^2+\eta\|g\|_{L^2}^2$, $\eta=\frac{\nu^2}{4\kappa^2}$. The latter would be instead \emph{infinite} for a generic $g$, and the \emph{finiteness} of $S_\beta[g]$ can be interpreted as the effect of an infinite $\beta$-dependent correction to the above-mentioned formal expression such that the two infinities cancel out. Explicitly, let us write $g=\phi_\kappa+c_\kappa\Phi_\kappa$ as in \eqref{eq:formextensions} and compute
 \[
  \begin{split}
   S_\beta&[g]\;=\;\|\phi_\kappa'\|_{L^2}^2+\nu\|r^{-\frac{1}{2}}\phi_\kappa\|\|_{L^2}^2+\eta\|\phi_\kappa\|_{L^2}^2+\beta|c_\kappa|^2\|\Phi_\kappa\|_{L^2}^2 \\
   &=\;\|g'-c_\kappa\Phi_\kappa'\|_{L^2}^2+\nu\|r^{-\frac{1}{2}}(g-c_\kappa\Phi_\kappa)\|\|_{L^2}^2+\eta\|g-c_\kappa\Phi_\kappa\|_{L^2}^2 +\beta|c_\kappa|^2\|\Phi_\kappa\|_{L^2}^2\,.
  \end{split}
 \]
`Opening the squares' in the above norms clearly yields infinities, so we only proceed formally here, understanding the following expressions as the $\varepsilon\downarrow 0$ limit of integrations that are supported on $(\varepsilon,+\infty)$. One would then have
\[
 \begin{split}
  S_\beta[g]\;&=\;\|g'\|_{L^2}^2+\nu\|r^{-\frac{1}{2}}g\|\|_{L^2}^2+\eta\|g\|_{L^2}^2 \\
  &\qquad +|c_\kappa|^2\Big(-\overline{\Phi_\kappa(0)}\,\Phi_\kappa'(0)+\int_0^{+\infty}\!\!\overline{\Phi_\kappa}\big(-\Phi_\kappa''+{\textstyle\frac{\nu}{r}}\Phi_\kappa+\eta\Phi_\kappa\big)\,\ud r\Big) \\
  &\qquad -2\,\mathfrak{Re}\,c_\kappa\Big(-\overline{g(0)}\,\Phi_\kappa'(0)+\int_0^{+\infty}\!\!\overline{g}\,\big(-\Phi_\kappa''+{\textstyle\frac{\nu}{r}}\Phi_\kappa+\eta\Phi_\kappa\big)\,\ud r\Big) \\
  &\qquad+\beta|c_\kappa|^2\|\Phi_\kappa\|_{L^2}^2\,.
 \end{split}
\]
Using that $-\Phi_\kappa''+{\textstyle\frac{\nu}{r}}\Phi_\kappa+\eta\Phi_\kappa=0$,  $c_\kappa=g(0)/\Phi_\kappa(0)$, and $\Phi_\kappa$ is real-valued, we find
\[
 S_\beta[g]\;=\;\|g'\|_{L^2}^2+\nu\|r^{-\frac{1}{2}}g\|\|_{L^2}^2+\eta\|g\|_{L^2}^2 +|g(0)|^2\Big(\frac{\Phi_\kappa'(0)}{\Phi_\kappa(0)}+\beta\,\frac{\,\|\Phi_\kappa\|_{L^2}^2}{|\Phi_\kappa(0)|^2}\Big).
\]
The $\beta$-dependent correction is now evident from the above expression, that must be interpreted as a \emph{compensation} between the infinite `formal form of $g$' given by the first three summands, and the infinite correction given by the fourth summand -- observe indeed that $\Phi_\kappa'(r)/\Phi_\kappa(r)=(\nu\ln r)(1+o(1))$ as $r\downarrow 0$. Only for the Friedrichs extension this correction is absent and $S_F[g]$ is given by the usual formula.
 \end{remark}

\begin{remark}\label{rem:previous_literature}
 As mentioned in the Introduction, our boundary-condition-driven classification of the self-adjoint realisations of the differential operator $\widetilde{S}$ on the half-line has several precursors in the literature \cite{Rellich-1944,Bulla-Gesztesy-1985}. In fact, the analysis of radial Schr\"{o}dinger operators with Coulomb potentials, and more generally of the so-called `Whittaker operators' $-\frac{\ud^2}{\ud r^2}+(\frac{1}{4}-\mu^2)\frac{1}{\,r^2}-\frac{\kappa}{r}$ on half-line, is also quite active in the present days \cite{Gesztesy-Zinchenko-2006,Bruneau-Derezinski-Georgescu-2011,Derezinski-Richard-2017,DR-2017}.
 The very `spirit' of the structural formula \eqref{eq:structural_classification} is to link, through the extension parameter $\beta$, the `regular' (in this context: rapidly vanishing) behaviour at the origin of the component $f+\beta c_0\Psi_\kappa$ with the `singular' (non-vanishing) behaviour of the component $c_0\Phi_\kappa$ of a generic $g\in\mathcal{D}(S_\beta)$, and the boundary condition of self-adjointness \eqref{eq:bc} is a convenient re-phrasing of that. Lifting the analysis to the three dimensional case makes this terminology more appropriate, as remarked after Theorem \ref{thm:3Dclass}.
\end{remark}

The $\beta$-parametrisation in Propositions \ref{prop:classificationThm} and \ref{prop:form_Sbeta} is shift-dependent and it is convenient now to re-scale $\beta$ so as to re-parametrise the extensions in a shift-independent way. To this aim, for $g\in\mathcal{D}(S^*)$ we set
\begin{equation}\label{eq:g0g1limits}
 \begin{split}
  g_0\;&:=\;\frac{C_0}{\,\Gamma(1-\kappa)}\;=\;\lim_{r\downarrow 0}g(r) \\
  g_1\;&:=\;C_1\;=\;\lim_{r\downarrow 0}r^{-1}\big(g(r)-g_0(1+\nu r\ln r)\big)
 \end{split}
\end{equation}
so that \eqref{eq:AdjointFunction} reads
\begin{equation}
 g\;=\;g_0(1 + \nu\,r\ln r)+g_1 r +o(r^{3/2})\qquad\textrm{as }r\downarrow 0\,,
\end{equation}
and we also define
\begin{equation}\label{eq:alphabeta}
 \alpha\;:=\;\frac{1}{4\pi}\,\Gamma(1-\kappa)\,( c_{\nu,\kappa}\, \beta+ d_{\nu,\kappa} )\,.
\end{equation}
Then, as obvious from \eqref{eq:bc}-\eqref{eq:c-d-coeff},
\begin{equation}\label{eq:DSbeta-alpha}
 \mathcal{D}(S_\beta)\;=\;\{g\in\mathcal{D}(S^*)\,|\,g_1=4\pi\alpha g_0\}\,.
\end{equation}
Moreover, an easy computation applying \eqref{eq:alphabeta} yields
 \begin{equation}\label{eq:beta-alpha-computation}
  \frac{1}{\beta} \frac{1}{\Vert \Phi_\kappa \Vert^2}\;=\;\frac{{\,\Gamma(1-\kappa)^2}}{4\pi}\frac{1}{ \alpha -\frac{\nu}{4\pi}\big(\psi(1-\kappa)+\ln(-\frac{\nu}{\kappa})+(2\gamma-1)+\frac{1}{2 \kappa}\big)}\,.
 \end{equation}

This brings directly to the proof of our main result for the radial problem.

\begin{proof}[Proof of Theorem \ref{thm:1Dclass}] Removing the shift from $S$ to $h_0^{(\nu)}$ does not alter the domain of the corresponding self-adjoint extensions or adjoints, and modifies trivially their action. Thus, part (i) follows from Proposition \ref{prop:classificationThm} and from formulas \eqref{eq:g0g1limits} and   \eqref{eq:DSbeta-alpha} for $\mathcal{D}(S_\beta)$, using the expression \eqref{eq:Sstar_maximal} for $\mathcal{D}(S^*)$, whereas part (ii) follows from Corollary \ref{cor:resolventSbeta} with $S_\beta^{-1}=\big( h_{0,\alpha}^{(\nu)}+\frac{\nu^2}{4 \kappa^2} \big)^{-1}$ and $S_F^{-1}=\big( h_{0,F}^{(\nu)}+\frac{\nu^2}{4 \kappa^2} \big)^{-1}$, together with the identity \eqref{eq:beta-alpha-computation}. So far we have worked with $0<|\kappa|<\frac{1}{2}$: thanks to the uniqueness of the analytic continuation, this determines unambiguously the resolvent at any point in the resolvent set. We can then extend all our previous formulas to the whole regime $(-\infty,0)\cup(0,1)$ for which the expression $\Gamma(1-\kappa)$ still makes sense.
\end{proof}

\subsection{Reconstruction of the 3D hydrogenoid extensions}\label{sec:3Dreconstr}~

Finally, let us re-phrase the previous conclusions in terms of self-adjoint realisations of the hydrogenoid-type operator
\begin{equation}
 \mathring{H}^{(\nu)}\;=\;-\Delta+\frac{\nu}{\,|x|\,}\,,\qquad\mathcal{D}(\mathring{H}^{(\nu)})\;=\;C^\infty_0(\mathbb{R}^3\!\setminus\!\{0\})
\end{equation}
(see \eqref{eq:Hring1} above) on $L^2(\mathbb{R}^3)$.
The self-adjoint extensions of the shifted operator $\mathring{H}^{(\nu)}+\eta{\mathbbm{1}}$, $\eta:=\frac{\nu^2}{4\kappa^2}$, in the sector of angular symmetry $\ell=0$ of $L^2(\mathbb{R}^3)$ are precisely those found in Proposition \ref{prop:classificationThm}.



\begin{proof}[Proof of Theorem \ref{thm:3Dclass}]~

\underline{Part (i)}. Formula \eqref{eq:ang_decomp_Halpha} is obvious from \eqref{eq:ang_decomp}-\eqref{eq:ang_decomp_operator} and from Theorem \ref{thm:1Dclass}.
%
%
 
 \underline{Part (ii)}. Obviously the unique self-adjoint extension of $\mathring{H}^{(\nu)}$, hence necessarily the Friedrichs extension, in the sectors with angular symmetry $\ell\geqslant 1$ is the projection onto such sectors of the operator \eqref{HnuFriedr}, owing to part (i) of this theorem. In the sector $\ell=0$ the operator \eqref{HnuFriedr} acts as $(-\frac{\ud^2}{\ud r^2}+\frac{\nu}{r})\otimes\mathbbm{1}$ and it remains to recognise that its radial domain consists of those $f$'s in $H^2(\mathbb{R}^+)$ that vanish as $f(r)=O(r)$ as $r\downarrow 0$, because this is precisely $\mathcal{D}(S_F)$. This is standard: spherically symmetric elements of $H^2(\mathbb{R}^3)$ are functions $F(|x|)$ for $F\in L^2(\mathbb{R}^+,r^2\,\ud r)$ such that $\Delta_x F\in L^2(\mathbb{R}^3,\ud x)$ and hence $\frac{1}{r^2}\frac{\ud}{\ud r}(r^2\frac{\ud}{\ud r} F)\in L^2(\mathbb{R}^+,r^2\ud r)$; on the other hand $F=\frac{f}{r}$ for $f\in L^2(\mathbb{R}^+,\ud r)$, whence $\frac{1}{r^2}\frac{\ud}{\ud r}(r^2\frac{\ud}{\ud r} F)=\frac{\,f''}{r}$, and the square-integrability of $\Delta_x F$ reads $f''\in L^2(\mathbb{R}^+,\ud r)$; therefore, $f\in H^2(\mathbb{R}^+)$ and $f(r)=rF(r)=O(r)$ as $r\downarrow 0$. Last, the feature mentioned in the statement which identifies uniquely the Friedrichs extension follow from Proposition \ref{prop:classificationThm} and Corollary \ref{cor:Frie}, thanks to the equivalence $\alpha=\infty$ $\Leftrightarrow$ $\beta=\infty$.
 
 \underline{Part (iii)}. Owing to parts (i) and (ii) we only have to establish \eqref{eq:HnuResolvent} over the sector $\ell=0$. In this sector, \emph{radially},
 \[
  (h^{(\nu)}_{0,\alpha}+{\textstyle\frac{\nu^2}{\,4\kappa^2}}\,\mathbbm{1})^{-1}\;=\;(h^{(\nu)}_{0,\infty}+{\textstyle\frac{\nu^2}{\,4\kappa^2}}\,\mathbbm{1})^{-1}+\frac{1}{\beta} \frac{1}{\Vert \Phi_\kappa \Vert^2} |\Phi_\kappa \rangle \langle \Phi_\kappa|
 \]
 owing to Corollary \ref{cor:resolventSbeta}. 
 Formula \eqref{eq:ourgnukappa} reads 
 \[
  \mathfrak{g}_{\nu,k}(x)\;=\;
  \frac{\,\Gamma(1-\kappa)}{4\pi}\,\frac{\Phi_\kappa(|x|)}{|x|}\;=\;\frac{\,\Gamma(1-\kappa)}{\sqrt{4\pi}}\,\frac{\Phi_\kappa(|x|)}{|x|}\otimes Y_0^0
 \]
 and therefore the projection $|\mathfrak{g}_{\nu,k}\rangle\langle\mathfrak{g}_{\nu,k}|$ acting on $L^2(\mathbb{R}^3)$ acts radially in the $\ell=0$ sector as the projection $\frac{\,\Gamma(1-\kappa)^2}{4\pi}|\Phi_\kappa\rangle\langle\Phi_\kappa|$.
 This proves that 
 \[
  \Big(H^{(\nu)}_\alpha+\frac{\nu^2}{\,4\kappa^2}\,\mathbbm{1}\Big)^{\!-1}\;=\;\Big(H^{(\nu)}+\frac{\nu^2}{\,4\kappa^2}\,\mathbbm{1}\Big)^{\!-1} +\frac{1}{\beta} \frac{1}{\Vert \Phi_\kappa \Vert^2}\,\frac{4\pi}{\,\Gamma(1-\kappa)^2}\,|\mathfrak{g}_{\nu,k}\rangle\langle\mathfrak{g}_{\nu,k}|\,.
 \]
 Combining the formula above with \eqref{eq:beta-alpha-computation} finally yields the resolvent formula \eqref{eq:HnuResolvent}.
 
  \underline{Part (iv)}. This is a standard consequence of part (ii) -- see, e.g., the argument in the proof of \cite[Theorems I.1.1.3 and I.2.1.2]{albeverio-solvable}.  
\end{proof}

%

\section{Perturbations of the discrete spectra}

In this Section we prove Theorem \ref{thm:EV_corrections} and we add a few additional observations.

We deliberately choose another path as compared to the standard approach \cite{Zorbas-1980,AGHKS-1983_Coul_plus_delta,Bulla-Gesztesy-1985} that determines the eigenvalues as poles of the resolvent \eqref{eq:HnuResolvent} (see Remark \ref{rem:swavepoles} below), and we exploit instead the radial analysis of extensions that we have developed in Sec.~\ref{sec:Section_of_Classification}. This completes our approach based on the  Kre{\u\i}n-Vi\v{s}ik-Birman extension theory.

\subsection{The $s$-wave eigenvalue problem}\label{sec:swaveEVproblem}~

For fixed $\alpha\in\mathbb{R}$ and $\nu\in\mathbb{R}$ let $\Psi\in\mathcal{D}(H^{(\nu)}_\alpha)$ and $E<0$ satisfy $H^{(\nu)}_\alpha\Psi=E\Psi$ with $\Psi$ belonging to the $L^2$-sector with angular symmetry $\ell=0$. 

In view of \eqref{eq:ang_decomp} we write
\begin{equation}
 \Psi(x)\;=\;\frac{\,g(|x|)\,}{\,\sqrt{4\pi}\,|x|\,}
\end{equation}
for some $g\in\mathcal{D}(S_\beta)\subset L^2(\mathbb{R}^+)$ such that $S_\beta g=(E+\frac{\nu^2}{4\kappa^2})g$, where $\beta$ is given by \eqref{eq:alphabeta} for the chosen $\alpha$ and $\nu$, and a chosen $\kappa\in(0,1)$ (see \eqref{eq:conditions_kappa} above). Thus,
\begin{equation}\label{eq:radEVproblem1}
\Big(\!-\frac{\ud^2}{\ud r^2} +\frac{\nu}{r} \Big) \,g\;=\; E g\,.
\end{equation}

Passing to re-scaled energy $\mathfrak{e}$, radial variable $\rho$, coupling $\vartheta$, and unknown $h$ defined by
\begin{equation}
 \begin{split}
  \mathfrak{e}\;&:=\;\frac{\,4\kappa^2 E}{\nu^2}+1\,,\qquad\qquad\qquad\qquad\!\! \rho\;:=\;-r\,\frac{\,\nu\sqrt{1-\mathfrak{e}}}{\kappa}\;=\;2 r \sqrt{|E|\,}  \\
  \vartheta\;&:=\;\frac{\kappa}{\,\sqrt{1-\mathfrak{e}}\,}\;=\;\frac{-\nu}{\,2\sqrt{|E|}}\,,\qquad u(\rho)\;:=\;g(r)\,,
 \end{split}
\end{equation}
the eigenvalue problem \eqref{eq:radEVproblem1} takes the form
\begin{equation}\label{eq:radEVproblem2}
 \Big(\!-\frac{\ud^2}{\ud \rho^2} -\frac{\vartheta}{\rho}+\frac{1}{4} \Big) \,u\;=\;0\,,
\end{equation}
namely a Whittaker equation of the same type \eqref{eq:hypergeom} above, whose only square-integrable solutions on $\mathbb{R}^+$, analogously to what argued in Section \ref{sec:homo}, are the multiples of Whittaker's function
\begin{equation}
u(\rho)\;=\;\mathscr{W}_{\vartheta,\frac{1}{2}}(\rho) \;=\;e^{-\frac{1}{2}\rho}\, \rho\, U_{1-\vartheta,2}(\rho)\,.
\end{equation}
Therefore, up to multiples, the solution to \eqref{eq:radEVproblem1} is
\begin{equation}
g(r)\;=\; 
\mathscr{W}_{\vartheta,\frac{1}{2}}(2r\sqrt{|E|})\,.
\end{equation}

By means of the expansion \eqref{eq:AsymM0} and of the identity $\nu=-2\sqrt{|E|}\,\vartheta$ one finds
\begin{equation}
 \begin{split}
  g_0\;&=\;\frac{1}{\,\Gamma(1+\frac{\nu}{2\sqrt{|E|}})} \\
	g_1\; &=\; \nu \, \frac{\, \psi\big(1+\frac{\nu}{2\sqrt{|E|}}\big)+ \ln(2 \sqrt{|E|}) +2\gamma - 1 - {\textstyle\frac{\sqrt{|E|}}{\nu}}}{\, \Gamma\big(1+\frac{\nu}{2\sqrt{|E|}}\big)}\,,
 \end{split}
\end{equation}
and such two constants must satisfy the condition $g_1=4\pi\alpha g_0$, as prescribed by Theorem \ref{thm:1Dclass} , because the considered eigenfunction $\Psi$ belongs to $\mathcal{D}(H_\alpha^{(\nu)})$.
We have thus proved that $E$ is an eigenvalue for $H_\alpha^{(\nu)}$ if and only if
\begin{equation}\label{eq:FEalpha}
 \mathfrak{F}_\nu(E)\;=\;\alpha
\end{equation}
with $\mathfrak{F}_\nu$ defined in \eqref{eq:Feigenvalues}.

When $\nu<0$ the function $(-\infty,0)\ni E\mapsto \mathfrak{F}_\nu(E)$ has vertical asymptotes corresponding to non-positive arguments $1+\frac{\nu}{2\sqrt{|E|}}=-q$, $q\in\mathbb{N}_0$, of the digamma function $\psi$, i.e., at the points $E=E_n^{(\nu)}$ defined by
\begin{equation}\label{eq:EnHydr}
 E_n^{(\nu)}\;:=\;-\frac{\nu^2}{4 n^2}\,,\qquad n\,:=\,q+1\,\in\,\mathbb{N}\,.
\end{equation}
The sequence $(E_n)_{n\in\mathbb{N}}$ is increasing and converges to zero. Within each interval $(E_n,E_{n+1})$ the function $E\mapsto \mathfrak{F}_\nu(E)$ is smooth and strictly monotone increasing, and moreover
\[
 \lim_{E\to -\infty}\,\mathfrak{F}_\nu(E)\;=\;-\infty\,.
\]
Thus, for any $\alpha\in\mathbb{R}$ does the equation \eqref{eq:FEalpha} admit countably many negative simple roots, which form the increasing sequence $(E_n^{(\nu,\alpha)})_{n\in\mathbb{N}}$ and accumulate at zero. Therefore, the $s$-wave point spectrum of $H^{(\nu)}_\alpha$ consists precisely of the $E_n^{(\nu,\alpha)}$'s. In the extremal case $\alpha=\infty$ one has $E_n^{(\nu,\alpha=\infty)}=E_n^{(\nu)}$: indeed, the $s$-wave point spectrum of the Friedrichs extension $H^{(\nu)}$ is the ordinary non-relativistic hydrogenoid $s$-wave spectrum, as given by \eqref{eq:EnHydr}.

When $\nu>0$ the function $(-\infty,0)\ni E\mapsto \mathfrak{F}_\nu(E)$ is smooth and strictly monotone increasing, with
\[
 \lim_{E\to-\infty}\,\mathfrak{F}_\nu(E)=-\infty\,,\qquad  \lim_{E\uparrow 0}\mathfrak{F}_\nu(E)\;=\;\frac{\nu}{4\pi}\,(\ln\nu+2\gamma-1)\;=:\;\alpha_\nu\,,
\]
the latter limit following from \eqref{eq:Feigenvalues} owing to the asymptotics \cite[Eq.~(6.3.18)]{Abramowitz-Stegun-1964} that here reads
\[
 \psi\big(1+{\textstyle\frac{\nu}{2\sqrt{|E|}}}\big)\;\stackrel{E\to 0}{=} \;\ln{\textstyle\frac{\nu}{2\sqrt{|E|}}}+O(\sqrt{|E|})\,.
\]
Thus, the equation \eqref{eq:FEalpha} has no negative roots if $\alpha\geqslant\alpha_\nu$ and one negative root if $\alpha<\alpha_\nu$.

This completes the proof of Theorem \ref{thm:EV_corrections}.

\begin{figure}
\includegraphics[width=6cm]{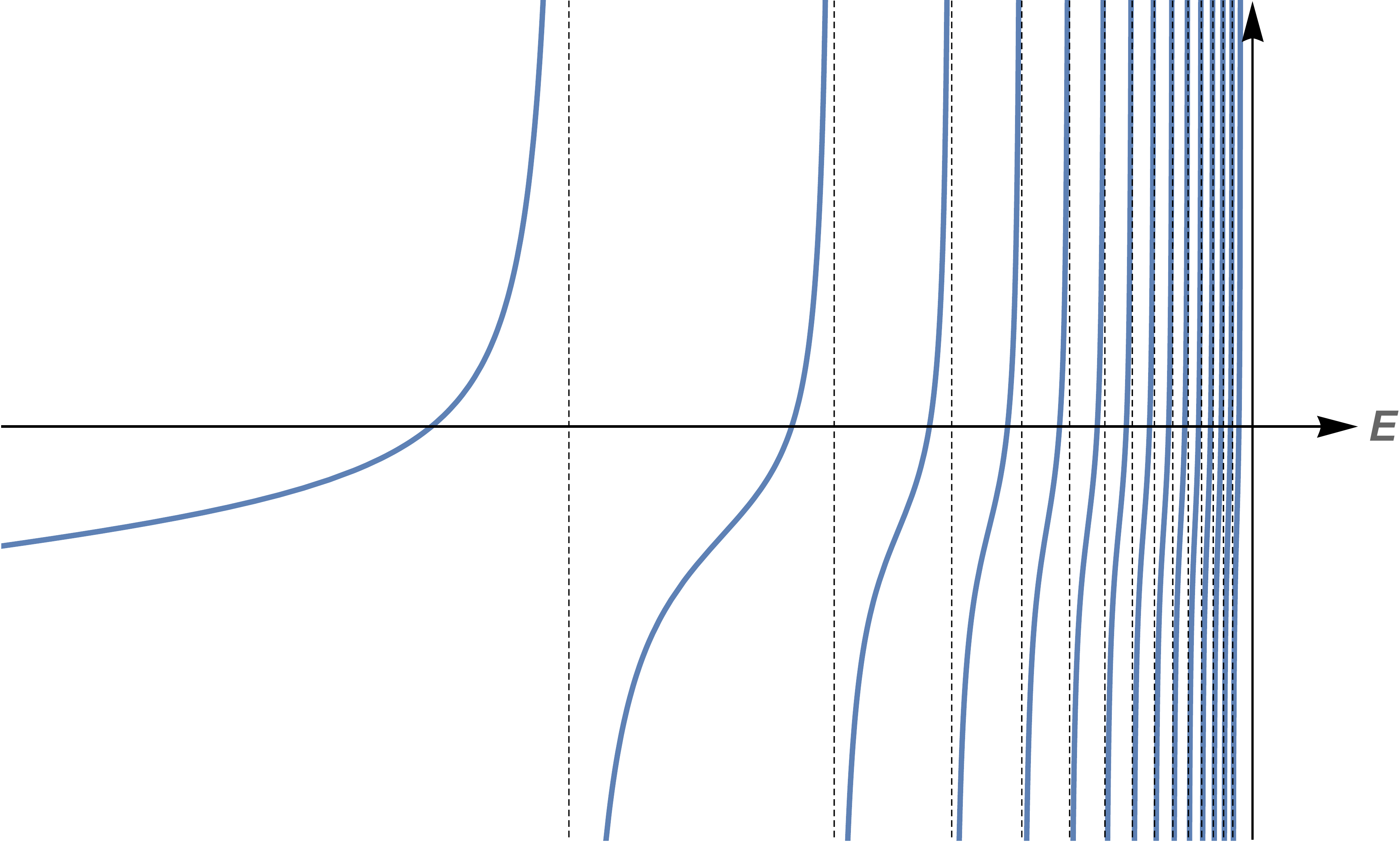}\quad
\includegraphics[width=6cm]{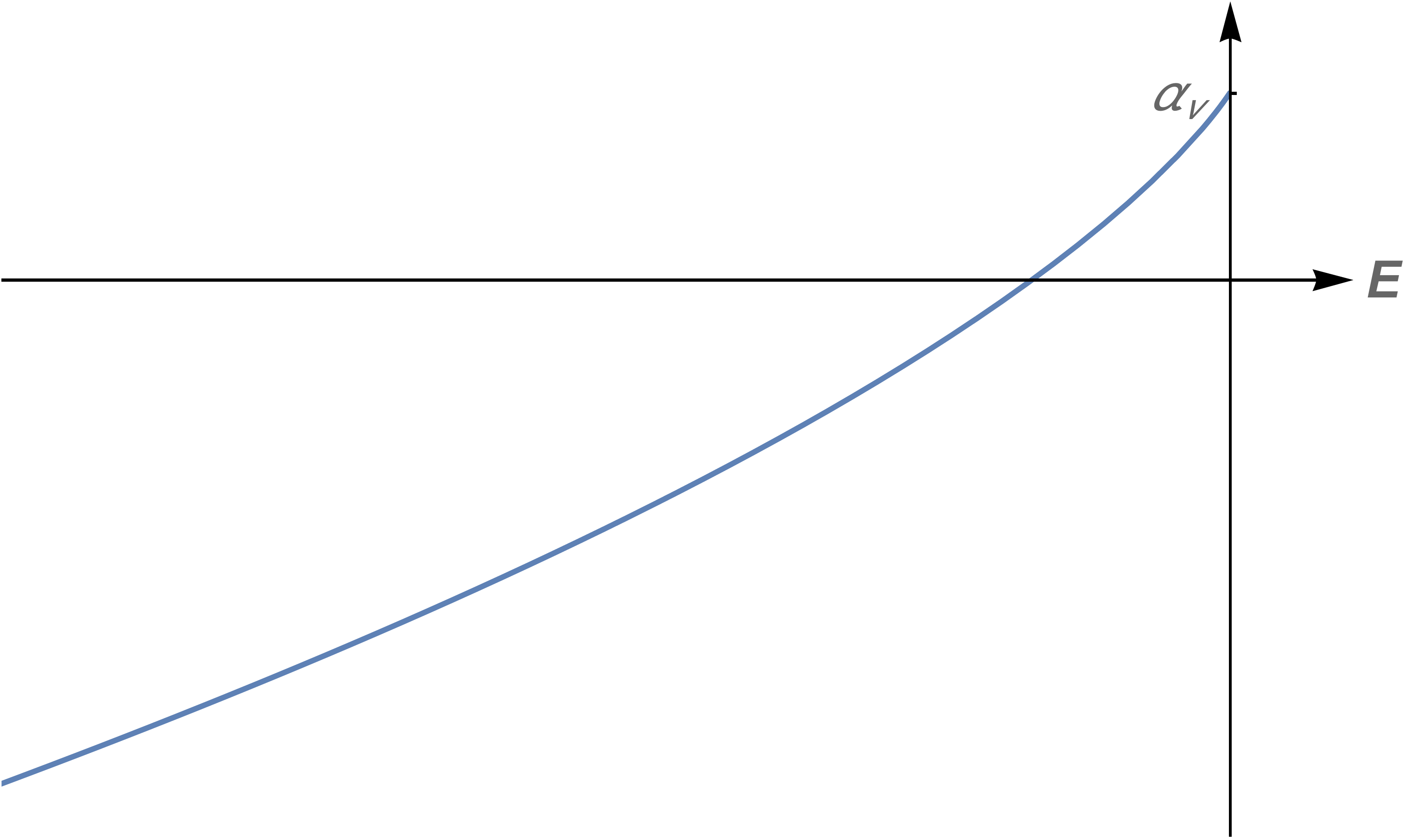}
\caption{Behaviour of the function $(-\infty,0)\ni E\mapsto \mathfrak{F}_\nu(E)$ for $\nu<0$ (left) and $\nu>0$ (right). Dashed lines in the figure represent vertical asymptotes.} \label{fig:Fnu}
\end{figure}

\subsection{Further remarks}\label{sec:EVfurtherremarks}

\begin{remark}\label{rem:EVdecreased}
 The result of Theorem \ref{thm:EV_corrections} when $\nu<0$ confirms that $H^{(\nu)}\geqslant H^{(\nu)}_\alpha$, namely that the Friedrichs extension is larger (in the sense of self-adjoint operator ordering) than any other extension. In particular, $E_{n+1}^{(\nu,\alpha)}\geqslant E_n^{(\nu)}\geqslant E_n^{(\nu,\alpha)}$. 
\end{remark}

\begin{remark}\label{rem:spectra_fibre}
 As is clear from the behaviour of the roots to  $\mathfrak{F}_\nu(E)=\alpha$ (Fig.~\ref{fig:Fnu})
 \begin{equation}
  \bigcup_{\alpha}\sigma_{\mathrm{p}}^{(0)}(H^{(\nu)}_\alpha)\;=\;(-\infty,0)\,.
 \end{equation}
 In this sense the spectra $\sigma_{\mathrm{p}}^{(0)}(H^{(\nu)}_\alpha)$ fibre, as $\alpha$ runs over $\mathbb{R}\cup\{\infty\}$, the whole negative real line.
\end{remark}

\begin{remark}
 When $\nu<0$ one has
 \[
  \lim_{\alpha\to-\infty}E_1^{(\nu,\alpha)}\;=\;-\infty\,,\qquad \lim_{\alpha\to-\infty}E_{n+1}^{(\nu,\alpha)}\;=\;E_{n}^{(\nu)}\,,\quad n=2,3,\dots
 \]
 Both limits are obvious from the behaviour of the function $\mathfrak{F}_\nu(E)$ (Fig.~\ref{fig:Fnu}); the former in particular is a consequence of general facts of the Kre{\u\i}n-Vi\v{s}ik-Birman theory, in the following sense. That there exists only one eigenvalue of $H^{(\nu)}_\alpha$ below the bottom $E_1^{(\nu)}$ of the Friedrichs extension $H^{(\nu)}$ is a consequence of $\mathring{H}^{(\nu)}$ having deficiency index one and of the general result \cite[Corollary 5.10]{GMO-KVB2017}. Moreover, such eigenvalue, which is precisely $E_1^{(\nu,\alpha)}$, must satisfy 
 \[
  E_1^{(\nu,\alpha)}\;\leqslant\;\beta\;=\;\frac{1}{\,c_{\nu,\kappa}}\Big(\frac{4\pi\alpha}{\Gamma(1-\kappa)}-d_{\nu,\kappa}\Big)
 \]
 for any fixed $\kappa\in(0,1)$, as a consequence of \eqref{eq:alphabeta} and of the general property \cite[Theorem 5.9]{GMO-KVB2017}. Thus, the limit $\alpha\to-\infty$ in the above inequality reproduces the limit for $E_1^{(\nu,\alpha)}$.
\end{remark}

 \begin{remark}
  When $\nu>0$ Theorem \ref{thm:EV_corrections} implies that $ H^{(\nu)}_\alpha\geqslant\mathbbm{O}$ if and only if $\alpha\geqslant\alpha_\nu$. This fact too can be understood in terms of a general property of the Kre{\u\i}n-Vi\v{s}ik-Birman theory \cite[Theorem 3.5]{GMO-KVB2017}, which in the present setting reads
  \[
   H^{(\nu)}_\alpha+\frac{\nu^2}{4\kappa^2}\mathbbm{1}\;\geqslant\;\mathbbm{O}\qquad\Leftrightarrow\qquad\beta\,\geqslant\, 0
  \]
  for any $\kappa<0$. The limit $\kappa\to -\infty$ and \eqref{eq:alphabeta} then yields
  \[
   \begin{split}
    H^{(\nu)}_\alpha\;\geqslant\;\mathbbm{O}\qquad\Leftrightarrow\qquad \alpha\,& \geqslant\,\lim_{\kappa\to-\infty}\,\frac{\Gamma(1-\kappa)}{4\pi}\, d_{\nu,\kappa}\\
    &=\,\lim_{\kappa\to-\infty}\,\frac{\nu}{4\pi}\,\big(\psi(1-\kappa)+\ln({\textstyle-\frac{\nu}{\kappa}})+2\gamma-1+{\textstyle\frac{1}{2\kappa}}\big) \\
    &=\: \frac{\nu}{4\pi}\,(\ln\nu+2\gamma-1)\;=\;\alpha_\nu\,,
   \end{split}
  \]
 the limit above following again from the asymptotics \cite[Eq.~(6.3.18)]{Abramowitz-Stegun-1964}.  
 \end{remark}

\begin{remark}\label{rem:swavepoles}
 Having identified the eigenvalues of $H^{(\nu)}_\alpha$ as the roots of $\mathfrak{F}_\nu(E)=\alpha$ is clearly consistent with the fact that such eigenvalues are all the poles  of the resolvent $(H^{(\nu)}_\alpha-z\mathbbm{1})^{-1}$, $z=-\frac{\nu^2}{4\kappa^2}$, determined in \eqref{eq:HnuResolvent}, i.e., the values $E=-\frac{\nu^2}{4\kappa^2}$ with $\kappa$ determined by $\mathfrak{F}_{\nu,\kappa}=\alpha$, which is precisely another way of writing  $\mathfrak{F}_\nu(E)=\alpha$.
 \end{remark}

\bibliographystyle{siam}
\def\cprime{$'$}

\end{document}